\documentclass{ectj}

\usepackage{amsfonts,amssymb,amsmath,bm,latexsym,url,mathtools,booktabs,multirow,xcolor,comment}
\usepackage{caption}
\usepackage[capitalise,noabbrev]{cleveref}

\newtheorem{theorem}{Theorem}[section]
\newtheorem{assumption}{Assumption}[section]

\newtheorem{corollary}{Corollary}[section]
\newtheorem{lemma}{Lemma}[section]
\newtheorem{definition}{Definition}[section]

\newtheorem{algorithm}{Algorithm}[section]
\numberwithin{equation}{section}
\crefname{assumption}{Assumption}{Assumptions}\Crefname{assumption}{Assumption}{Assumptions}
\crefname{definition}{Definition}{Definitions}\Crefname{definition}{Definition}{Definitions}
\crefname{algorithm}{Algorithm}{Algorithms}\Crefname{algorithm}{Algorithm}{Algorithms}

\renewcommand{\qed}{\hfill{\tiny \ensuremath{\blacksquare} }}%
\renewcommand{\Pr}{{\mathrm{P}}}

\newcommand{\E}{\mathrm{E}}

\makeatletter
\IfFileExists{accents.sty}{\usepackage{accents}\def\ubar##1{\underaccent{\bar}{##1}}}{\def\ubar##1{\underline{##1}}}
\makeatother

\title[Debiased inference for wage inequality]{Debiased Inference for Bounding Wage Inequality with Many Controls\thanks{The first version of this paper's theoretical results was posted on arXiv in August 2018 as arXiv:1808.02569, Chapter 4.}}

\author[Y. Korobka and V. Semenova]{Yaroslav~Korobka$^{\dagger}$ and Vira~Semenova$^{\ddagger}$}

\address{$^{\dagger}$CERGE-EI, Prague, Czech Republic.}
\email{yaroslav.korobka@cerge-ei.cz}

\address{$^{\ddagger}$Department of Economics, University of California, Berkeley, USA.}
\email{vsemenova@berkeley.edu}

\begin{document}

\begin{abstract}
We study estimation and inference for a partially identified parameter whose identified set depends on a first-stage nuisance parameter that must itself be estimated. 
Combining the criterion-function approach with the theory of Neyman-orthogonal moments that underlies double/debiased machine learning, we propose a two-step procedure: the point-identified nuisance is estimated by flexible machine-learning methods, and the set-identified target is recovered as a level set of a sample criterion built from orthogonal moment inequalities with cross-fitting. 
When the contour level is bounded, we show that the resulting set estimator converges in Hausdorff distance at the parametric rate of the infeasible criterion built on the true nuisance. We further develop a subsampling procedure that delivers asymptotically valid coverage, provided the product of the first-stage estimation errors is $o(N^{-1/2})$. We illustrate the method on bounds for the wage distribution and the interquantile range under selection into employment and on the gender wage gap with an interval-censored wage. The empirical application studies the gender wage gap using the March supplement of the 2015 Current Population Survey.
\keyword{Bounds on wage distributions}
\keyword{bracketed (interval-valued) data}
\keyword{sample selection}
\keyword{double/debiased machine learning}
\keyword{subsampling}
\keyword{partial identification}
\keyword{moment inequalities.}
\end{abstract}

\section{Introduction}
\label{sec:intro}

Sample selection is a classic problem in econometrics \citep{Heckman74}. Economically important objects such as population wage distributions, gender and education wage differentials, and measures of wage inequality depend on the wages of individuals who do not work, and these wages are not observed. One approach imposes enough structure on the joint distribution of wages and employment to recover the latent wage distribution, for instance through a control function. For example, \citet{CFL} take this route with distribution regression to decompose British wages. An alternative approach leaves the selection process incompletely specified and asks what can be learned about the latent wage distribution under weaker restrictions, as in the partial identification analysis of \citet{Manski03}. In this paper, we focus on \citet{BGIM}, who take the second route to bound changes in the distribution of British wages. Their positive-selection restriction, under which the wage distribution of workers first-order stochastically dominates that of nonworkers conditional on observed characteristics, substantially tightens the worst-case bounds without point-identifying the missing wage distribution. The restriction is more plausible when the compared individuals are observationally similar, because it rules out selection only within narrow groups. Conditioning on many characteristics, however, creates a dimensionality problem.

This paper develops estimation and inference for partially identified models whose identifying inequalities depend on a high-dimensional first stage. The parameter of interest $\theta$ is set-identified by a finite collection of unconditional moment inequalities,
\begin{equation*}
    \mathrm{E}\,g(Z;\theta,\xi_0)\,\bm{\leq}\,0,
\end{equation*}
where the nuisance parameter $\xi_0$ is point-identified but may be infinite-dimensional. We estimate $\xi_0$ by machine-learning methods with cross-fitting, construct Neyman-orthogonal versions of the inequalities, and estimate the identified set as a level set of the criterion function of \citet{CHT}. We illustrate it with the wage-distribution bounds of \citet{BGIM} and a regression with an interval-valued outcome. The object of inference is the identified set with the confidence region covering $\Theta_I$ with probability approaching $1-\tau$, where $1-\tau$ is the nominal confidence level. 

The main technical contribution is subsampling inference that remains valid when the criterion is built from a cross-fitted first stage. Cross-fitting breaks the independence across subsamples that subsampling requires, because every block statistic uses a nuisance estimated on the whole sample and is maximized over a preliminary set built from the whole sample. We restore it using a carefully crafted balanced, disjoint-block construction together with a bracketing argument. Every block contains the same number of observations from each cross-fitting fold, so the observations it draws from fold $k$ are independent of the nuisance estimated without that fold. Each feasible block statistic is then bounded by statistics evaluated at the true nuisance on a small enlargement or contraction of the identified set. These bounding statistics depend only on the observations of their own block and can therefore be treated with ordinary block-level asymptotics. This yields consistency of the subsampling critical value and pointwise asymptotic coverage,
\begin{equation*}
    \Pr \left(\Theta_I \subseteq C_N (\hat{c}_\tau; \hat{\xi})\right) = 1 - \tau + o(1),
\end{equation*}
under $b \to \infty$ and $b c_N / N \to 0$, with no strengthening of the first-stage rate condition used for set estimation. Combining cross-fitting with dependent data raises a related difficulty, resolved by a cross-fitting scheme tailored to the dependence, as in the weakly dependent panels of \citet{CGST}, or by an additional stability condition, as in the network and spatial designs of \citet{CaoLeung}. 

A second contribution is a general theory that does not require the moment inequalities to be affine in $\theta$. Bounds on the interquantile range of the wage distribution under selection provide such an example. 

We demonstrate our method on the partially identified model studied in \citet{CCMS}, with wage data from the March supplement of the 2015 Current Population Survey. In particular, we document that, unlike an approach that reports only separate bounds on individual coefficients, the criterion-function formulation directly delivers a joint confidence region for the two-dimensional coefficient set. Across the designs we consider, the estimated joint identified set occupies only about $50\%$ of the box formed by its coordinate projections. We interpret this as evidence that separate coefficient bounds contain many combinations that are inconsistent with the joint directional restrictions.

This paper contributes to the literature on partial identification by incorporating a function-valued first stage into the two-stage framework of \citet{KaidoWhite}. In doing so, we benefit from the Neyman-orthogonality and cross-fitting machinery of \citet{chernozhukov2016double}, \citet{LRSP} and \citet{chernozhukov2021automatic}, so that a high-dimensional first stage can be estimated by regularized methods without affecting first-order inference. In partial identification, earlier applications of these techniques are mostly confined to bounds on one-dimensional parameters \citep{Heiler2024,Semenova2025} and to support functions of convex identified sets \citep{Semenova2023,LiuMolinari}, building on \citet{BeresteanuMolinari2008}, \citet{BMM2} and \citet{BontempsMagnacMaurin2012}. \citet{LiuMolinari}, for instance, propose debiased estimation and inference for an algorithmic fairness-accuracy frontier, and \citet{LiuSPT2025} develops a partial identification framework for panel-data policy evaluation that is robust to violations of the identifying assumptions of difference-in-differences and synthetic control. In contrast, our moment inequalities may be nonlinear in $\theta$ and describe potentially non-convex identified sets. Last but not least, the paper's emphasis on selection problems contributes to the sample selection literature, see \citet{FVVW2019,FVV2018,FVHong2025}.

Since the first version of this paper\footnote{\Cref{app:theory} and its proofs, except the extension to moments that are discontinuous in $\theta$, first appeared in Chapter~4 of arXiv:1808.02569, formerly titled ``Machine Learning for Dynamic Discrete Choice.''} (arXiv:1808.02569, August 2018), \citet{Tian2025} has studied a closely related two-stage problem. He combines the refined conditional chi-squared test of \citet{CoxShi2023} with a correction that propagates first-stage estimation uncertainty into the covariance matrix and inverts the test point by point, so that his confidence set covers each $\theta$ in the identified set with asymptotic probability at least the nominal level. The correction relies on an asymptotically linear first stage, such as GMM or maximum likelihood, and on a high-level condition that the plug-in moment vector be $\sqrt{N}$-asymptotically normal, which a machine-learning first stage that converges more slowly does not deliver without orthogonalization. We instead remove the first-order contribution of the estimated nuisance by orthogonalization, allow a first stage that converges more slowly than $N^{-1/2}$, and cover $\Theta_I$ as a whole. The two procedures thus treat the first stage differently: \citet{Tian2025} carries its first-order sampling variation into inference, while our construction makes that variation second order under suitable rate conditions. 

\paragraph{Structure of the paper.} \Cref{sec:setup} introduces the two-stage framework, the orthogonal criterion, and the estimation and inference procedure. \Cref{sec:example} presents the two examples: the wage distribution under selection of \citet{BGIM}, and the gender wage gap with an interval-valued outcome. \Cref{sec:examples-theory} gives primitive conditions and asymptotic guarantees for both examples, and \Cref{sec:empirical} reports the empirical application. Replication code is available at \texttt{https://github.com/korobkay/orthoset}. The appendix collects the general asymptotic theory, the proofs, and the verification of the high-level conditions for the two examples.

\section{Setup}
\label{sec:setup}
\subsection{Framework}
\label{sec:framework}

Let $Z_1,\ldots,Z_N$ be an i.i.d.\ sample from a distribution $P_0$ on a measurable space $\mathcal{Z}$. The \emph{target} parameter $\theta_0$ lies in a compact set $\Theta\subset\mathbb{R}^{d_\theta}$ and is the object of economic interest. The \emph{nuisance} parameter $\xi$ lies in a convex subset $\Xi$ of a normed space of functions of $Z$, with the $L^{2}(P_0)$ norm $\|\cdot\|_{P,2}$ unless an example specifies another norm. The nuisance has a true value $\xi_0\in\Xi$ that can be estimated without reference to $\theta_0$. Following \citet{chernozhukov2016double}, we assume that all objects are measurable and focus on estimation and inference. The pair $(\theta_0,\xi_0)$ has the two-stage structure of \citet{KaidoWhite}, with a point-identified first stage. Unlike  their setting, the first stage here is an (infinite-dimensional) nuisance function  rather than a Euclidean sub-vector.

A moment function $g:\mathcal{Z}\times\Theta\times\Xi\to\mathbb{R}^{L}$ defines the identified set through $L\geq 1$ inequalities evaluated at the true nuisance,
\begin{equation}\label{eq:idset}
    \Theta_I := \big\{\theta\in\Theta:\ \mathrm{E}\,g(Z;\theta,\xi_0)\,\bm{\leq}\,0\big\},
\end{equation}
where $v\,\bm{\leq}\,0$ for a vector $v\in\mathbb{R}^{L}$ means that every coordinate of $v$ is non-positive. Write $\|v\|_{+}:=\|\max\{v,0\}\|$ for the Euclidean norm of the positive part of $v$, the maximum being taken coordinatewise, so that $\|v\|_{+}=0$ if and only if $v\,\bm{\leq}\,0$. Define the \emph{population criterion}
\begin{equation}\label{eq:Qpop}
    Q(\theta,\xi) := \big\|\mathrm{E}\,g(Z;\theta,\xi)\big\|_{+}^{2}
\end{equation}
This criterion is non-negative for every $\theta\in\Theta$ and $\xi\in\Xi$. At the true value of the nuisance, it vanishes exactly on the identified set: $$\Theta_I=\{\theta\in\Theta: Q(\theta,\xi_0)=0\}.$$

We construct the sample criterion function by $K$-fold cross-fitting, with the number of folds $K\geq 2$ fixed. The indices $\{1,\ldots,N\}$ are partitioned into folds $J_1,\ldots,J_K$ of equal size $n=N/K$, and for $i\in J_k$ the nuisance $\widehat{\xi}_i:=\widehat{\xi}^{(k)}$ is estimated from the observations outside $J_k$ only. We write $\widehat{\xi}$ for the collection $(\widehat{\xi}^{(1)},\ldots,\widehat{\xi}^{(K)})$. The \emph{sample criterion} is
\begin{equation}\label{eq:Qn}
    Q_N(\theta,\widehat{\xi}) := \bigg\|\frac{1}{N}\sum_{i=1}^{N} g(Z_i;\theta,\widehat{\xi}_i)\bigg\|_{+}^{2},
\end{equation}
and for a deterministic $\xi\in\Xi$ we write $Q_N(\theta,\xi)$ for the same expression with $\xi$ in place of every $\widehat{\xi}_i$. At $\xi=\xi_0$ it is the criterion of \citet{CHT} for these moment inequalities. For a level $c\geq 0$ and a nuisance $\xi$, the contour set of the sample criterion is
\begin{equation}\label{eq:contour}
    C_N(c;\xi) := \big\{\theta\in\Theta:\ N\,Q_N(\theta,\xi)\leq c\big\}.
\end{equation}
\begin{definition}[Two-stage set estimator]
\label{def:setestim}
For a measurable level $\widehat{c}\geq 0$, the two-stage set estimator is
\begin{equation}\label{eq:setestim}
    \widehat{\Theta}_I := C_N(\widehat{c};\widehat{\xi}).
\end{equation}
\end{definition}
The level $\widehat{c}$ is chosen in one of two ways. A deterministic level $c_N\to\infty$ with $c_N/N\to 0$, for instance $c_N=\log N$, makes $\widehat{\Theta}_I$ a consistent set estimator. A data-driven level $\widehat{c}_\tau$, computed by subsampling, makes it a confidence region for $\Theta_I$ with asymptotic coverage $1-\tau$.

\begin{definition}[Subsampling critical value]
\label{def:subsampling}
Fix a deterministic level $c_N\to\infty$ with $c_N/N\to 0$ and a block size $b=b_N\to\infty$, a multiple of $K$, with $b\,c_N/N\to 0$. Partition the sample into $B_N:=\lfloor N/b\rfloor$ disjoint blocks $\mathcal{I}_1,\ldots,\mathcal{I}_{B_N}$ of size $b$, each containing $b/K$ observations from every fold, so that the observations of a block in fold $k$ are independent of $\widehat{\xi}^{(k)}$. For each block $j=1,\dots,B_N$ compute
\begin{equation*}
\widehat{\mathcal{T}}_{j,b} \;:=\; \sup_{\theta\in C_N(c_N;\widehat{\xi})} b\,Q_{j,b}(\theta,\widehat{\xi}),
\qquad
Q_{j,b}(\theta,\widehat{\xi}) := \bigg\|\frac{1}{b}\sum_{i\in\mathcal{I}_j} g(Z_i;\theta,\widehat{\xi}_i)\bigg\|_{+}^{2},
\end{equation*}
so that $\widehat{\mathcal{T}}_{j,b}$ is the criterion of block $\mathcal{I}_j$, scaled by $b$ and maximized over the preliminary set $C_N(c_N;\widehat{\xi})$. Let $\widehat{G}$ be the empirical distribution function of $(\widehat{\mathcal{T}}_{j,b})_{j=1}^{B_N}$. For $\tau\in(0,1)$, the subsampling critical value is the $(1-\tau)$-quantile
\begin{equation*}
\widehat{c}_{\tau} \;:=\; \inf\{x\in\mathbb{R}:\widehat{G}(x)\geq 1-\tau\}.
\end{equation*}
\end{definition}

Rates for the set estimator are stated in the Hausdorff metric,
\begin{equation}\label{eq:hausdorff}
    d_H(A,B) := \max\Big\{\sup_{x\in A}d(x,B),\ \sup_{y\in B}d(y,A)\Big\}, \qquad d(x,B):=\inf_{y\in B}\|x-y\|,
\end{equation}
and $\Theta_I^{\epsilon}:=\{\theta\in\Theta: d(\theta,\Theta_I)\leq\epsilon\}$ denotes the $\epsilon$-expansion of the identified set.

The first stage enters \eqref{eq:Qn} through $\widehat{\xi}$, whose estimation error is of larger order than $N^{-1/2}$ when $\xi_0$ is estimated by regularized methods. We require the moment function to be insensitive to that error to first order. For $\xi\in\Xi$ and $r\in[0,1]$ write $\xi_r:=\xi_0+r(\xi-\xi_0)$, which lies in $\Xi$.
\begin{definition}[Neyman orthogonality]
\label{def:ortho}
The moment function $g$ is \emph{Neyman-orthogonal} at $\xi_0$ on a set $\mathcal{N}\subset\Xi$ if, for every $\theta\in\Theta$ and $\xi\in\mathcal{N}$, the map $r\mapsto\mathrm{E}\,g(Z;\theta,\xi_r)$ is differentiable at $r=0$ with
\begin{equation}\label{eq:ortho}
    \partial_r\,\mathrm{E}\,g(Z;\theta,\xi_r)\big|_{r=0} = 0 .
\end{equation}
\end{definition}
Orthogonality says that, at the truth, the expected moment is flat in the direction of any first-stage error, so an error in $\widehat{\xi}$ moves $\mathrm{E}\,g$ only to second order.

\subsection{Proposed procedure}
\label{sec:overview}

The following algorithm combines the cross-fitted first stage, the criterion \eqref{eq:Qn} and the contour sets \eqref{eq:contour} of \Cref{sec:framework} with a subsampling choice of the level. The orthogonal moment function $g$ is an input, supplied case by case. Step~1 is the sample splitting of \citet{schick1986asymptotically}, in the cross-fitted form of \citet{chernozhukov2016double}, and Steps~2--4 are the contour-set estimator and the subsampling procedure of \citet{CHT} applied to the cross-fitted criterion. Step~3  incorporates the proposed delicate combination of subsampling with cross-fitting.

\begin{algorithm}[Two-stage set estimator and confidence region]
\label{alg:main}
Inputs: an orthogonal moment function $g$ with nuisance $\xi$, the number of folds $K\geq 2$, a preliminary level $c_N\to\infty$ with $c_N/N\to 0$, a block size $b\to\infty$, a multiple of $K$, with $b\,c_N/N\to 0$, and a nominal confidence level $1 - \tau$.
\begin{enumerate}
\item \emph{Cross-fitted first stage.} Partition $\{1,\ldots,N\}$ into folds $J_1,\ldots,J_K$ of equal size $n=N/K$. For each $k$, compute an estimate $\widehat{\xi}^{(k)}$ of $\xi_0$ from the observations outside $J_k$, and for $i\in J_k$ set $\widehat{\xi}_i:=\widehat{\xi}^{(k)}$.

\item \emph{Criterion and contour sets.} Form the cross-fitted criterion \eqref{eq:Qn} and the contour sets \eqref{eq:contour},
\begin{equation*}
    Q_N(\theta,\widehat{\xi}) = \bigg\|\frac{1}{N}\sum_{i=1}^{N} g(Z_i;\theta,\widehat{\xi}_i)\bigg\|_{+}^{2},
    \qquad
    C_N(c;\widehat{\xi}) = \big\{\theta\in\Theta:\ N\,Q_N(\theta,\widehat{\xi})\leq c\big\}.
\end{equation*}

\item \emph{Preliminary set and block statistics.} Form the preliminary set $C_N(c_N;\widehat{\xi})$ at the deterministic level $c_N$, for instance $c_N=\log N$. Partition the sample into $B_N=\lfloor N/b\rfloor$ disjoint blocks $\mathcal{I}_1,\ldots,\mathcal{I}_{B_N}$ of size $b$, each carrying $b/K$ observations from every fold. On block $j$ compute
\begin{equation*}
    Q_{j,b}(\theta,\widehat{\xi}) = \bigg\|\frac{1}{b}\sum_{i\in\mathcal{I}_j} g(Z_i;\theta,\widehat{\xi}_i)\bigg\|_{+}^{2},
    \qquad
    \widehat{\mathcal{T}}_{j,b} = \sup_{\theta\in C_N(c_N;\widehat{\xi})} b\,Q_{j,b}(\theta,\widehat{\xi}),
\end{equation*}

\item \emph{Level and region.} Let $\widehat{G}$ be the empirical distribution function of $(\widehat{\mathcal{T}}_{j,b})_{j\leq B_N}$ and let
\begin{equation*}
    \widehat{c}_{\tau} := \inf\big\{x:\ \widehat{G}(x)\geq 1-\tau\big\}
\end{equation*}
be its $(1-\tau)$-quantile, as in \Cref{def:subsampling}. Report the set estimator $\widehat{\Theta}_I=C_N(\widehat{c};\widehat{\xi})$ of \Cref{def:setestim} with $\widehat{c}=c_N$, or with $\widehat{c}=0$ when the sample criterion vanishes on a set with non-empty interior, together with the confidence region $C_N(\widehat{c}_{\tau};\widehat{\xi})$.
\end{enumerate}
\end{algorithm}

In applications, the supremum in Step~3 can be computed on a finite grid. The preliminary set is represented by the grid points at which $N\,Q_N(\theta,\widehat\xi)\leq c_N$, with spacing small relative to $\sqrt{c_N/N}$, and each block statistic is subsequently maximized over these points. When $g$ is affine in $\theta$, the block criterion and the preliminary set are convex, so only boundary grid points need to be evaluated.

\section{Examples}
\label{sec:example}

\subsection{Wage distribution under selection into employment}
\label{sec:example1}

Our first example is the wage-distribution problem of \citet{BGIM}. They study how the distribution of UK log hourly wages changed between 1978 and 2000 using the Family Expenditure Survey, and they report the interquartile range as a measure of within-group inequality together with educational, gender and cohort differentials in the median. Because wages are recorded only for active workers, these objects are only partially identified, and \citet{BGIM} bound them under restrictions on selection into work.

\subsubsection{Linear case: the distribution function on a grid}
\label{sec:example1-grid}

For each individual $i=1,\ldots,N$, let $W_i$ denote the log wage, $D_i\in\{0,1\}$ an employment indicator, and $X_i\in \mathbb{R}^{p}$ a vector of conditioning covariates. Selection into employment is non-ignorable, $W \not\perp D \mid X$, and we impose no exclusion restriction. Fix a grid of wage thresholds $w_1 < w_2 < \cdots < w_S$. The target parameter is the vector of population distribution-function values on the grid,
\begin{equation}\label{eq:gridtarget}
    \theta_0 := \big(F_0(w_1),\ldots,F_0(w_S)\big)', \quad F_0(w_k) := \Pr(W\leq w_k), \quad k = 1, \ldots, S.
\end{equation}

Write $F_0(w\mid x) := \Pr(W\leq w\mid X=x)$ for the conditional distribution function, $F_0 ^1(w\mid x)$ and $F_0 ^0(w\mid x)$ for its counterparts among the employed and the unemployed, and $p_0(x) := \Pr(D=1\mid X=x)$ for the \emph{employment propensity}. The law of iterated expectations gives
\begin{equation}\label{eq:lie}
    F_0(w\mid x) = F_0 ^1(w\mid x)\,p_0(x) + F_0 ^0(w\mid x)\,\big(1-p_0(x)\big).
\end{equation}
The propensity $p_0(x)$ and the employed-wage distribution $F_0 ^1 (w \mid x)$ are identified from the data (provided that $p_0(x) > 0$). On the contrary, the unemployed-wage distribution $F_0 ^0(w \mid x)$ is not identified, and it is restricted only by $F_0 ^0(w\mid x)\in[0,1]$. Substituting into \eqref{eq:lie} gives the worst-case bounds of \citet{Manski1994}, $$F_0 ^1(w\mid x)p_0(x)\leq F_0(w\mid x)\leq F_0 ^1(w\mid x)p_0(x)+1-p_0(x),$$ whose width $1-p_0(x)$ is the non-employment rate for a given value of $x$. \citet{BGIM} tighten the lower bound under positive selection into employment, that is, first-order stochastic dominance of the employed wage distribution, $F_0 ^1(w\mid x)\leq F_0 ^0(w\mid x)$. Under that restriction it holds that
\begin{equation}\label{eq:better_bounds}
    F_0 ^1(w \mid x) \leq F_0(w \mid x) \leq F_0 ^1 (w \mid x)\,p_0(x) + \big(1 - p_0(x)\big).
\end{equation}

To convert the problem to unconditional moment inequalities framework of \cref{sec:setup}, we integrate the bounds in \eqref{eq:better_bounds} over the distribution of $X$, resulting in the bounds for each coordinate of $\theta_0$,
\begin{equation}\label{eq:coordbounds}
    \ell_{0,k} := \mathrm{E}\big[F_0 ^1(w_k\mid X)\big], \quad
    u_{0,k} := \mathrm{E}\big[F_0 ^1(w_k\mid X)p_0(X) + 1 - p_0(X)\big],
\end{equation}
which gives
$$
\ell_{0,k} \leq \theta_{0,k} \leq u_{0,k}.
$$

A second restriction, that $F_0 ^0(\cdot\mid x)$ is a distribution function, ties the coordinates together. Applying \eqref{eq:lie} at $w_j<w_k$ and subtracting gives $F_0(w_k\mid x) - F_0(w_j\mid x) \geq p_0(x)(F_0 ^1 (w_k\mid x) - F_0 ^1(w_j\mid x))$, since $F_0^0(\cdot\mid x)$ is non-decreasing. Integrating over $X$ and using $$\mathrm{E}[p_0(X)(F_0 ^1(w_k\mid X)-F_0 ^1(w_j\mid X))] = \mathrm{E}[D\,\mathbf{1}\{w_j<W\leq w_k\}]$$ turns this into a restriction on the coordinates of $\theta_0$,
\begin{equation}\label{eq:increment}
    \theta_{0,k} - \theta_{0,j} \;\geq\; \mathrm{E}\big[D\,\mathbf{1}\{w_j<W\leq w_k\}\big] \;=\; u_{0,k} - u_{0,j}, \qquad j<k.
\end{equation}
Combining \eqref{eq:coordbounds} and \eqref{eq:increment} gives the identified set
\begin{equation}\label{eq:idsetgrid}
    \Theta_I = \Big\{\theta\in\Theta:\; \ell_{0, k}\leq\theta_k\leq u_{0, k} \;\;\forall k, \quad \theta_{k}-\theta_{k-1}\geq u_{0, k}-u_{0, k - 1}\;\;\forall k\geq 2\Big\},
\end{equation}
a convex polytope cut out by \(3S-1\) inequalities.

In the notation of \Cref{sec:setup}, the observation is $Z_i := (X_i, D_i, D_i W_i)$, $i = 1, \ldots, N$, and the moment function has $3S-1$ coordinates. For $k=1,\ldots,S$ the two coordinate moments are $g_{k}^{\mathrm{lo}}(Z;\theta,\xi) := \phi_L(Z;w_k,\xi) - \theta_k$ and $g_{k}^{\mathrm{up}}(Z;\theta) := \theta_k - \phi_U(Z;w_k)$, where
\begin{equation}\label{eq:scores}
\begin{aligned}
    \phi_L(Z;w,\xi) &:= F^1(w \mid X) + \frac{D\,\big(\mathbf{1}\{W \leq w\} - F^1(w \mid X)\big)}{p(X)}, \\
    \phi_U(Z;w) &:= D\,\mathbf{1}\{W \leq w\} + 1 - D,
\end{aligned}
\end{equation}
and for $k\geq 2$ the increment moments are
\begin{equation}\label{eq:incmoment}
    g_{k}^{\mathrm{inc}}(Z;\theta) := \mathbf{1}\{w_{k-1}<W\leq w_k\}D - (\theta_k-\theta_{k-1}).
\end{equation}
Collecting $\{g_k ^{\text{lo}}, g_k ^{\text{up}}\}_{k = 1} ^S$ and $\{g_k ^{\text{inc}}\}_{k = 2} ^{S}$ gives a moment function $g(Z;\theta,\xi)$ of length $L=3S-1$, and the identified set \eqref{eq:idset} for $\theta_0$ is \eqref{eq:idsetgrid}. The first-stage nuisance parameter is $\xi_0 := \big(p_0, F_0 ^1(w_1\mid \cdot),\ldots,F_0 ^1(w_S\mid \cdot)\big)$, estimated by cross-fitting.

Note that even in the linear case, it is necessary construct the orthogonal score $\phi_L$ as in \eqref{eq:scores}. The two plug-in estimators of the lower bound $\ell_{0,k}$, the regression form $N^{-1}\sum_{i}\widehat F^1(w_k\mid X_i)$ and the weighting form $N^{-1}\sum_{i}D_i\mathbf{1}\{W_i\leq w_k\}/\widehat p(X_i)$, inherit the first-stage error to first order. For a nuisance value $\xi=(p,F^1)$,
\begin{equation}\label{eq:pluginbias}
\begin{aligned}
    \mathrm{E}\big[F^1(w_k\mid X)\big]-\ell_{0,k} &= \mathrm{E}\big[F^1(w_k\mid X)-F_0^1(w_k\mid X)\big],\\
    \mathrm{E}\bigg[\frac{D\,\mathbf{1}\{W\leq w_k\}}{p(X)}\bigg]-\ell_{0,k} &= \mathrm{E}\bigg[F_0^1(w_k\mid X)\,\frac{p_0(X)-p(X)}{p(X)}\bigg],\\
    \mathrm{E}\,\phi_L(Z;w_k,\xi)-\ell_{0,k} &= \mathrm{E}\bigg[\big(F^1(w_k\mid X)-F_0^1(w_k\mid X)\big)\,\frac{p(X)-p_0(X)}{p(X)}\bigg].
\end{aligned}
\end{equation}
The first two errors are linear in a single first-stage error, typically of larger order than $N^{-1/2}$, so coverage of $\Theta_I$ is no longer guaranteed. The third error is the product of the two first-stage errors, which is $o(N^{-1/2})$ when each converges faster than $N^{-1/4}$. Every inequality in \eqref{eq:idsetgrid} is affine in $\theta$, so the orthogonal estimates of $\ell_{0,k}$ and $u_{0,k}$ could also be potentially used in the standard moment-inequality inference on the estimated system \citep{andrews2010inference,CoxShi2023}.

\subsubsection{Nonlinear case: the interquantile range}
\label{sec:example1-iqr}

The interquartile range that \citet{BGIM} report is a difference of two quantiles resulting in moment inequalities that are not affine in the parameter. Fix $0<\alpha_1<\alpha_2<1$, and suppose that the distribution of $W$ is continuous. The target parameter is the pair of population quantiles,
\begin{equation}\label{eq:iqrtarget}
    \theta_0 := \big(F_0^{-1}(\alpha_1),\,F_0^{-1}(\alpha_2)\big)', \qquad F_0(\theta_{0,j})=\alpha_j, \quad j=1,2,
\end{equation}
and the interquantile range is $\theta_{0,2}-\theta_{0,1}$. The parameter space is $\Theta:=\{\theta\in[\underline w,\bar w]^2:\theta_1\leq\theta_2\}$ for an interval $[\underline w,\bar w]$ that contains the support of $W$.

Write
\begin{equation*}
    F_0^{\mathrm{lo}}(w) := \mathrm{E}\big[F_0^1(w\mid X)\big], \qquad F_0^{\mathrm{up}}(w) := \mathrm{E}\big[F_0^1(w\mid X)\,p_0(X)+1-p_0(X)\big],
\end{equation*}
so that $\ell_{0,k}=F_0^{\mathrm{lo}}(w_k)$, $u_{0,k}=F_0^{\mathrm{up}}(w_k)$, and $F_0^{\mathrm{lo}}(w)\leq F_0(w)\leq F_0^{\mathrm{up}}(w)$ for every $w$. Evaluating these bounds at $w=\theta_{0,j}$, and the increment restriction \eqref{eq:increment} at the thresholds $\theta_{0,1}<\theta_{0,2}$, gives the identified set
\begin{equation}\label{eq:iqrset}
    \Theta_I=\Big\{\theta\in\Theta:\; F_0^{\mathrm{lo}}(\theta_j)\leq\alpha_j\leq F_0^{\mathrm{up}}(\theta_j),\;\; j=1,2, \quad F_0^{\mathrm{up}}(\theta_2)-F_0^{\mathrm{up}}(\theta_1)\leq\alpha_2-\alpha_1\Big\}.
\end{equation}
Suppose in addition that $\min\{\alpha_1,\alpha_2-\alpha_1\}>1-\Pr(D=1)$, $F_0^{\mathrm{lo}}$ and $F_0^{\mathrm{up}}$ are continuous, and strictly increasing near the points at which they equal $\alpha_1$ or $\alpha_2$. The coordinate projections of $\Theta_I$ are then the quantile bounds $[\underline\theta_j,\bar\theta_j]$, where $F_0^{\mathrm{up}}(\underline\theta_j)=\alpha_j$ and $F_0^{\mathrm{lo}}(\bar\theta_j)=\alpha_j$.

The increment restriction can make the joint bound on the interquantile range strictly tighter than the difference of the two coordinate bounds. Since
\[
F_0^{\mathrm{up}}(w)-F_0^{\mathrm{lo}}(w)
=
\mathrm{E}[(1-p_0(X))(1-F_0^1(w\mid X))],
\]the box corner \((\underline\theta_1,\bar\theta_2)\), which uniquely maximizes \(\theta_2-\theta_1\) over the Cartesian product of the coordinate bounds, satisfies
\[
F_0^{\mathrm{up}}(\bar\theta_2)
-
F_0^{\mathrm{up}}(\underline\theta_1)
=
\alpha_2-\alpha_1+
\mathrm{E}[(1-p_0(X))(1-F_0^1(\bar\theta_2\mid X))].
\]Hence this corner is excluded whenever the last expectation is positive. Since \(\Theta_I\) is compact, the upper bound on the interquantile range is then strictly smaller than \(\bar\theta_2-\underline\theta_1\). This is the unconditional analogue of the cdf coherence argument used by \citet[Section~2.1.3]{BGIM} to tighten bounds on conditional within-group inequality.

In the notation of \Cref{sec:setup}, the observation is $Z_i:=(X_i,D_i,D_iW_i)$, and the moment function has five coordinates, obtained by evaluating the scores \eqref{eq:scores} at the parameter,
\begin{equation}\label{eq:iqrmoments}
\begin{aligned}
    g_j^{\mathrm{lo}}(Z;\theta,\xi) &:= \phi_L(Z;\theta_j,\xi)-\alpha_j, \qquad g_j^{\mathrm{up}}(Z;\theta):=\alpha_j-\phi_U(Z;\theta_j), \qquad j=1,2,\\
    g^{\mathrm{inc}}(Z;\theta) &:= D\,\mathbf{1}\{\theta_1<W\leq\theta_2\}-(\alpha_2-\alpha_1).
\end{aligned}
\end{equation}
Because
\[
\mathrm{E}\,\phi_L(Z;w,\xi_0)=F_0^{\mathrm{lo}}(w),\qquad
\mathrm{E}\,\phi_U(Z;w)=F_0^{\mathrm{up}}(w),
\]and, for \(\theta_1\leq\theta_2\),
\[
\mathrm{E}[D\,\mathbf{1}\{\theta_1<W\leq\theta_2\}]
=
F_0^{\mathrm{up}}(\theta_2)-F_0^{\mathrm{up}}(\theta_1),
\]the zero set of the population criterion generated by \eqref{eq:iqrmoments} is \eqref{eq:iqrset}. The nuisance parameter is \(\xi_0=(p_0,F_0^1)\). The bias identity in \eqref{eq:pluginbias} holds at every threshold \(w\), so the lower-bound moments are Neyman-orthogonal pointwise in \(\theta\); \Cref{ass:iqrrates} supplies the uniform control over \(\theta\) required by the asymptotic theory.

Three features of \eqref{eq:iqrmoments} distinguish the interquantile case from the linear case. First, the moments are non-affine in \(\theta\), because the quantile locations enter through both the indicator functions and \(F^1(\theta_j\mid X)\). Second, the nuisance is evaluated at the parameter, so \(F_0^1(w\mid x)\) must be estimated uniformly over \(w\). Third, \(\Theta_I\) need not be convex: the increment restriction defines a nonlinear upper boundary for \(\theta_2\) as a function of \(\theta_1\), which need not be concave. Recovering quantiles from the fixed-threshold system of \Cref{sec:example1-grid} would instead require a threshold grid whose mesh tends to zero, and therefore a growing number of inequalities. Alternatively, one could estimate \(F_0^{\mathrm{lo}}\) and \(F_0^{\mathrm{up}}\) uniformly and invert the estimated bound functions. The criterion approach avoids this separate inversion step: the quantile locations enter directly as parameters, and the zero set of the criterion yields \(\Theta_I\) even when that set is nonconvex.

\subsection{Wage gap with an interval-censored outcome}
\label{sec:example2}

Our second example is the gender wage gap when the wage is recorded only as a bracket, a classic source of partial identification \citep{ManskiTamer02}. We describe the set of coefficients consistent with the brackets by a family of moment inequalities.

The target is the coefficient on the treatment in a partially linear projection of the log wage, of which only a bracket is observed. The projection is motivated by the model
\begin{equation}\label{eq:plm-emp}
    Y = D'\theta_0 + f_0(X) + U, \qquad \mathrm{E}[U\mid X, D] = 0,
\end{equation}
where $D\in\mathbb{R}^{d}$ is a low-dimensional treatment vector, $f_0$ is an integrable function, and $X \in \mathbb{R}^p$ is a vector of controls whose dimension may be large ($p \gg N$). The target parameter is the partial-linear projection coefficient $\theta_0:=\Sigma_0^{-1}\mathrm{E}[V(\eta_0)Y]$, with $V(\eta_0)$ and $\Sigma_0$ as in \eqref{eq:idset-emp}. It equals the coefficient in \eqref{eq:plm-emp} when the model holds and remains well defined when effects are heterogeneous. The latent value of $Y$ is bracketed by the observable bounds,
\begin{equation}\label{eq:bracket}
    Y_L \;\leq\; Y \;\leq\; Y_U, \qquad \Delta := Y_U-Y_L,
\end{equation}
where the width $\Delta$ can be random.

The identified set is the set of partial-linear projection coefficients generated by outcomes consistent with the brackets,
\begin{equation}\label{eq:idset-emp}
    \Theta_I^{\ast} \;:=\; \big\{\Sigma_0^{-1}\mathrm{E}[V(\eta_0) Y]\;:\; Y_L\leq Y\leq Y_U \ \text{a.s.}\big\},
\end{equation}
where $\eta_0(X) := \E[D \mid X]$, $V(\eta) := D - \eta(X)$ is the residualized treatment and $\Sigma_0 := \E [ V(\eta_0)V(\eta_0)']$. Let $\ell_0(X):=\mathrm{E}[Y_L\mid X]$ and $U_L := Y_L-\ell_0(X)$ be the residualized lower bracket. By definition, $\Sigma_0\theta_0=\mathrm{E}[V(\eta_0) Y]$, and decomposing $Y=Y_L+R$ with $R:=Y-Y_L$ splits this into an observed and an unobserved part,
\begin{equation}\label{eq:split}
    \Sigma_0\theta_0 \;=\; \mathrm{E}[V(\eta_0) U_L] \;+\; \mathrm{E}[V(\eta_0) R],
\end{equation}
where the first term uses $\mathrm{E}[V(\eta_0)\mid X]=0$.

Bounding the unobserved term produces one supporting half-space of $\Theta_I^{\ast}$ for every direction $q$. Since $0\leq R\leq\Delta$, the bound $(q'V(\eta_0))R\leq\Delta\,(q'V(\eta_0))_{+}$ holds pointwise for every $q$, where $a_+ := \max\{0, a\}$, and taking expectations gives
\begin{equation}\label{eq:halfspace}
    q'\Sigma_0\theta_0 \;\leq\; \mathrm{E}[(q'V(\eta_0))U_L] + \mathrm{E}\big[\Delta\,(q'V(\eta_0))_{+}\big].
\end{equation}
The set $\Theta_I ^*$ is convex, so it is the intersection of its supporting half-spaces: \eqref{eq:halfspace} is the supporting half-space in direction $q$, and intersecting over all $q$ on the unit sphere recovers $\Theta_I ^*$. For implementation we use a finite grid $q_1, \ldots, q_L$ and let $\Theta_I$ denote the resulting polyhedral outer approximation,
\begin{equation}\label{eq:ex2finite_set}
    \Theta_I = \left\{\theta \in \Theta: q_l' \Sigma_0 \theta \leq \mathrm{E}[(q_l'V(\eta_0))U_L] + \mathrm{E}\big[\Delta\,(q_l'V(\eta_0))_{+}\big], \quad l = 1, \ldots, L \right\}.
\end{equation}
Each direction gives one moment inequality, with its own orthogonalization because its nuisance enters through $q'V(\eta)$.

In the notation of \Cref{sec:setup}, the observation is $Z_i :=(X_i, D_i, Y_{L, i}, Y_{U, i})$, $i = 1, \ldots, N$. Consider recentering the positive-part term in \eqref{eq:halfspace} by $c(X) (q' V(\eta))$ for a measurable function $c$, which leaves the expectation at $\eta_0$ unchanged because $\mathrm{E}[V(\eta_0)\mid X]=0$. Whenever the positive-part expectation is pathwise differentiable at $\eta_0$, perturbing $\eta$ in the direction $\delta$ gives the derivative at $\eta_0$,
\begin{equation}\label{eq:piq}
    -\mathrm{E}[(q'\delta(X))\{\pi_q(\eta_0, X)-c(X)\}], \quad \pi_q(\eta, X) \;:=\; \mathrm{E}\big[\Delta\,\mathbf{1}\{q'V(\eta)>0\}\mid X\big],
\end{equation}
which vanishes along every direction $\delta$ at $c(X) = \pi_q(\eta_0, X)$. The orthogonal moment function is therefore
\begin{equation}
    \label{eq:emp-moment}
    g_q(Z;\theta,\xi) := (q'V(\eta))(V(\eta)'\theta) - (q'V(\eta))U_L - \Big[\Delta\,(q'V(\eta))_{+} - (q'V(\eta))\,\pi_q(\eta, X)\Big].
\end{equation}
Collecting $g_1, \ldots, g_L$ gives a moment function $g(Z; \theta, \xi)$ of length $L$, and the identified set \eqref{eq:idset} for $\theta_0$ is \eqref{eq:ex2finite_set}. The first-stage nuisance parameter is $\xi_0 := (\eta_0, \ell_0, \pi_{0,1}, \ldots, \pi_{0,L})$, estimated by cross-fitting.

\section{Formal results}
\label{sec:examples-theory}

The three corollaries of this section apply \Cref{alg:main} to the moment functions of \Cref{sec:example} and state, under primitive conditions, the rate of the set estimator and the coverage of the confidence region.

\subsection{Wage distribution under selection into employment}\label{sec:formal_bgim}

\Cref{ass:ex1id} collects the restrictions that define the identified set \eqref{eq:idsetgrid}. \Cref{ass:ex1ovl} is the strict overlap that the inverse propensity in \eqref{eq:scores} requires. \Cref{ass:ex1rates} places standard restrictions on the convergence rates of the nuisance parameter estimators.

\begin{assumption}[Identification]
\label{ass:ex1id}
(1) For $j=1,\ldots,S$, the employed-wage distribution stochastically dominates the unemployed-wage distribution, that is, $F_0 ^1(w_j\mid X)\leq F_0 ^0(w_j\mid X)$ almost surely. (2) The parameter space is $\Theta=[0,1]^S$; under (1) the polytope \eqref{eq:idsetgrid} contains $\theta_0$ and is non-empty.
\end{assumption}

\begin{assumption}[Overlap]
\label{ass:ex1ovl}
There exists $\underline p>0$ such that $p_0(X)\geq\underline p$ almost surely and every fold estimate satisfies $\widehat p^{(k)}(X)\geq\underline p$ almost surely, $k=1,\ldots,K$.
\end{assumption}

\begin{assumption}[Convergence Rates]
\label{ass:ex1rates}
There exist sequences $g_N^{p}$ and $g_N^{F}$ such that, with probability approaching one, $\|\widehat p^{(k)}-p_0\|_{P,2}\leq g_N^{p}$ and $\max_{j\leq S}\|\widehat F^{1(k)}(w_j\mid\cdot)-F_0 ^1(w_j\mid\cdot)\|_{P,2}\leq g_N^{F}$ for every fold $k$, and
\begin{equation*}
    g_N^{p}\,g_N^{F} = o(N^{-1/2}), \qquad g_N^{p}\vee g_N^{F} = o\big((\log N)^{-1/2}\big).
\end{equation*}
\end{assumption}

\begin{corollary}[Wage Distribution under Selection]
\label{cor:ex1}
Suppose \Cref{ass:ex1id,ass:ex1ovl,ass:ex1rates} hold, and let $g$ collect the $2S$ coordinate moments built from \eqref{eq:scores} and the $S-1$ increment moments \eqref{eq:incmoment}. Let $\widehat\Theta_I=C_N(c_N;\widehat\xi)$ be the set estimator of \Cref{def:setestim} at the level $c_N=\log N$. Then the following hold:
\begin{enumerate}
\item[(1)] $\Pr(\Theta_I\subseteq\widehat\Theta_I)\rightarrow 1$ and $d_H(\widehat\Theta_I,\Theta_I)=O_P\big(\sqrt{\log N/N}\big)$.
\item[(2)] Let $\tau\in(0,1)$ be such that the limit law of $\sup_{\theta\in\Theta_I}N\,Q_N(\theta,\xi_0)$ puts mass less than $1-\tau$ at zero and has a distribution function that is strictly increasing in a neighbourhood of its $(1-\tau)$-quantile, and let $b\rightarrow\infty$ with $b\log N/N\rightarrow 0$. The critical value $\widehat c_\tau$ of \Cref{alg:main} then satisfies $$\Pr\big(\Theta_I\subseteq C_N(\widehat c_\tau;\widehat\xi)\big)= 1-\tau+o(1).$$
\item[(3)] Suppose $\Theta_I$ has non-empty interior. Then $d_H\big(C_N(\widehat c';\widehat\xi),\Theta_I\big)=O_P(N^{-1/2})$ for every level $\widehat c'=O_P(1)$ that is at least $N\min_{\theta\in\Theta}Q_N(\theta,\widehat\xi)$ with probability approaching one. The level $\widehat c'=0$ is admissible whenever the sample criterion attains zero with probability approaching one.
\end{enumerate}
\end{corollary}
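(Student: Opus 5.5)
The plan is to derive \Cref{cor:ex1} from the general theory of \Cref{app:theory} by checking its high-level conditions for the moment function of \Cref{sec:example1-grid}. These conditions are of the \citet{CHT} type plus first-stage conditions. The central reduction is a uniform bound on the gap between the feasible and the infeasible criteria,
\begin{equation*}
\sup_{\theta\in\Theta}\Big\|\frac1N\sum_{i=1}^N\big\{g(Z_i;\theta,\widehat\xi_i)-g(Z_i;\theta,\xi_0)\big\}\Big\| = o_P(N^{-1/2}).
\end{equation*}
Only the $S$ lower-bound moments depend on $\xi$. Their difference $\phi_L(Z;w_k,\widehat\xi^{(k)})-\phi_L(Z;w_k,\xi_0)$ does not involve $\theta$, since every moment is affine in $\theta$. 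The supremum over $\theta$ is therefore trivial, and the bound reduces to $S$ fixed scalar averages.

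For each fold $k$ and threshold $w_j$, I split the fold average into two pieces. The first is a conditionally centered empirical-process term. Conditional on the data outside $J_k$, its mean is zero and its variance is at most $n^{-1}\E[(\phi_L(\widehat\xi)-\phi_L(\xi_0))^2]$. Using \Cref{ass:ex1ovl} and the boundedness of indicators and distribution functions, this variance is $O(n^{-1}(g_N^p+g_N^F)^2)=o(N^{-1})$, so Chebyshev's inequality gives an $o_P(N^{-1/2})$ bound. The second is the bias term $\E\phi_L(Z;w_j,\widehat\xi)-\ell_{0,j}$. The third line of \eqref{eq:pluginbias} gives this bias explicitly, and Cauchy--Schwarz with overlap bounds it by $\underline p^{-1}g_N^Fg_N^p=o(N^{-1/2})$ by \Cref{ass:ex1rates}. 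This is Neyman orthogonality in its exact second-order form.

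Given this reduction, the feasible criterion satisfies $NQ_N(\theta,\widehat\xi)=NQ_N(\theta,\xi_0)+o_P(1)\cdot(1+\sqrt{NQ_N(\theta,\xi_0)})$ uniformly in $\theta$. For part (1), I would invoke the \citet{CHT} results for the infeasible criterion. A finite set of affine inequalities with bounded moments satisfies their degeneracy and polynomial-minorant condition with exponent $2$: $Q(\theta,\xi_0)\ge C\,d(\theta,\Theta_I)^2\wedge\delta$, by Hoffman's bound for polytopes. The uniform CLT for $\sqrt N(\mathbb{E}_N-\E)g(\cdot;\theta,\xi_0)$ is trivial here because the moments are affine in $\theta$. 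Combining these gives $\sup_{\Theta_I}NQ_N(\theta,\widehat\xi)=O_P(1)<\log N$ w.p.a.1, which yields the inclusion. The rate $d_H=O_P(\sqrt{\log N/N})$ follows because $C\,d(\theta,\Theta_I)^2\le Q_N+O_P(N^{-1})$ on the contour set.

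Part (3) uses the non-empty interior through the \citet{CHT} argument for $O_P(1)$ levels. Interior points have strictly slack inequalities, so the criterion vanishes in a neighbourhood. The boundary is then approached at rate $N^{-1/2}$: a polytope with interior is the closure of its interior, which rules out lower-dimensional faces. Any level $\widehat c'\ge N\min Q_N$ that is $O_P(1)$ then gives the $N^{-1/2}$ rate.

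Part (2) is the main obstacle, because cross-fitting interacts with subsampling. I would use the general subsampling theorem of \Cref{app:theory}, which relies on the balanced disjoint-block construction and bracketing. Each $\widehat{\mathcal T}_{j,b}$ is sandwiched between $\sup_{\Theta_I^{\pm\epsilon_N}}bQ_{j,b}(\theta,\xi_0)$ plus $o_P(1)$, with $\epsilon_N\asymp\sqrt{\log N/N}$ taken from part (1). The block analogue of the reduction controls the nuisance error at $o_P(b^{-1/2})$. This works because the block observations in fold $k$ are independent of $\widehat\xi^{(k)}$, so the same variance-plus-bias bound applies with $n$ replaced by $b/K$. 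The change of domain from $\Theta_I$ to $\Theta_I^{\pm\epsilon_N}$ costs $O(b\epsilon_N^2+\sqrt b\,\epsilon_N)=o(1)$ by $b\log N/N\to0$.

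The bounding statistics are i.i.d.\ across blocks and converge to the limit law of $\sup_{\Theta_I}NQ_N(\theta,\xi_0)$. The continuity and strict-increase assumptions at the $(1-\tau)$-quantile then give $\widehat c_\tau\to_P c_\tau$ and the stated coverage. The step I expect to need the most care is making the $o_P(1)$ sandwich uniform across the $B_N$ blocks, in the sense of the empirical distribution function. I would handle it by bounding the average over blocks of $\Pr(|\text{error}_j|>\eta)$ rather than the maximum.
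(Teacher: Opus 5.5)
Your plan follows the paper's proof in \Cref{sec:verif-grid}. Hoffman's bound and the Lipschitz property of $\|\cdot\|_+$ give the minorant and the linear majorant. A constant $B$ makes the Donsker step trivial. Part~(2) rests on \Cref{thm:main:subs}, whose balanced blocks, brackets at $\xi_0$ and averaging of block-wise failure probabilities (the paper's $\widehat\pi_N$ with Markov's inequality) are what you describe.

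The genuine difference is how you control the first stage. The paper verifies \Cref{ass:concentration} (envelope, uniform entropy, $r_N'\sqrt{\log N}=o(1)$) and applies the maximal inequality of \Cref{lem:concentrate}. You instead note that $g(Z;\theta,\widehat\xi)-g(Z;\theta,\xi_0)$ does not depend on $\theta$. A fold-conditional Chebyshev bound plus the exact bias identity (the paper's \eqref{eq:exactremainder} at $r=1$) then gives the $o_P(N^{-1/2})$ reduction, and the same bound on a balanced block gives $o_P(b^{-1/2})$. This is more elementary. It also shows that in this example $g_N^p\vee g_N^F=o((\log N)^{-1/2})$ can be relaxed to $o(1)$, with the product condition unchanged, because the $\sqrt{\log m}$ in $\zeta_m$ pays for a supremum over $\theta$ that is trivial here. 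The paper's route buys a single template that also covers \Cref{cor:iqr}, where $\widehat F^1$ is evaluated at $\theta_j$ and your shortcut is unavailable.

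Two steps of part~(2) need repair.

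\emph{Continuity at the quantile.} You invoke continuity of the cdf $F$ of $\mathcal T$ at $q:=q_{1-\tau}(\mathcal T)$, but the corollary assumes only $\Pr(\mathcal T=0)<1-\tau$ and strict increase near $q$. Strict increase is enough for $\widehat c_\tau\to_p q$. The coverage limit $\Pr(\widehat{\mathcal C}_N\le\widehat c_\tau)\to F(q)=1-\tau$, however, needs $F$ continuous at $q$, and this must be proved. As in \Cref{lem:lipschitz}, $\sqrt{\mathcal T}$ is the supremum of a centred Gaussian process, so by Tsirelson's theorem its law has no atom except at $0$. Moreover $q>0$ because $\Pr(\mathcal T=0)<1-\tau$. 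Here $\mathbb G(\theta)$ does not even depend on $\theta$, so this is a finite-dimensional fact.

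\emph{The lower bracket.} A lower bracket over a contraction $\Theta_I^{-\epsilon_N}$ breaks down when $\Theta_I$ has empty interior, which part~(2) allows (for instance $\ell_{0,k}=u_{0,k}$ when $w_k$ lies above the support of $W$); the contraction is then empty. Part~(1) already gives $\Theta_I\subseteq C_N(c_N;\widehat\xi)$ with probability approaching one. So bracket from below by $\sup_{\theta\in\Theta_I}b\,Q_{j,b}(\theta,\xi_0)$, or use the paper's infimum over $\mathcal K_N$.

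Finally, in part~(3) the linear rate $d_H(\Theta_I^{-\epsilon},\Theta_I)\le M\epsilon$ comes from convexity, by shrinking $\Theta_I$ toward the centre of an inscribed ball. It does not follow from $\Theta_I$ being the closure of its interior, which by itself gives no rate.
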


\Cref{cor:ex1} gives the asymptotic theory for the set estimator and the confidence region in the wage-distribution example. The rate in (1) and the coverage in (2) match the infeasible criterion $Q_N(\theta,\xi_0)$ of \citet{CHT}, and (3) is their degenerate case. The first stage converges more slowly than $N^{-1/2}$ but does not affect the first-order asymptotics: \Cref{ass:ex1rates} restricts only the product of the two first-stage errors, so an $o(N^{-1/4})$ rate for each suffices.

\subsection{Interquantile range under selection into employment}\label{sec:formal_iqr}

\Cref{ass:iqrid} extends the stochastic-dominance restriction of \Cref{ass:ex1id} to every threshold and requires the upper bound function $F_0^{\mathrm{up}}$ of \Cref{sec:example1-iqr} to increase at a positive rate near the quantile bounds. \Cref{ass:iqrrates} restates \Cref{ass:ex1rates} uniformly over thresholds. Distribution regression, which estimates $F_0^1(w\mid x)$ at each threshold $w$ by an $\ell_1$-penalized logistic regression of $\mathbf{1}\{W\leq w\}$ on $X$ among the employed, attains rates of this type uniformly over thresholds \citep{Program}. Its fitted values need not be monotone in $w$. Rearranging them in $w$ restores monotonicity and does not increase $\sup_w|\widehat F^{1(k)}(w\mid x)-F_0^1(w\mid x)|$ at any $x$ \citep{CFG2010}, so it preserves the rate condition whenever that condition holds in the stronger norm $\|\sup_w|\cdot|\|_{P,2}$.

\begin{assumption}[Identification]
\label{ass:iqrid}
(1) For every $w$, the employed-wage distribution stochastically dominates the unemployed-wage distribution, that is, $F_0^1(w\mid X)\leq F_0^0(w\mid X)$ almost surely. (2) The distribution of $W$ is continuous, its support lies in $[\underline w,\bar w]$, and given $D=1$ and $X$ the wage has a density $f_0^1(w\mid X)$ bounded by a constant $\bar f$. (3) The quantile levels satisfy $\min\{\alpha_1,\alpha_2-\alpha_1\}>1-\Pr(D=1)$, and there exist $\bar\delta>0$ and $\underline f>0$ such that $\mathcal{J}:=[\underline\theta_1-\bar\delta,\bar\theta_2+\bar\delta]\subset(\underline w,\bar w)$ and $\mathrm{E}[p_0(X)f_0^1(w\mid X)]\geq\underline f$ for almost every $w\in\mathcal{J}$.
\end{assumption}

\begin{assumption}[Convergence Rates]
\label{ass:iqrrates}
For every fold $k$ and every $x$, the estimate $\widehat F^{1(k)}(\cdot\mid x)$ is non-decreasing on $[\underline w,\bar w]$ with values in $[0,1]$. There exist sequences $g_N^{p}$ and $g_N^{F}$ that satisfy the rate conditions of \Cref{ass:ex1rates} and such that, with probability approaching one, $\|\widehat p^{(k)}-p_0\|_{P,2}\leq g_N^{p}$ and $\sup_{w\in[\underline w,\bar w]}\|\widehat F^{1(k)}(w\mid\cdot)-F_0^1(w\mid\cdot)\|_{P,2}\leq g_N^{F}$ for every fold $k$.
\end{assumption}

\begin{corollary}[Interquantile Range under Selection]
\label{cor:iqr}
Suppose \Cref{ass:ex1ovl,ass:iqrid,ass:iqrrates} hold, and let $g$ collect the five moments in \eqref{eq:iqrmoments}. Then conclusions (1), (2) and (3) of \Cref{cor:ex1} hold.
\end{corollary}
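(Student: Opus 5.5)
The plan is to reduce \Cref{cor:iqr} to the general theory of \Cref{app:theory}, which gives conclusions (1)--(3) for the cross-fitted criterion under high-level conditions. Those conditions are: (i) identification and a polynomial minorant (degeneracy) condition of the form $Q(\theta,\xi_0)\geq C\min\{\delta, d(\theta,\Theta_I)\}^2$; (ii) a uniform Donsker/stochastic-equicontinuity condition for $\{g(\cdot;\theta,\xi_0):\theta\in\Theta\}$; (iii) uniform-in-$\theta$ Neyman orthogonality with second-order bias $\sup_\theta\|\mathrm{E}\,g(Z;\theta,\widehat\xi^{(k)})-\mathrm{E}\,g(Z;\theta,\xi_0)\|=o_P(N^{-1/2})$; and (iv) a uniform empirical-process bound $\sup_\theta\|\mathbb{G}_n[g(\cdot;\theta,\widehat\xi^{(k)})-g(\cdot;\theta,\xi_0)]\|=o_P(1)$. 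Because $g$ is discontinuous in $\theta$ through the indicators $\mathbf{1}\{W\leq\theta_j\}$ and $\mathbf{1}\{\theta_1<W\leq\theta_2\}$, I would use the discontinuous-in-$\theta$ extension of that theory.

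\textbf{Bias and orthogonality.} Only $g_j^{\mathrm{lo}}$ depends on $\xi$. By the third line of \eqref{eq:pluginbias}, evaluated at $w=\theta_j$, Cauchy--Schwarz and \Cref{ass:ex1ovl} give
\[
\sup_{w}\bigl|\mathrm{E}\,\phi_L(Z;w,\widehat\xi^{(k)})-F_0^{\mathrm{lo}}(w)\bigr|\leq \underline p^{-1}\,g_N^F g_N^p=o(N^{-1/2}),
\]
uniformly over $w$ by \Cref{ass:iqrrates}.

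\textbf{Empirical-process terms.} Conditional on the complement of fold $k$, the class $\{\phi_L(\cdot;w,\widehat\xi^{(k)})-\phi_L(\cdot;w,\xi_0):w\}$ is indexed by a scalar. It is a sum of monotone-in-$w$ pieces, because $\widehat F^{1(k)}(\cdot\mid x)$ and $F_0^1$ are non-decreasing and $\mathbf{1}\{W\leq w\}$ is monotone, multiplied by the bounded factors $1-D/p$. It therefore has a uniformly bounded envelope and bracketing entropy of order $\log(1/\epsilon)$. Its $L^2$ radius is at most $C(g_N^F+g_N^p)$ uniformly in $w$. A maximal inequality then yields a bound of order $(g_N^F+g_N^p)\sqrt{\log N}+N^{-1/2}\log N=o(1)$, which is exactly where the condition $g_N^p\vee g_N^F=o((\log N)^{-1/2})$ enters. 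The uncentred moments at $\xi_0$ (indicators of half-lines and intervals, together with $F_0^1(\theta_j\mid X)$) form VC/monotone classes, so they are Donsker. Their $L^2$-continuity in $\theta$ follows from the bounded density $\bar f$ in \Cref{ass:iqrid}(2).

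\textbf{Minorant condition (main obstacle).} I expect the polynomial minorant condition, needed for the $\sqrt{\log N/N}$ Hausdorff rate and for (3), to be the delicate step. The approach is to show that for $\theta\notin\Theta_I$ near $\Theta_I$, some violated moment grows linearly in $d(\theta,\Theta_I)$. The relevant derivatives are as follows. The map $w\mapsto F_0^{\mathrm{up}}(w)$ has derivative $\mathrm{E}[p_0f_0^1(w\mid X)]\geq\underline f$ on $\mathcal{J}$. The map $F_0^{\mathrm{lo}}$ has derivative $\mathrm{E}[f_0^1(w\mid X)]\geq\underline f$, since $p_0\leq 1$. The increment moment has gradient $(-F_0^{\mathrm{up}\prime}(\theta_1),F_0^{\mathrm{up}\prime}(\theta_2))$, bounded away from zero. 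Away from $\mathcal{J}$, compactness and continuity give a positive lower bound on $Q$.

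The subtle part is corners where several constraints are active. There I would verify a constraint qualification: a direction exists that strictly decreases all active constraints, which is a Mangasarian--Fromovitz-type condition. Given MFCQ, an error-bound (Hoffman/Robinson) argument delivers $\|\mathrm{E}\,g\|_+\geq c\,d(\theta,\Theta_I)$. I would check MFCQ case by case, using $\min\{\alpha_1,\alpha_2-\alpha_1\}>1-\Pr(D=1)$ to rule out a degenerate configuration where the lower-$\theta_1$ bound and the increment constraint are active with opposing gradients. That assumption should also ensure that $\Theta_I$ has nonempty interior in the relevant cases.

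\textbf{Assembly.} With (i)--(iv) in hand, the general theorem yields (1) and (3) directly. For (2), I would invoke the subsampling result, whose only extra requirements (continuity of the limit law at its $(1-\tau)$-quantile, $b\to\infty$, and $b\log N/N\to 0$) are exactly the hypotheses imposed in \Cref{cor:ex1}(2).
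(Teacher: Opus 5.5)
Your bias and empirical-process steps are fine and match the paper, which gets the entropy bound from VC-subgraph arguments rather than bracketing. The step you flag as the main obstacle, however, does not go through as planned.

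\textbf{Gap in the minorant step.} The standard MFCQ/Robinson error bound needs continuously differentiable constraints. \Cref{ass:iqrid} only bounds $f^{\mathrm{lo}}:=\mathrm{E}[f_0^1(\cdot\mid X)]$ and $f^{\mathrm{up}}:=\mathrm{E}[p_0(X)f_0^1(\cdot\mid X)]$ above and below almost everywhere. So $F_0^{\mathrm{lo}}$ and $F_0^{\mathrm{up}}$ are bi-Lipschitz on $\mathcal{J}$ but need not be differentiable.

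More seriously, MFCQ fails in configurations the corollary covers. Suppose $\mathrm{E}[(1-p_0(X))(1-F_0^1(\bar\theta_j\mid X))]=0$, for instance $p_0\equiv 1$, which satisfies $\min\{\alpha_1,\alpha_2-\alpha_1\}>1-\Pr(D=1)$. Then $\underline\theta_j=\bar\theta_j$. The active constraints $F_0^{\mathrm{lo}}(\theta_j)\leq\alpha_j$ and $F_0^{\mathrm{up}}(\theta_j)\geq\alpha_j$ then have opposite gradients $f^{\mathrm{lo}}(\theta_j)e_j$ and $-f^{\mathrm{up}}(\theta_j)e_j$, so no direction decreases both. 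The linear minorant nonetheless holds there. The configuration you want the $\alpha$-condition to exclude does not arise: $(-f^{\mathrm{up}}(\theta_1),0)$ and $(-f^{\mathrm{up}}(\theta_1),f^{\mathrm{up}}(\theta_2))$ are never opposite.

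The missing idea is the monotone change of variables $s=\Phi(\theta)=(F_0^{\mathrm{up}}(\theta_1),F_0^{\mathrm{up}}(\theta_2))$. Both bound functions are strictly increasing on $\mathcal{J}$, so $F_0^{\mathrm{lo}}(\theta_j)\leq\alpha_j$ is equivalent to $s_j\leq\bar s_j:=F_0^{\mathrm{up}}(\bar\theta_j)$. Hence $\Theta_I$ is the preimage of the polygon $\mathcal{R}=\{s:As\leq c\}$ cut out by $\alpha_j\leq s_j\leq\bar s_j$, $s_2-s_1\leq\alpha_2-\alpha_1$ and $s_1\leq s_2$. The argument then runs as follows.
\begin{enumerate}
\item Hoffman's bound applies to $\mathcal{R}$ with no constraint qualification.
\item The density bounds give $\|(As-c)_+\|\leq(\bar f/\underline f)\|m(\theta)\|_+$.
\item $\Phi^{-1}$ is $\underline f^{-1}$-Lipschitz on $\Phi(\mathcal{J})^2$, so $d(\theta,\Theta_I)\leq H(A)\bar f\underline f^{-2}\|m(\theta)\|_+$ whenever $d(\theta,\Theta_I)\leq\bar\delta$.
\item Compactness handles the remaining $\theta$.
\end{enumerate}

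\textbf{Gaps in the assembly.} Conclusion (3) does not follow from the minorant. \Cref{lem:degeneracy} needs \eqref{eq:deg}: $\max_l m_l(\theta)\leq -C\epsilon$ on the contraction $\Theta_I^{-\epsilon}$, and $d_H(\Theta_I^{-\epsilon},\Theta_I)\leq M\epsilon$. Non-empty interior is a hypothesis of (3), not a consequence of the $\alpha$-condition; with $p_0\equiv 1$, $\Theta_I$ is a single point. Because $\Theta_I$ need not be convex, the Hausdorff requirement is not automatic. The paper gets the first part by testing the points $\theta\pm\epsilon e_j$ and $\theta+(\epsilon/\sqrt2)(-1,1)'$ of $B(\theta,\epsilon)\subseteq\Theta_I$ against the density bounds. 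It gets the second from convexity of $\mathcal{R}$, using $\Phi(B(\theta',\epsilon))\subseteq B(\Phi(\theta'),\bar f\epsilon)$ to pull inner balls of $\mathcal{R}$ back into $\Theta_I$.

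For (2), \Cref{thm:main:subs} also needs the linear majorant \Cref{ass:dominance} and the limit law \Cref{ass:c4}. Because $g$ is discontinuous in $\theta$, it also needs the uniform approximability \eqref{eq:c5unif}. None of these are among the hypotheses of \Cref{cor:ex1}(2). They do follow from \Cref{lem:lipschitz} once you note that $m$ is Lipschitz with constant at most $3\bar f$, so this part is easily repaired.
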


\Cref{cor:iqr} shows that the conclusions of \Cref{cor:ex1} continue to hold for the non-affine moment system \eqref{eq:iqrmoments}. Because the lower-bound moments evaluate \(\widehat F^1\) at the unknown quantile locations, the first-stage rate must hold uniformly over thresholds. The lower density bound in \Cref{ass:iqrid}(3), together with the upper density bound in \Cref{ass:iqrid}(2), provides the local linear error bound that is automatic for the affine system of \Cref{cor:ex1}. In particular, the transformation
\[
\Phi(\theta)
=
\big(F_0^{\mathrm{up}}(\theta_1),
F_0^{\mathrm{up}}(\theta_2)\big)
\]is locally bi-Lipschitz on the relevant region, and the transformed identified set is a convex polygon, so Hoffman's bound yields a population moment violation proportional to \(d(\theta,\Theta_I)\). Let $r(\theta) := \theta_2 - \theta_1$ and $\mathcal{R}_I := r(\Theta_I)$. Although $\Theta_I$ need not be convex, it is nevertheless connected. Specifically, under the transformation $\Phi$ used in the proof of \cref{cor:iqr}, it is the inverse image of a convex polygon, and as a result, $\mathcal{R}_I$ is an interval. Since $r$ is Lipschitz, $d_H (r(\widehat{\Theta}_I), \mathcal{R}_I) \leq \sqrt{2} d_H (\widehat{\Theta}_I, \Theta_I)$, so that the estimated interquantile-range set converges at no slower rate than the rate in \cref{cor:iqr}(1). A confidence interval for the identified interquartile range is obtained from the interval hull
\begin{equation*}
    \text{CI}_\tau := \left(\inf _{\theta \in C_N(\hat{c}_\tau; \hat{\xi})} r(\theta), \sup_{\theta \in C_N(\hat{c}_\tau; \hat{\xi})} r(\theta) \right).
\end{equation*}
Whenever $C_N(\hat{c}_\tau; \hat{\xi})$ contains $\Theta_I$, $\text{CI}_\tau$ contains $\mathcal{R}_I$. Therefore, $\lim \inf _{N \rightarrow \infty} \Pr (\mathcal{R}_I \subseteq \text{CI}_\tau) \geq 1 - \tau$. This projection is generally conservative for the scalar target because $\hat{c}_\tau$ is calibrated to cover the entire set $\Theta_I$. By contrast, \citet{KaidoMolinariStoye} calibrate inference directly for a component or smooth function of $\theta$.

\subsection{Wage gap with an interval-censored outcome}\label{sec:formal_vira}

\Cref{ass:ex2id} imposes a standard identification condition and ensures that the finite system of directional inequalities defines a bounded polyhedron. \Cref{ass:ex2width} provides bounded envelopes for the moments. The only condition specific to the non-smooth positive-part term is \Cref{ass:ex2kink}, which requires its population remainder to be quadratic in the propensity error. Finally, \Cref{ass:ex2rates} requires the resulting first-stage bias to be \(o(N^{-1/2})\) and the \(L^2\) nuisance errors to satisfy the continuity rate.

\begin{assumption}[Identification]
\label{ass:ex2id}
(1) The matrix $\Sigma_0 := \mathrm{E}[(D - \eta_0(X))(D - \eta_0(X))']$ is non-singular. (2) The finite set $q_1,\ldots,q_L \in \mathbb{S}^{d - 1}$ is symmetric about the origin, and spans $\mathbb{R}^d$. (3) $\Theta$ is compact and contains the polyhedron cut out by the $L$ half-spaces \eqref{eq:halfspace}.
\end{assumption}

\begin{assumption}[Boundedness]
\label{ass:ex2width}
For finite constants $\bar{D}$, $\bar{Y}$, $\bar{\Delta}$ it holds that $\|D\| \leq \bar{D}$, $|Y_L| \leq \bar{Y}$ and $\Delta \leq \bar{\Delta}$ almost surely.
\end{assumption}

\begin{assumption}[Quadratic kink remainder]
    \label{ass:ex2kink}
    For every $l = 1, \ldots, L$ there exists $C_K < \infty$ such that for every $\eta$ with $\|\eta(X)\|\leq\bar{D}$ almost surely,
    \begin{equation}
        |\mathcal{K}_l (\eta)| \leq C_K \|\eta - \eta_0\|_{P, 2} ^2,
    \end{equation}
    where
    \begin{equation*}
        \mathcal{K}_l (\eta) := \mathrm{E}\left[\Delta ((q_l' V(\eta))_+ - (q_l' V(\eta_0))_+ + q_l' (\eta - \eta_0)(X) \mathbf{1} \{q_l' V(\eta_0) > 0\})\right].
    \end{equation*}
\end{assumption}

\begin{assumption}[Convergence rates]
    \label{ass:ex2rates}
    Every fold estimate satisfies $\|\hat{\eta}^{(k)}(X)\|\leq\bar{D}$, $|\hat{\ell}^{(k)}(X)|\leq\bar{Y}$ and $0\leq\hat{\pi}_{l}^{(k)}(X)\leq\bar{\Delta}$ almost surely, $l=1,\ldots,L$. There exist sequences $g_{\eta, N}$, $g_{\ell, N}$, $g_{\pi, N}$ such that with probability approaching one for every fold $k$, $\|\hat{\eta}^{(k)} - \eta_0\|_{P, 2} \leq g_{\eta, N}$,  $\|\hat{\ell}^{(k)} - \ell_0\|_{P, 2} \leq g_{\ell, N}$, $\max_{l \leq L} \|\hat{\pi}_{l}^{(k)} - \pi_{0, l}\|_{P, 2} \leq g_{\pi, N}$, and
    \begin{equation*}
        g_{\eta, N} (g_{\eta, N} \vee g_{\ell, N} \vee g_{\pi, N}) = o(N^{-1/2}), \quad g_{\eta, N} \vee g_{\ell, N} \vee g_{\pi, N} = o((\log N)^{-1/2}).
    \end{equation*}
\end{assumption}

\begin{corollary}[Wage Gap with an Interval-Censored Wage]
\label{cor:ex2}
Suppose \Cref{ass:ex2id,ass:ex2width,ass:ex2kink,ass:ex2rates} hold, and let $g=(g_1,\ldots,g_L)$ be built from \eqref{eq:emp-moment}. Then conclusions (1), (2) and (3) of \Cref{cor:ex1} hold.
\end{corollary}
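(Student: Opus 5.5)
The proof reduces \Cref{cor:ex2} to the general two-stage theory in \Cref{app:theory}. That theory delivers conclusions (1)--(3) of \Cref{cor:ex1} once high-level conditions are verified. The conditions we expect it to need are:
\begin{enumerate}
\item[(a)] identification and a polynomial (here linear) minorant of the population criterion, $Q(\theta,\xi_0)^{1/2}\geq \kappa\, d(\theta,\Theta_I)$ near $\Theta_I$;
\item[(b)] a Donsker/uniform CLT for $\sqrt N(\mathrm{E}_N-\mathrm{E})g(Z;\theta,\xi_0)$ over $\Theta$;
\item[(c)] a uniform first-stage bias bound $\sup_\theta\|\mathrm{E}\,g(Z;\theta,\xi)-\mathrm{E}\,g(Z;\theta,\xi_0)\|=o(N^{-1/2})$ over the realization set of $\widehat\xi$;
\item[(d)] a uniform $L^2$ continuity bound $\sup_\theta\|g(\cdot;\theta,\xi)-g(\cdot;\theta,\xi_0)\|_{P,2}\lesssim g_{\eta,N}\vee g_{\ell,N}\vee g_{\pi,N}=o((\log N)^{-1/2})$, which controls the cross-fitted empirical-process remainder fold by fold.
\end{enumerate}

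\textbf{Condition (a).} Write $g_l(Z;\theta,\xi)=A_l(Z;\eta)'\theta-b_l(Z;\xi)$ with $A_l=(q_l'V(\eta))V(\eta)$. At $\xi_0$ we have $\mathrm{E}A_l=\Sigma_0q_l$, so $\Theta_I$ is the polyhedron $\{\theta:\ q_l'\Sigma_0\theta\leq \beta_l\}$. By \Cref{ass:ex2id}, $\Sigma_0$ is nonsingular and the $q_l$ span $\mathbb{R}^d$ symmetrically, so this polyhedron is bounded. If it is nonempty (it contains the true projection coefficient), Hoffman's bound gives $\|(\Sigma_0 Q'\theta-\beta)_+\|\geq \kappa\,d(\theta,\Theta_I)$, which is exactly (a).

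\textbf{Condition (b).} Because the moments are affine in $\theta$ over compact $\Theta$ with envelopes bounded by \Cref{ass:ex2width}, the class is a finite-dimensional VC/Lipschitz class. Hence (b) holds, and the limit of $\sup_{\Theta_I}NQ_N(\theta,\xi_0)$ exists as in \citet{CHT}.

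\textbf{Condition (d).} Expand each term of $g_l$ in $(\eta,\ell,\pi)$. The pieces involving $V(\eta)V(\eta)'\theta$, $(q'V)U_L$ with $U_L=Y_L-\ell(X)$, and $(q'V)\pi$ are polynomial in bounded quantities. The map $a\mapsto a_+$ is 1-Lipschitz. Together these give (d), with constants depending on $\bar D,\bar Y,\bar\Delta$ and $\sup_\Theta\|\theta\|$.

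\textbf{Condition (c), the main step.} Compute $\mathrm{E}g_l(\theta,\xi)-\mathrm{E}g_l(\theta,\xi_0)$ term by term, writing $\delta=\eta-\eta_0$ and using $V(\eta)=V(\eta_0)-\delta(X)$ and $\mathrm{E}[V(\eta_0)\mid X]=0$.
\begin{enumerate}
\item[(i)] The quadratic term contributes $\mathrm{E}[(q'\delta)\delta'\theta]=O(\|\delta\|_{P,2}^2)$, uniformly in $\theta$.
\item[(ii)] The lower-bracket term $\mathrm{E}[(q'V(\eta))(Y_L-\ell)]$ differs from its value at the truth by $\mathrm{E}[(q'\delta)(\ell-\ell_0)]$ plus terms that vanish, since $\mathrm{E}[V(\eta_0)\mid X]=0$ and $\mathrm{E}[Y_L-\ell_0\mid X]=0$. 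This is bounded by $g_{\eta,N}g_{\ell,N}$.
\item[(iii)] For the bracket term, $-\mathrm{E}[(q'V(\eta))\pi(\eta,X)]+\mathrm{E}[\Delta(q'V(\eta))_+]$ minus its truth value equals
\[
\mathcal{K}_l(\eta)-\mathrm{E}[(q'\delta)\pi_{0}]+\mathrm{E}[(q'\delta)\pi]-\mathrm{E}[(q'V(\eta_0))(\pi-\pi_0)] .
\]
The last term vanishes by $\mathrm{E}[V(\eta_0)\mid X]=0$, so this is $\mathcal{K}_l(\eta)+\mathrm{E}[(q'\delta)(\pi-\pi_0)]$.
\end{enumerate}
\Cref{ass:ex2kink} bounds $\mathcal{K}_l$ by $C_K\|\delta\|_{P,2}^2$, and Cauchy--Schwarz bounds the cross term by $g_{\eta,N}g_{\pi,N}$.

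Here $\pi$ in the moment is the estimated $\widehat\pi_l$, which estimates $\pi_{0,l}=\pi_{q_l}(\eta_0,\cdot)$; the moment does not re-evaluate at $\eta$. Confirming this bookkeeping in \eqref{eq:emp-moment}, so that no extra first-order term in $\delta$ survives, is the delicate point. Once it holds, \Cref{ass:ex2rates} yields (c), and the general theorem gives (1)--(3).
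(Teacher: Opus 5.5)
\textbf{Overall.} You follow the paper's route: Hoffman's bound for the affine system with rows $q_l'\Sigma_0$, the finite-dimensional affine class for the Donsker and entropy conditions, Lipschitz dependence on $(\delta,\lambda,\nu_l)$ for the $L^2$ continuity bound, and an exact bias decomposition. Your term-by-term computation reproduces \eqref{eq:exactremainder2}. The bookkeeping you flag resolves as you guessed: $\pi_l$ is a free coordinate of $\xi=(\eta,\ell,\pi_1,\ldots,\pi_L)$, so it enters the remainder only through $\mathrm{E}[(q_l'\delta)\nu_l]$ with $\nu_l=\pi_l-\pi_{0,l}$, and no term first order in $\delta$ survives. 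The gap is that your checklist (a)--(d) is essentially the hypothesis set of \Cref{thm:main:ineq}, so it yields conclusion (1). Conclusions (2) and (3) come from \Cref{thm:main:subs} and \Cref{lem:degeneracy}, whose extra hypotheses you never check. So the closing sentence ``the general theorem gives (1)--(3)'' is not justified as written.

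\textbf{What conclusion (2) additionally needs.} \Cref{thm:main:subs} requires three further conditions.
\begin{enumerate}
\item The linear majorant \Cref{ass:dominance}. It gives $\sqrt{b}\,\|\mathrm{E}\,g(Z;\theta,\xi_0)\|_{+}\le C_{\max}C_0\sqrt{b}\,\epsilon_N\to 0$ on the enlargement $\Theta_I^{C_0\epsilon_N}$, which is where the block statistics are maximized, since they use the data-dependent preliminary set. Your minorant (a) bounds the other side.
\item \Cref{ass:c4}, i.e.\ the limit law $\mathcal{T}$ with a cdf continuous on $(0,\infty)$. You only gesture at \citet{CHT} for the existence of the limit.
\item Approximability, \Cref{ass:c5} or \eqref{eq:c5unif}. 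This is what lets the feasible block statistic be bracketed by statistics at $\xi_0$ with the same limit.
\end{enumerate}

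\textbf{What conclusion (3) and the assumptions additionally need.} Conclusion (3) requires \eqref{eq:deg}, which you do not address. \Cref{ass:smallbiasvalue}(d) is also formally required. Your remainder gives it once you note that $\|\eta_0(X)+r\delta(X)\|\le\bar D$, so \Cref{ass:ex2kink} makes every term $O(r^2)$ along the path.

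\textbf{How to fill the gap.} All of these are short in the affine case, and this is how the paper discharges them.
\begin{itemize}
\item Majorant: $\|(A\theta-c)_+\|\le\|A\|\,d(\theta,\Theta_I)$, because $v\mapsto\|v\|_+$ is $1$-Lipschitz.
\item \Cref{ass:c4} and \eqref{eq:c5unif}: \Cref{lem:lipschitz} applied to the Lipschitz mean $A\theta-c$. It uses asymptotic equicontinuity of the block process together with $b\,\epsilon_N^2\to 0$, and gets continuity of the cdf from Tsirelson's theorem on suprema of Gaussian processes.
\item \eqref{eq:deg}, when $\Theta_I$ has non-empty interior: $B(\theta,\epsilon)\subseteq\Theta_I$ forces $c_l-q_l'\Sigma_0\theta\ge\epsilon\|\Sigma_0q_l\|$. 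An interior ball $B(\theta_c,r)$ and convexity give $(1-\epsilon/r)\Theta_I+(\epsilon/r)\theta_c\subset\Theta_I^{-\epsilon}$. Hence \eqref{eq:deg} holds with $C=\min_l\|\Sigma_0q_l\|>0$ and $M=\mathrm{diam}(\Theta_I)/r$.
\end{itemize}
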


The role of \cref{ass:ex2kink} is to control observations near the kink $q' V(\eta_0) = 0$. It can be verified in different ways depending on the design. For continuous designs, a conventional margin or anti-concentration condition around zero implies the quadratic bound. In discrete designs, the same condition may instead hold through sign stability: if perturbing $\eta$ does not change the sign of $q' V(\eta)$, then the kink remainder is exactly zero. The empirical application of \cref{sec:empirical} is a particularly simple special case. Conditional on $X$, $D$ has two support points, and the estimated propensity remains in \((0,1)\). Perturbing the propensity therefore moves the two residual support points without changing their signs, so the kink remainder vanishes exactly. Moreover, with constant bracket width $\Delta$, each $\pi_l$ is available as a closed-form function of the scalar propensity. Consequently, the general rate condition stated in \cref{ass:ex2rates} reduces to $g_{\eta, N} (g_{\eta, N} + g_{\ell, N}) = o(N^{-1/2})$. 

\section{Empirical application}
\label{sec:empirical}

We use the interval-augmented March Supplement of the 2015 Current Population Survey data where we intentionally bracket an observed outcome (log wage) to demonstrate our procedure. The sample is restricted to white non-Hispanic individuals aged 25--64 who worked more than 35 hours per week for at least 50 weeks, resulting in a total of $N = 32{,}523$ observations. The outcome $Y$ is the log hourly wage, and the treatment vector is $D = (\text{female}, \text{female}\times \text{college})'$. Therefore, $\theta_1$ measures the gender wage gap among non-college workers, and $\theta_2$ measures the additional gender gap differential associated with college. The conditioning vector $X$ contains 259 demographic, region, and experience controls and their interactions. We construct the interval-censored wage while retaining the observed wage as an infeasible benchmark. For each bracket width $\Delta \in \{1, 2, 3\}$, we replace $Y$ by lower and upper endpoints $Y_L$ and $Y_U$ satisfying $\Delta = Y_U - Y_L$. Thus the information available to the estimator is only that $Y \in [Y_L, Y_U]$, while observed $Y$ is retained for the subsequent analysis.

For each $\Delta$, we apply the procedure of Sections \ref{sec:example2} and \ref{sec:overview}. The first-stage nuisance functions are the propensity $\eta_{0, 1} (X) = \Pr(\text{female} = 1 \mid X)$, and the conditional mean of the lower bracket, $\ell_0 (X) = \E[Y_L \mid X]$. Both nuisance components are estimated by $\ell_1$-penalized regressions with $K=2$ fold cross-fitting and the penalty selection rule of \cite{Program}. Note that the second component of the treatment propensity satisfies $\eta_{0,2} (X) = \eta_{0,1} (X) \cdot \text{college}$. We impose the orthogonal moment inequality \eqref{eq:emp-moment} at $L = 180$ equally spaced directions on the unit circle. For inference, we form the preliminary contour at $c_N = \log N$ and calibrate the final level by the balanced-block subsampling procedure of \cref{def:subsampling}. We set $b = 160$, giving $B_N = 203$ disjoint blocks, with each block drawing the same number of observations from each cross-fitting fold. The reported confidence region uses the 0.95 empirical quantile of the block statistics.

We contrast our method with two other procedures: the estimated set based on the non-orthogonal moment function and the infeasible regression using the observed wage. \cref{tab:cps} reports the results along the estimated identified set based on the orthogonal moment function and its 95\% confidence region. The comparison between the orthogonal and non-orthogonal set estimators reveals only a marginal difference. In this application, the numerical difference is small with at most 0.003$\Delta$ difference in each coordinate projection across all bracket widths. The negligible difference can be explained by the sufficiently accurate estimates of the propensity scores in the first step. Finally, a point-estimate resulting from the infeasible regression is well-covered by the estimated sets.

\cref{fig:cps} shows the estimated sets for the three bracket widths. First, it reflects how loss of information created by interval censoring influences the size of the estimated sets. Larger brackets necessarily result in larger estimated sets. Second, it describes how beneficial joint set estimation is compared to its coordinatewise projections imposed independently. Across all three values of $\Delta$, the area of the estimated set $\widehat{\Theta}_I$ is approximately one half of the area of the set resulting from the coordinatewise independent bounds. This discrepancy is a result of dependence between the two coefficient restrictions that is discarded when restrictions are imposed independently.

Two qualifications are important. First, the directional inequalities characterize the outer identified set associated with the restrictions developed in \cref{sec:example2}; as discussed in \cite{Semenova2023}, this set need not be sharp for the underlying structural coefficient $\theta_0$. Second, implementation replaces the continuum of directions by $L = 180$ directions, so the reported estimated set is itself a finite-direction outer approximation. Increasing the number of directions might therefore produce better approximations of the outer set.

\begin{figure}[htbp]
\centering
\includegraphics[width=\textwidth]{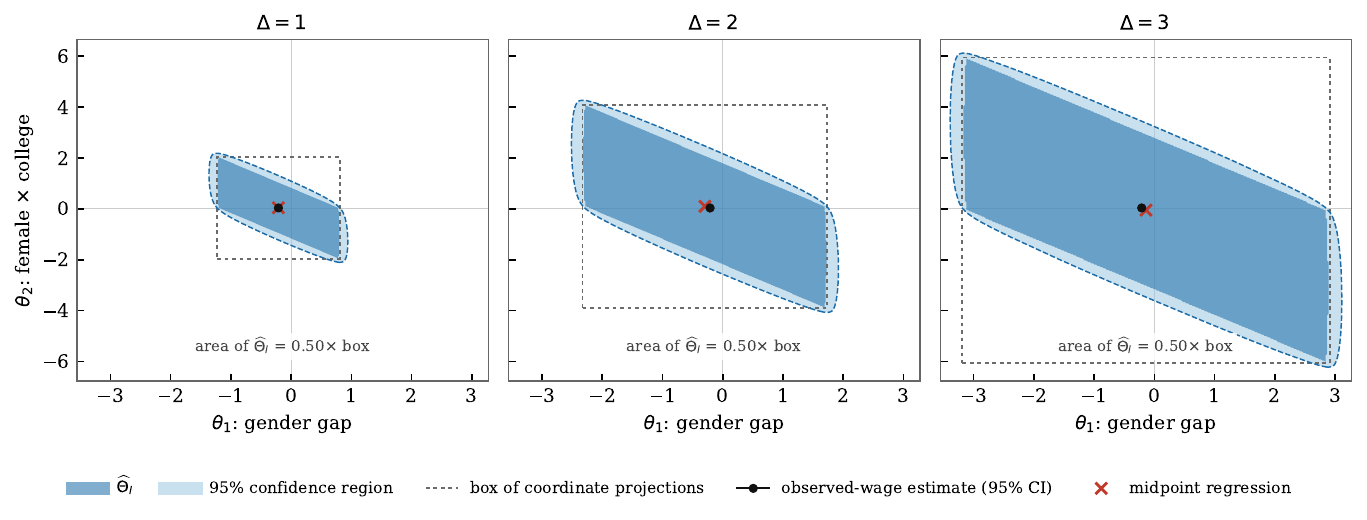}
\caption{\label{fig:cps}Estimated identified set and $95\%$ confidence region for the bracket widths $\Delta\in\{1, 2, 3\}$.}
\end{figure}

\begin{table}[htbp]
\centering
\small
\caption{\label{tab:cps}Bounds on the gender wage gap and its college interaction with a bracketed log wage}
\begin{tabular}{clcccc}
\toprule
$\Delta$ & & $\theta_1$ (gender gap) & $\theta_2$ (female $\times$ college) & area\,/\,box & $\widehat{c}_{\tau}$ \\
\midrule
\multirow{3}{*}{$1$}
 & $\widehat\Theta_I$               & $[-1.227,\ 0.805]$ & $[-1.960,\ 2.040]$ & $0.496$ & \multirow{3}{*}{$16.61$} \\
 & $95\%$ region                    & $(-1.363,\ 0.941)$ & $(-2.103,\ 2.184)$ & & \\
 & uncorrected moment               & $[-1.229,\ 0.807]$ & $[-1.957,\ 2.037]$ & & \\
\midrule
\multirow{3}{*}{$2$}
 & $\widehat\Theta_I$               & $[-2.327,\ 1.736]$ & $[-3.903,\ 4.097]$ & $0.496$ & \multirow{3}{*}{$23.41$} \\
 & $95\%$ region                    & $(-2.511,\ 1.920)$ & $(-4.073,\ 4.268)$ & & \\
 & uncorrected moment               & $[-2.332,\ 1.741]$ & $[-3.896,\ 4.091]$ & & \\
\midrule
\multirow{3}{*}{$3$}
 & $\widehat\Theta_I$               & $[-3.187,\ 2.909]$ & $[-6.050,\ 5.950]$ & $0.496$ & \multirow{3}{*}{$24.31$} \\
 & $95\%$ region                    & $(-3.390,\ 3.112)$ & $(-6.224,\ 6.124)$ & & \\
 & uncorrected moment               & $[-3.194,\ 2.916]$ & $[-6.041,\ 5.941]$ & & \\
\midrule
\multicolumn{2}{l}{observed-wage estimate} & $-0.210\ (0.008)$ & $0.033\ (0.015)$ & & \\
\bottomrule
\end{tabular}

\caption*{\footnotesize Notes. CPS 2015, $N=32{,}523$ workers, $p=259$ controls. Rows labelled $\widehat\Theta_I$ and $95\%$ region report the coordinate projections of the estimated set and of the confidence region $C_N(\widehat{c}_{\tau};\widehat{\xi})$ with $\tau=0.05$. The column area\,/\,box is the ratio of the area of $\widehat\Theta_I$ to the area of that box. The critical value $\widehat{c}_{\tau}$ is the $(1-\tau)$-quantile of $B_N=203$ disjoint block statistics of size $b=160$, balanced across the two folds. The row ``uncorrected moment'' reports the projections obtained from the non-orthogonal specification. The observed-wage estimate is infeasible and is reported for reference, with heteroskedasticity-robust standard errors in parentheses.}
\end{table}

\section{Conclusion}
\label{sec:conclusion}
This paper develops estimation and inference for partially identified models in which the moment inequalities defining the identified set depend on point-identified nuisance functions estimated in a first stage. The approach combines Neyman-orthogonal moment inequalities, cross-fitting, and the criterion-function representation of partially identified sets. Under the stated conditions, estimation of the nuisance functions is second order and the feasible contour set inherits the first-order Hausdorff rate of its infeasible counterpart. For inference, we develop a disjoint-block subsampling procedure adapted to the cross-fitted first stage. It yields asymptotically valid pointwise coverage without strengthening the first-stage small-bias requirement used for set estimation. The examples illustrate different aspects of the framework. In the wage-selection application, integrating the conditional restrictions underlying \citet{BGIM} yields unconditional moment inequalities that can accommodate a rich, machine-learned conditioning set. The interquantile-range problem shows that the same framework extends beyond affine systems. The resulting identified set need not be convex, yet it can be estimated directly as the zero set of the criterion and projected onto the interquantile range without separately inverting estimated bound functions. Finally, the interval-outcome example illustrates a different source of nonregularity. The general theory proposed herein allows continuous, discrete, or mixed treatment designs provided the population remainder associated with the positive-part kink is quadratic in the first-stage error.

The results also indicate several limits of the present analysis. The subsampling guarantees are pointwise in the data-generating process, and the number of moment inequalities is fixed in the asymptotic theory. Developing uniformly valid calibration with regularized first stages, allowing the number of thresholds or directions to increase with the sample size, and extending the theory to moment systems with more general forms of parameter-dependent non-smoothness are natural directions for further work.

\newpage
\appendix

\noindent\emph{\Cref{app:theory} and \Cref{app:proofs} are intended for online publication only.}
\section{Asymptotic theory}
\label{app:theory}
\label{sec:theory}

\subsection{Conditions on the moment function and the first stage}
\label{sec:conditions}

We take the moment function to be Neyman-orthogonal in the sense of \Cref{def:ortho}, on a set of nuisance values that contains the first-stage estimates with probability approaching one. The precise requirement is part (d) of \Cref{ass:smallbiasvalue} below. The paper posits an orthogonal moment rather than constructing one. For point-identified parameters the construction is largely automatic. Orthogonal moments for many functionals have been derived, or can be derived, following \citet{chernozhukov2016double} and \citet{LRSP}. A smooth functional of a nonparametric regression has a Riesz representer, and \citet{chernozhukov2021debiased, chernozhukov2021automatic} estimate that representer directly from the data, so the orthogonal moment is produced by the procedure rather than derived by hand for each application.

\Cref{ass:partid} ensures that the population criterion distinguishes the boundary of the identified set: once $\theta$ is bounded away from $\Theta_{I}$, the moment $\mathrm{E}\,g(Z;\theta,\xi_{0})$ is bounded away from zero by an amount proportional to $d(\theta,\Theta_{I})$. It is the population analogue of Condition~C.2 of \citet[p.~1253]{CHT}, the existence of a polynomial minorant, which bounds the sample criterion below by $\kappa[d(\theta,\Theta_{I})\wedge\delta]^{\gamma}$ outside an $a_{n}^{-1/\gamma}$-neighbourhood of $\Theta_{I}$. Its counterpart in \citet{CLR} is Condition~V, $\theta_{n}(v)-\theta_{n0}\geq(c_{n}d(v,V_{0}))^{\rho_{n}}\wedge\delta$, stated there for an index set $V_{0}$ rather than for a parameter. We depart from these conditions in two respects. First, we impose the bound on the population moment rather than on the sample criterion. Second, we fix the exponent at one on the moment, hence $\gamma=2$ on the criterion, and \Cref{sec:verification} shows that this is not a restriction for affine moment systems or for the interquantile moments of \Cref{sec:example1-iqr}. We keep the truncation at $\delta_{\min}$, as both of them do, so nothing is required of $\theta$ far from $\Theta_{I}$ beyond a fixed positive floor.

\begin{assumption}[Identifiability]
\label{ass:partid}
The identified set $\Theta_{I}$ is non-empty, and there exist constants $C_{\min}>0$ and $\delta_{\min}>0$ such that
\begin{align}
\label{eq:partid}
\big\|\mathrm{E}\,g(Z;\theta,\xi_{0})\big\|_{+} \;\geq\; C_{\min}\,\big[d(\theta,\Theta_{I})\wedge\delta_{\min}\big], \qquad \forall\,\theta\in\Theta.
\end{align}
\end{assumption}

\Cref{ass:donsker} is a functional central limit theorem for the moment function evaluated at the true nuisance.

\begin{assumption}[Donsker Property]
\label{ass:donsker}
In the space $\ell^{\infty}(\Theta)^{L}$,
\begin{align}
\label{eq:pdonsker}
\mathbb{G}_{N}\,g(Z_{i};\theta,\xi_{0}) := \sqrt{N}\big(\mathrm{E}_{N}\,g(Z_{i};\theta,\xi_{0})-\mathrm{E}\,g(Z_{i};\theta,\xi_{0})\big) \;\Rightarrow\; \mathbb{G}(\theta),
\end{align}
where $\mathbb{G}(\theta)$ is a mean-zero Gaussian process on $\Theta$ with almost surely continuous paths.
\end{assumption}

\Cref{ass:smallbiasvalue} controls the impact of first-stage estimation. It is stated on a sequence of neighbourhoods $\Xi_{N}\subset\Xi$ of $\xi_{0}$ that contain the estimate $\widehat{\xi}$ with probability approaching one and shrink as $N$ grows. The rate of shrinkage $g_{N}$ measures the quality of the first stage. The assumption combines Neyman orthogonality, which removes the first-order effect of the estimation error $\widehat{\xi}-\xi_{0}$, with a bound on the second-order remainder, condition~(e).

\begin{assumption}[Orthogonality and First-Stage Quality]
\label{ass:smallbiasvalue}
There exists a sequence of neighbourhoods $\Xi_{N}\subset\Xi$ of $\xi_{0}$ such that the following hold.
\begin{enumerate}
\item[(a)] \emph{(Truth in the neighbourhood.)} The true value $\xi_{0}$ belongs to $\Xi_{N}$ for all $N\geq 1$.
\item[(b)] \emph{(Estimator in the neighbourhood.)} There exists a sequence $\phi_{N}=o(1)$ such that every fold estimate $\widehat{\xi}^{(k)}$, $k=1,\ldots,K$, belongs to $\Xi_{N}$ with probability at least $1-\phi_{N}$. We write $\mathcal{B}_{N}:=\{\widehat{\xi}^{(k)}\in\Xi_{N}\text{ for all }k\}$, so that $\Pr(\mathcal{B}_{N})\geq 1-K\phi_{N}\to 1$.
\item[(c)] \emph{(Consistency.)} The neighbourhoods shrink:
\begin{equation*}
    g_{N} := \sup_{\xi\in\Xi_{N}}\|\xi-\xi_{0}\|_{P,2} \;=\; o(1).
\end{equation*}
\item[(d)] \emph{(Orthogonality.)} The moment function $g(Z;\theta,\xi)$ is Neyman-orthogonal at $\xi_{0}$ on $\Xi_{N}$ in the sense of \Cref{def:ortho}, for every $\theta\in\Theta$.
\item[(e)] \emph{(Small bias.)} There exists a sequence $\bar{s}_{N}=o(N^{-1/2})$ bounding the bias induced by the first stage:
\begin{equation*}
    \sup_{\theta\in\Theta}\sup_{\xi\in\Xi_{N}}\big\|\mathrm{E}\,g(Z;\theta,\xi)-\mathrm{E}\,g(Z;\theta,\xi_{0})\big\| \;\leq\; \bar{s}_{N}.
\end{equation*}
\end{enumerate}
\end{assumption}

Only conditions~(b) and~(e) enter the proofs. Condition~(e) is the only channel through which the first stage enters the bias. Conditions~(a), (c) and~(d) describe the mechanism that makes it attainable at the rate $o(N^{-1/2})$, and they are verified alongside it in \Cref{sec:verification}. When $r\mapsto\mathrm{E}\,g(Z;\theta,\xi_{0}+r(\xi-\xi_{0}))$ is twice differentiable, condition~(d) removes the first-order term and a Taylor expansion gives
\begin{equation}\label{eq:taylor}
\big\|\mathrm{E}\,g(Z;\theta,\xi)-\mathrm{E}\,g(Z;\theta,\xi_{0})\big\| \;\leq\; \tfrac{1}{2}\sup_{r\in[0,1)}\big\|\partial_{r}^{2}\,\mathrm{E}\,g(Z;\theta,\xi_{0}+r(\xi-\xi_{0}))\big\|,
\end{equation}
so a bound on the second Gateaux derivative implies condition~(e). When that derivative is a bounded bilinear form in the nuisance error, condition~(e) holds with $\bar{s}_{N}=Cg_{N}^{2}$, which is the product-of-rates condition of the double/debiased machine-learning literature. We state condition~(e) as a bias bound rather than a derivative bound because the bracketed-wage moment of \Cref{sec:example2} has a kink and is verified in \Cref{sec:verif-emp} without differentiability. No rate is imposed in condition~(c): the first-stage rate enters the variance only through \Cref{ass:concentration}(c).

\Cref{ass:concentration} requires that the moment function $g(Z;\theta,\xi)$ be regular in $\theta$ for each fixed $\xi\in\Xi_{N}\cup\{\xi_{0}\}$, with constants that do not depend on $\xi$. We consider the class $\mathcal{F}_{\xi}:=\{g(\cdot;\theta,\xi):\theta\in\Theta\}$ and require its uniform covering entropy to be bounded. Uniformity over $\xi$ lets the proofs condition on a fold's estimate and integrate it out.

\begin{assumption}[Regularity of Moment Function]
\label{ass:concentration}
The following hold for the class $\mathcal{F}_{\xi}$, for every $N$ and every $\xi\in\Xi_{N}\cup\{\xi_{0}\}$, with constants $(c,c_{1},a,v)$ that depend on neither $\xi$ nor $N$.
\begin{enumerate}
\item[(a)] \emph{(Envelope.)} There exists a measurable envelope $F_{\xi}=F_{\xi}(Z)$ that bounds all elements in the function class almost surely:
\begin{equation*}
    \sup_{\theta\in\Theta}|g_{l}(Z;\theta,\xi)| \;\leq\; F_{\xi}(Z) \quad \text{a.s.}, \qquad l\in\{1,\dots,L\},
\end{equation*}
where $g=(g_{1},\dots,g_{L})$. Moreover, the envelope $F_{\xi}$ has a finite $c$-norm for some $c>2$: $\|F_{\xi}\|_{P,c}:=(\int|F_{\xi}(z)|^{c}\,dP(z))^{1/c}\leq c_{1}$.
\item[(b)] \emph{(Entropy.)} There exist finite constants $a\geq e$ and $v\geq 1$ such that the uniform covering entropy of the class $\mathcal{F}_{\xi}$ is bounded:
\begin{align}
\label{eq:entropy}
\sup_{\tilde{Q}}\log \mathcal{N}\big(\epsilon\|F_{\xi}\|_{\tilde{Q},2},\;\mathcal{F}_{\xi},\;\|\cdot\|_{\tilde{Q},2}\big) \;\leq\; v\log(a/\epsilon), \qquad 0<\epsilon\leq 1,
\end{align}
where $\mathcal{N}(\epsilon,\mathcal{F},\|\cdot\|)$ is the $\epsilon$-covering number of $\mathcal{F}$ in the norm $\|\cdot\|$ and the supremum is over finitely discrete probability measures $\tilde{Q}$.
\item[(c)] \emph{(Continuity in the nuisance.)} There exists a sequence $r_{N}'$ obeying $r_{N}'\sqrt{\log N}=o(1)$ that bounds the mean-square distance between the moment at a nuisance in the neighbourhood and at the truth:
\begin{equation*}
    \sup_{\theta\in\Theta}\sup_{\xi\in\Xi_{N}}\big(\mathrm{E}\|g(Z;\theta,\xi)-g(Z;\theta,\xi_{0})\|^{2}\big)^{1/2} \;\leq\; r_{N}'.
\end{equation*}
\end{enumerate}
\end{assumption}

\Cref{ass:concentration} generalizes the regularity assumption in the point-identified moment problem of \citet{chernozhukov2016double} to the partially identified case.

\subsection{Rate of convergence}
\label{sec:main}

The first result is the rate at which the set estimator of \Cref{def:setestim} converges to $\Theta_I$ in Hausdorff distance. Because $g$ is Neyman-orthogonal in $\xi$, the first-stage estimation error does not affect the first-order asymptotics, and the criterion-function theory of \citet{CHT} applies to the feasible criterion. The rate is that of the infeasible contour set built on the true nuisance, the case $a_{n}=N$, $\gamma=2$ of Theorem~3.1(2) in \citet{CHT}. The level must satisfy the two hypotheses of that theorem: it contains the infeasible statistic with probability approaching one, which makes $\widehat{\Theta}_{I}$ cover $\Theta_{I}$, and it is of smaller order than $N$.

\begin{theorem}[Consistency and Rate of Convergence]
\label{thm:main:ineq}
Suppose \Crefrange{ass:partid}{ass:concentration} hold. Let $\widehat{\Theta}_{I}$ be the contour set estimator of \Cref{def:setestim} with $\widehat{c}$ satisfying
\begin{align}
\label{eq:minc:inf}
\Pr\!\Big(\sup_{\theta\in\Theta_{I}}N\,Q_{N}(\theta,\widehat{\xi})\leq\widehat{c}\Big) \;=\; 1-o(1),
\qquad \widehat{c}/N \to_{p} 0.
\end{align}
Then $\Theta_{I}\subseteq\widehat{\Theta}_{I}$ with probability approaching one, and the Hausdorff distance between $\widehat{\Theta}_{I}$ and $\Theta_{I}$ satisfies
\begin{align}
d_{H}(\widehat{\Theta}_{I},\Theta_{I}) = O_{P}\!\left(\sqrt{(1\vee\widehat{c})/N}\right).
\end{align}
\end{theorem}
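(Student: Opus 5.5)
The plan is to reduce the feasible criterion to the infeasible one through a uniform approximation, and then run the Chernozhukov--Hong--Tamer argument for the Hausdorff rate with $\gamma=2$. The central object is the centered cross-fitted empirical process
\[
\Delta_N(\theta):=\frac{1}{N}\sum_{i=1}^{N}\big[g(Z_i;\theta,\widehat{\xi}_i)-g(Z_i;\theta,\xi_0)\big].
\]
The first step is to show that
\[
\sup_{\theta\in\Theta}\|\Delta_N(\theta)\|=o_P(N^{-1/2}).
\]
Split $\Delta_N$ fold by fold. On fold $J_k$, write $\Delta_N$ as a conditional-mean part plus a conditionally centered part, conditioning on the observations outside $J_k$ so that $\widehat{\xi}^{(k)}$ is fixed.

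The conditional-mean part is $\mathrm{E}\,g(Z;\theta,\widehat{\xi}^{(k)})-\mathrm{E}\,g(Z;\theta,\xi_0)$. On $\mathcal{B}_N$, \Cref{ass:smallbiasvalue}(e) bounds it uniformly in $\theta$ by $\bar s_N=o(N^{-1/2})$.

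The centered part is an empirical process over $n=N/K$ i.i.d. observations. It is indexed by the class $\{g(\cdot;\theta,\xi)-g(\cdot;\theta,\xi_0):\theta\in\Theta\}$ with $\xi=\widehat{\xi}^{(k)}$ held fixed. This class has:
\begin{itemize}
\item envelope $F_\xi+F_{\xi_0}$ with bounded $c$-norm, by \Cref{ass:concentration}(a);
\item uniform entropy bounded by $2v\log(a/\epsilon)$, since it is a difference of two VC-type classes;
\item variance bounded by $r_N'^2$, by \Cref{ass:concentration}(c).
\end{itemize}
A maximal inequality of the Chernozhukov--Chetverikov--Kato type, applied conditionally on $\widehat{\xi}^{(k)}$, then gives a conditional bound of order
\[
n^{-1/2}\Big(r_N'\sqrt{\log(a/r_N')}+n^{-1/2+1/c}\log n\Big).
\]
This is $o(N^{-1/2})$ because $r_N'\sqrt{\log N}=o(1)$. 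Since the constants do not depend on $\xi$, integrating out $\widehat{\xi}^{(k)}$ and summing over the fixed number $K$ of folds yields the claim. I expect this step to be the main obstacle. Care is needed that the entropy and envelope constants are uniform over $\Xi_N$, and that the event $\mathcal{B}_N$ is handled by conditioning on it, which costs only probability $K\phi_N$.

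With the uniform approximation in hand, the rest follows. Let $m(\theta):=\mathrm{E}\,g(Z;\theta,\xi_0)$ and
\[
\bar g_N(\theta):=\frac{1}{N}\sum_{i=1}^{N} g(Z_i;\theta,\widehat{\xi}_i).
\]
Then uniformly in $\theta$,
\[
\bar g_N(\theta)=m(\theta)+N^{-1/2}\mathbb{G}_N g(\cdot;\theta,\xi_0)+o_P(N^{-1/2}),
\]
and by \Cref{ass:donsker} the empirical process term is $O_P(1)$ uniformly. The map $v\mapsto\|v\|_+$ is 1-Lipschitz, so $\big|\|\bar g_N(\theta)\|_+-\|m(\theta)\|_+\big|\le W_N/\sqrt N$ uniformly, where $W_N=O_P(1)$.

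Coverage of $\Theta_I$ is then direct. By the definition of $C_N$, $\Theta_I\subseteq\widehat{\Theta}_I$ on the event in \eqref{eq:minc:inf}, which has probability $1-o(1)$.

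For the other direction of the Hausdorff distance, take $\theta\in\widehat{\Theta}_I$. Then $\|\bar g_N(\theta)\|_+\le\sqrt{\widehat c/N}$. Combining this with the Lipschitz bound and \Cref{ass:partid} gives
\[
C_{\min}\big[d(\theta,\Theta_I)\wedge\delta_{\min}\big]\le\sqrt{\widehat c/N}+W_N/\sqrt N.
\]
Since $\widehat c/N\to_p0$, the right-hand side eventually falls below $C_{\min}\delta_{\min}$ with probability approaching one. The truncation then becomes inactive, so
\[
\sup_{\theta\in\widehat{\Theta}_I}d(\theta,\Theta_I)\le C_{\min}^{-1}\Big(\sqrt{\widehat c/N}+W_N/\sqrt N\Big)=O_P\Big(\sqrt{(1\vee\widehat c)/N}\Big).
\]
Finally, $\Theta_I\subseteq\widehat{\Theta}_I$ with probability approaching one, so the other half of $d_H$ is zero on that event, and the stated rate follows.
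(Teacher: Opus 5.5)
Your proposal is correct and follows essentially the same route as the paper, which writes $\sqrt{N}\,\mathrm{E}_{N}\,g(Z_i;\theta,\widehat{\xi}_i)=\mu_N+R_N$ with $\|R_N\|=O_P(1)$ uniformly in $\theta$ (obtained from the Donsker property, \Cref{lem:concentrate} applied fold by fold, and the small-bias bound) and then combines the reverse triangle inequality with the linear minorant of \Cref{ass:partid}, exactly as you do. The difference is only in packaging: the paper phrases this as verifying the separation condition \Cref{ass:cons}(d) and then invokes the generic argument of \Cref{lem:rate}, whereas you invert the minorant directly in one step, and your $o_P(N^{-1/2})$ bound on the nuisance effect is stronger than the $O_P(N^{-1/2})$ the rate actually requires.
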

Any deterministic level $c_{N}\to\infty$ with $c_{N}/N\to 0$ satisfies \eqref{eq:minc:inf}, because $\sup_{\theta\in\Theta_{I}}N\,Q_{N}(\theta,\widehat{\xi})=O_{P}(1)$ under the same assumptions (Step~4 of the proof of \Cref{lem:rate}).

\Cref{thm:main:ineq} says that the level should be as small as the coverage requirement allows. A level $\widehat{c}=O_{P}(1)$ that covers delivers the parametric rate $N^{-1/2}$, but no feasible level does both in general: the infeasible statistic $\sup_{\theta\in\Theta_{I}}N\,Q_{N}(\theta,\widehat{\xi})$ is $O_{P}(1)$ but unknown, a deterministic level $c_{N}\to\infty$ covers but is unbounded, and the subsampling level $\widehat{c}_{\tau}$ of \Cref{sec:mi-subsampling} is bounded but covers only with probability $1-\tau$. For $c_{N}=\log N$ the rate is $\sqrt{\log N/N}$, slower than the parametric rate by the factor $\sqrt{\log N}$. For the confidence region $C_{N}(\widehat{c}_{\tau};\widehat{\xi})$ the proof of \Cref{thm:main:ineq} still gives the outer bound $\sup_{\theta\in C_{N}(\widehat{c}_{\tau};\widehat{\xi})}d(\theta,\Theta_{I})=O_{P}(N^{-1/2})$, since that direction does not use containment.

The parametric rate is attainable without a covering level when the criterion is \emph{degenerate}: the sample criterion vanishes on data-dependent sets that approximate $\Theta_{I}$ at rate $N^{-1/2}$, so that any bounded level that is at least the sample minimum already captures those sets.

\begin{definition}[Degeneracy]
\label{def:degeneracy}
There exists a sequence of possibly data-dependent subsets $\Theta_{N}\subset\Theta$ such that the cross-fitted criterion is minimal on $\Theta_{N}$ with probability approaching one and $\Theta_{N}$ approximates $\Theta_{I}$ at the parametric rate: for all $p\in(0,1)$, there exists $N_{p}$ such that for all $N\geq N_{p}$,
\begin{align}
\Pr\!\left(\sup_{\theta\in\Theta_{N}}\big\{Q_{N}(\theta,\widehat{\xi})-\inf_{\tilde{\theta}\in\Theta}Q_{N}(\tilde{\theta},\widehat{\xi})\big\}=0\right) \;\geq\; 1-p,
\qquad d_{H}(\Theta_{N},\Theta_{I})=O_{P}(N^{-1/2}).
\end{align}
This is Condition~C.3 of \citet{CHT} with $a_{n}=N$ and $\gamma=2$.
\end{definition}

\begin{lemma}[Sufficient Conditions for Degeneracy]
\label{lem:degeneracy}
Suppose \Cref{ass:partid,ass:donsker,ass:smallbiasvalue,ass:concentration} hold. If in addition $\Theta_{I}$ has non-empty interior and there exist positive constants $C,M,\delta$ such that, for all $\epsilon\in[0,\delta]$, $\Theta_{I}^{-\epsilon}\neq\varnothing$ and
\begin{align}
\label{eq:deg}
\max_{l}\,\mathrm{E}\,g_{l}(Z;\theta,\xi_{0}) \;\leq\; -C\,(\epsilon\wedge\delta) \quad \text{for all }\theta\in\Theta_{I}^{-\epsilon}, \qquad d_{H}(\Theta_{I}^{-\epsilon},\Theta_{I}) \;\leq\; M\epsilon,
\end{align}
where $\Theta_{I}^{-\epsilon}:=\{\theta\in\Theta_{I}:B(\theta,\epsilon)\subseteq\Theta_{I}\}$, with $B(\theta,\epsilon)$ the closed ball in $\mathbb{R}^{d_\theta}$, is the $\epsilon$-contraction of $\Theta_I$ in the sense of \citet{CHT}, then the degeneracy condition of \Cref{def:degeneracy} holds. Moreover, for any level $\widehat{c}'=O_{P}(1)$ with
\begin{align}
\label{eq:minc:deg}
\widehat{c}' \;\geq\; N\min_{\theta\in\Theta}Q_{N}(\theta,\widehat{\xi}) \qquad \text{with probability approaching one},
\end{align}
the contour set satisfies $d_{H}(C_{N}(\widehat{c}';\widehat{\xi}),\Theta_{I})=O_{P}(N^{-1/2})$.
\end{lemma}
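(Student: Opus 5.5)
The plan is to construct the degeneracy sets explicitly as contractions, $\Theta_N:=\Theta_I^{-\epsilon_N}$ with $\epsilon_N:=M_p N^{-1/2}$ for a constant $M_p$ to be chosen. Two facts then need checking. First, the cross-fitted criterion vanishes identically on $\Theta_N$ with probability at least $1-p$, so that it attains its infimum (zero) there. Second, $d_H(\Theta_N,\Theta_I)\le M\epsilon_N=O(N^{-1/2})$, which is immediate from \eqref{eq:deg} once $\epsilon_N\le\delta$; nonemptiness of $\Theta_I^{-\epsilon_N}$ is also part of the hypothesis.

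For the first fact, decompose on $\mathcal{B}_N$:
\begin{equation*}
\frac{1}{N}\sum_{i=1}^{N} g(Z_i;\theta,\widehat\xi_i)=\mathrm{E}\,g(Z;\theta,\xi_0)+N^{-1/2}\mathbb{G}_N g(Z_i;\theta,\xi_0)+R_{1}(\theta)+R_{2}(\theta).
\end{equation*}
Here $R_1(\theta)=K^{-1}\sum_k\{\mathrm{E}[g(Z;\theta,\xi)]_{\xi=\widehat\xi^{(k)}}-\mathrm{E}\,g(Z;\theta,\xi_0)\}$ is the bias term. By \Cref{ass:smallbiasvalue}(e) it is uniformly $\le\bar s_N=o(N^{-1/2})$.

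The remainder $R_2$ is the fold-wise empirical process of $g(\cdot;\theta,\widehat\xi^{(k)})-g(\cdot;\theta,\xi_0)$. I would bound it exactly as in the proof of \Cref{thm:main:ineq} (which I take as already available, in particular its Step~4 / \Cref{lem:rate}). Conditionally on the data outside fold $k$, apply a maximal inequality for the class with bounded uniform entropy (\Cref{ass:concentration}(a),(b)) and $L^2$ radius $r_N'$ (\Cref{ass:concentration}(c)). This gives $\sup_\theta\|R_2(\theta)\|=O_P(n^{-1/2}(r_N'\sqrt{\log N}+N^{-1/2+1/c}\log N))=o_P(N^{-1/2})$.

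Combining the pieces: \Cref{ass:donsker} gives $\sup_\theta\|\mathbb{G}_N g\|=O_P(1)$ by the continuous mapping theorem. Hence there is $A_p$ with $\Pr(\sup_\theta\|\mathbb{G}_Ng\|+\sqrt N\sup_\theta\|R_1+R_2\|\le A_p)\ge 1-p$ for large $N$. Take $M_p:=2A_p/C$. For $\theta\in\Theta_N$, \eqref{eq:deg} gives $\max_l\mathrm{E}\,g_l\le -C\epsilon_N=-2A_pN^{-1/2}$, since $\epsilon_N\le\delta$ eventually. So on that event every coordinate of the sample moment is at most $-A_pN^{-1/2}<0$. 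Thus $Q_N(\theta,\widehat\xi)=0=\inf_{\tilde\theta}Q_N$ on all of $\Theta_N$, which verifies \Cref{def:degeneracy}.

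For the contour set claim, I would invoke the degenerate case of Theorem~3.1 of \citet{CHT} (Condition C.3 with $a_n=N,\gamma=2$), or reprove it directly.

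\emph{Inner bound.} With probability approaching one, $\Theta_N\subseteq C_N(\widehat c';\widehat\xi)$, because $NQ_N=0\le N\min Q_N\le\widehat c'$ there. Hence $\sup_{\theta\in\Theta_I}d(\theta,C_N)\le d_H(\Theta_N,\Theta_I)=O_P(N^{-1/2})$.

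\emph{Outer bound.} This is the outer-bound part of the proof of \Cref{thm:main:ineq}, which the paper notes does not use containment. For $\theta\in C_N$, $\sqrt{\widehat c'/N}\ge\|\text{sample moment}\|_+\ge C_{\min}[d(\theta,\Theta_I)\wedge\delta_{\min}]-O_P(N^{-1/2})$, using \Cref{ass:partid}, the uniform $O_P(N^{-1/2})$ bound above, and the $1$-Lipschitz property of $\|\cdot\|_+$. With $\widehat c'=O_P(1)$ this gives $\sup_{\theta\in C_N}d(\theta,\Theta_I)=O_P(N^{-1/2})$.

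The main obstacle is the uniform $o_P(N^{-1/2})$ control of $R_2$ under cross-fitting, which requires conditioning on each fold's estimate and using the entropy bound uniformly over $\xi\in\Xi_N$. I plan to cite it from the rate proof rather than redo it.
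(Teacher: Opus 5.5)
Your decomposition of the cross-fitted sample moment into the oracle term, the fold-conditional bias and the nuisance empirical process is exactly Step~1 of the paper's proof, and so is the use of \Cref{lem:concentrate} and \Cref{ass:smallbiasvalue}(e) to make the last two $o_P(N^{-1/2})$. The gap is in the radius. Your $\Theta_N=\Theta_I^{-M_pN^{-1/2}}$ depends on $p$, so you only show that for each $p$ some sequence works with probability at least $1-p$. \Cref{def:degeneracy} instead asks for one sequence $\Theta_N$, not depending on $p$, on which the criterion is minimal with probability approaching one and $d_H(\Theta_N,\Theta_I)=O_P(N^{-1/2})$.

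No deterministic radius delivers both. For fixed $M$, the probability that $Q_N(\cdot,\widehat\xi)$ vanishes on $\Theta_I^{-MN^{-1/2}}$ stays bounded away from one. In the interval case $\ell_0\leq\theta\leq u_0$ it tends to the probability that a nondegenerate bivariate Gaussian vector has both coordinates at most $M$. Letting $M_N\to\infty$, for example diagonally, destroys the rate. Every boundary point of $\Theta_I$ is at distance at least $\epsilon$ from $\Theta_I^{-\epsilon}$, so $d_H(\Theta_I^{-\epsilon},\Theta_I)\geq\epsilon$.

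The repair is the paper's random radius $\widehat\epsilon_N:=2(\mathbb{M}_N+1)/(C\sqrt N)$, where $\mathbb{M}_N:=\sup_{\theta,l}|\mathbb{G}_N g_l(Z_i;\theta,\xi_0)|=O_P(1)$. Work on the event, of probability approaching one, where bias plus nuisance empirical process is at most $1$ in each scaled coordinate and $\widehat\epsilon_N\leq\delta$. Because \eqref{eq:deg} holds simultaneously for all $\epsilon\in[0,\delta]$, it applies at this random radius. Every coordinate of the sample moment is then at most $-(\mathbb{M}_N+1)/\sqrt N<0$ on $\Theta_I^{-\widehat\epsilon_N}$, and $d_H(\Theta_I^{-\widehat\epsilon_N},\Theta_I)\leq M\widehat\epsilon_N=O_P(N^{-1/2})$. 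Your contour-set rate does not suffer from this problem, since $O_P(N^{-1/2})$ is itself a for-every-$p$ statement and your $p$-dependent sets suffice there.

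For the outer half of that rate your route legitimately differs from the paper's. You bound $\sup_{\theta\in C_N(\widehat c';\widehat\xi)}d(\theta,\Theta_I)$ directly from \Cref{ass:partid}, the $1$-Lipschitz property of $\|\cdot\|_+$ and the uniform $O_P(N^{-1/2})$ deviation of the cross-fitted moment from $\mathrm{E}\,g(Z;\theta,\xi_0)$. This is correct and self-contained. The paper instead sets $\widehat c:=\max\{\widehat c',\sup_{\theta\in\Theta_I}N\,Q_N(\theta,\widehat\xi)\}$, which is $O_P(1)$ and satisfies \eqref{eq:minc:inf}. It then sandwiches $\Theta_N\subseteq C_N(\widehat c';\widehat\xi)\subseteq C_N(\widehat c;\widehat\xi)$ and reads the outer bound off \Cref{thm:main:ineq}. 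That reuses the theorem, at the cost of checking its level condition through \Cref{ass:cons}(b) and \Cref{ass:noeffect}(b), which your direct argument avoids.
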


\citet[p.~1255]{CHT} impose Condition~C.3 rather than derive it, and motivate it by the interval-data examples, where the criterion is identically zero on the interior of the identified set with probability approaching one. Their Theorem~3.2 shows that under it the contour set at any level $\widehat{c}'\geq\inf_{\theta}a_{n}Q_{n}(\theta)$ with $\widehat{c}'=O_{P}(1)$ attains the exact rate $a_{n}^{-1/\gamma}$. \Cref{lem:degeneracy} is the two-stage version of that result: it gives primitive sufficient conditions for degeneracy when the criterion carries an estimated nuisance, and its level condition \eqref{eq:minc:deg} is the analogue of theirs. The gain is in the \emph{estimation} rate only: a level $\widehat{c}'$ satisfying \eqref{eq:minc:deg} may be far too small to satisfy \eqref{eq:minc:inf}, so $C_{N}(\widehat{c}';\widehat{\xi})$ is a set estimator and not a confidence region. The level $\widehat{c}'=0$ is admissible whenever the sample criterion attains zero, which is the interval-data case.

\subsection{Subsampling inference}
\label{sec:mi-subsampling}

To construct a confidence region $C_{N}(\widehat{c}_{\tau};\widehat{\xi})$ with the coverage property
\begin{align}
\label{eq:conf}
\Pr\!\big(\Theta_{I}\subseteq C_{N}(\widehat{c}_{\tau};\widehat{\xi})\big) \;=\; 1-\tau+o(1),
\end{align}
we must find the asymptotic distribution of the inferential statistic
\begin{align}
\mathcal{C}_{N} := \sup_{\theta\in\Theta_{I}} N\,Q_{N}(\theta,\xi_{0})
\end{align}
and use it to estimate the $(1-\tau)$-quantile. Define the limiting random variable
\begin{align}
\label{eq:infstat:moment}
\mathcal{T} := \sup_{\theta\in\Theta_{I}}\|I(\theta)+\mathbb{G}(\theta)\|_{+}^{2},
\end{align}
where $I(\theta)$ is an $L$-vector with components, and the convention $-\infty+x=-\infty$, $\max\{-\infty,0\}=0$ is used inside $\|\cdot\|_{+}$,
\begin{equation}
    I_{l}(\theta) := \begin{cases} -\infty, & \text{if } \mathrm{E}\,g_{l}(Z;\theta,\xi_{0})<0, \\ 0, & \text{if } \mathrm{E}\,g_{l}(Z;\theta,\xi_{0})=0. \end{cases}
\end{equation}

The convergence $\mathcal{C}_{N}\Rightarrow\mathcal{T}$ is Condition~C.4 of \citet{CHT}, who assume it in their general theory and verify it for moment-condition models in their Section~4, where the limit takes the form \eqref{eq:infstat:moment}. In a similar fashion, we state it as \Cref{ass:c4}, with continuity of the cdf required on $(0,\infty)$ only, so that an atom at zero is allowed. \Cref{thm:main:subs} requires in addition that the cdf be strictly increasing near the quantile being estimated, the condition under which the subsampling quantile converges.

The critical value is the subsampling quantile of \Cref{def:subsampling}, computed at a deterministic preliminary level $c_N\to\infty$ with $c_N/N\to 0$, for instance $c_N=\log N$, which delivers the rate $\sqrt{\log N/N}$ of \Cref{thm:main:ineq}. We depart from \citet{CHT} in one implementation detail: their Theorem~3.3 is stated for all $\binom{N}{b}$ subsets of size $b$, with a randomly chosen subcollection allowed in practice, whereas we use $\lfloor N/b\rfloor$ \emph{disjoint} blocks. The reason is computational: with a cross-fitted first stage, each subsample statistic requires re-evaluating the criterion, so the number of subsamples is the binding cost, and the price is a larger Monte Carlo error in $\widehat{G}$, of order $B_{N}^{-1/2}$. Disjoint blocks are also independent, which Step~2 of the proof of \Cref{thm:main:subs} uses. The critical value must be an upper quantile: coverage obtains on the event $\{\mathcal{C}_{N}\leq\widehat{c}_{\tau}\}$, so $\Pr(\mathcal{C}_{N}\leq\widehat{c}_{\tau})\to 1-\tau$ requires $\widehat{c}_{\tau}\to_{p}q_{1-\tau}(\mathcal{T})$.

Subsampling also requires that the feasible criterion stay close to the infeasible criterion, at the block scale $b$, on the $\epsilon_{N}$-expansion of the identified set, $\epsilon_{N}:=\sqrt{(1\vee c_{N})/N}$, where the block statistics are evaluated. The proof requires one condition for this beyond those of \Cref{thm:main:ineq}, a linear majorant that reverses \Cref{ass:partid}.

\begin{assumption}[Linear Majorant]
\label{ass:dominance}
There exists $C_{\max}<\infty$ such that $\|\mathrm{E}\,g(Z;\theta,\xi_{0})\|_+\leq C_{\max}\,d(\theta,\Theta_{I})$ for all $\theta\in\Theta$.
\end{assumption}

No strengthening of the first-stage conditions is required for coverage. On a block of size $b$ the first-stage bias enters at scale $\sqrt{b}\,\bar{s}_{N}$ and the expansion of the identified set at scale $\sqrt{b}\,\epsilon_{N}=\sqrt{b(1\vee c_{N})/N}$, the first vanishes because $b\leq N$, and the block-size condition $b\,c_{N}/N\to 0$ of \Cref{thm:main:subs} makes the second vanish. The full-sample analogue of this step would require $\bar{s}_{N}=o(N^{-1/2}(\log N)^{-1/2})$ at $c_{N}=\log N$, which the subsampling construction does not require (\Cref{lem:rateineq}).

Finally, we impose two conditions as Conditions~C.4 and C.5 of \citet{CHT}, stated at the block size.

\begin{assumption}[Limit Distribution]
\label{ass:c4}
    The infeasible inferential statistic $\mathcal{C}_{N}:=\sup_{\theta\in\Theta_{I}}N\,Q_{N}(\theta,\xi_{0})$ converges in distribution to the law $\mathcal{T}$ of \eqref{eq:infstat:moment}, and the cdf of $\mathcal{T}$ is continuous on $(0,\infty)$. An atom at zero is allowed. It occurs exactly when, with positive probability, every binding coordinate of $\mathbb{G}(\theta)$ is non-positive at every $\theta\in\Theta_{I}$. This is Condition~C.4 of \citet{CHT}, who verify it for moment-condition models in their Section~4, with continuity required on $(0,\infty)$ rather than on $[0,\infty)$.
\end{assumption}

\begin{assumption}[Approximability]
\label{ass:c5}
Let $\epsilon_{N}:=\sqrt{(1\vee c_{N})/N}$ be the Hausdorff rate of \Cref{thm:main:ineq} at the level $c_{N}$, and let $b=b_{N}\to\infty$ with $b\,c_{N}/N\to 0$. For every fixed $C_{0}>0$ and every sequence of non-empty compact sets $K_{N}\subset\Theta$ with $d_{H}(K_{N},\Theta_{I})\leq C_{0}\epsilon_{N}$, which may depend on the data of the block, the block statistic $T_{b}':=\sup_{\theta\in K_{N}}b\,Q_{j,b}(\theta,\xi_{0})$ computed on a block of size $b$ satisfies
\begin{align*}
\Pr(T_{b}'\leq x)-\Pr(\mathcal{T}\leq x) = o(1) \qquad\text{at every continuity point $x$ of the cdf of $\mathcal{T}$.}
\end{align*}
The assumption has two parts. The first is that the block statistic built on the \emph{true} identified set has the limit law $\mathcal{T}$. That is \Cref{ass:c4} applied at sample size $b$, and it requires only $b\to\infty$. The second is that replacing $\Theta_{I}$ by a set within $C_{0}\epsilon_{N}$ of it, including the data-dependent sets that the two-sided bounds in the proof of \Cref{thm:main:subs} produce, does not move the limit law. Here the block size and the Hausdorff rate interact: the perturbation is negligible when $\sqrt{b}\,\epsilon_{N}\to 0$, which for $\epsilon_{N}\asymp\sqrt{\log N/N}$ holds whenever $b=o(N/\log N)$. It depends on the local geometry of $\Theta_{I}$ near the binding moments; \Cref{lem:lipschitz} gives sufficient conditions.
\end{assumption}

When $\theta\mapsto g(z;\theta,\xi)$ is discontinuous, as for the interquantile moments \eqref{eq:iqrmoments}, contour sets need not be compact, and we use a uniform form of \Cref{ass:c5}. For $C_{0}>0$, let $\mathcal{S}_{N}(C_{0})$ be the collection of non-empty sets $S\subset\Theta$ with $d_{H}(S,\Theta_{I})\leq C_{0}\epsilon_{N}$, and let $T_{b}:=\sup_{\theta\in\Theta_{I}}b\,Q_{j,b}(\theta,\xi_{0})$. The condition is that $T_{b}\Rightarrow\mathcal{T}$ and that, for every fixed $C_{0}>0$, there is a function $\Delta_{b}$ of the block data, the same for every block, with $\Delta_{b}\to_{p}0$ and
\begin{equation}\label{eq:c5unif}
\sup_{S\in\mathcal{S}_{N}(C_{0})}\bigg|\sup_{\theta\in S}\sqrt{b\,Q_{j,b}(\theta,\xi_{0})}-\sqrt{T_{b}}\bigg| \;\leq\; \Delta_{b} .
\end{equation}
This condition implies \Cref{ass:c5}, since every admissible $K_{N}$ belongs to $\mathcal{S}_{N}(C_{0})$.
\begin{theorem}[Validity of Subsampling]
\label{thm:main:subs}
Suppose \Crefrange{ass:partid}{ass:concentration} and \Cref{ass:dominance,ass:c4} hold, that either $\theta\mapsto g(z;\theta,\xi)$ is continuous for every $z$ and every $\xi\in\Xi_{N}\cup\{\xi_{0}\}$ and \Cref{ass:c5} holds, or \eqref{eq:c5unif} holds, and that the block size satisfies $b\to\infty$ with $b\,c_{N}/N\to 0$. Let $\tau\in(0,1)$, with $1-\tau$ the nominal confidence coefficient, and suppose $\Pr(\mathcal{T}=0)<1-\tau$ and that the cdf of $\mathcal{T}$ is strictly increasing on a neighbourhood of $q_{1-\tau}(\mathcal{T})$. Then:
\begin{enumerate}
\item[(1)] The critical value $\widehat{c}_{\tau}$ of \Cref{def:subsampling} converges in probability to the $(1-\tau)$-quantile $q_{1-\tau}(\mathcal{T})$ of $\mathcal{T}$, where $\mathcal{T}$ is given in \eqref{eq:infstat:moment}.
\item[(2)] $\Pr(\Theta_{I}\subseteq C_{N}(\widehat{c}_{\tau};\widehat{\xi}))= 1-\tau+o(1)$.
\end{enumerate}
\end{theorem}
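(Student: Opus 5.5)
The proof proceeds in three steps: a full-sample reduction, a bracketing of each block statistic, and an empirical-cdf argument over independent blocks. Part (2) follows from part (1) together with a full-sample comparison between the feasible and infeasible statistics, so part (1) is the core.

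\emph{Full-sample reduction.} Coverage holds on the event $\{\sup_{\theta\in\Theta_I}N Q_N(\theta,\widehat\xi)\leq\widehat c_\tau\}$. By the argument of Step~4 in the proof of \Cref{lem:rate}, I would show
\[
\sup_{\theta\in\Theta_I}\big|\sqrt{N Q_N(\theta,\widehat\xi)}-\sqrt{N Q_N(\theta,\xi_0)}\big|=o_P(1).
\]
The argument conditions on the complementary folds and uses that $\|\cdot\|_+$ is $1$-Lipschitz. The difference $\sqrt N\,\mathrm{E}_N[g(\cdot,\widehat\xi)-g(\cdot,\xi_0)]$ then splits into two terms. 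The bias term is at most $\sqrt N\bar s_N=o(1)$ by \Cref{ass:smallbiasvalue}(e). The empirical-process term is bounded by a maximal inequality using the entropy bound of \Cref{ass:concentration}(b) and the continuity rate $r_N'$ of \Cref{ass:concentration}(c). Hence $\sup_{\Theta_I}NQ_N(\cdot,\widehat\xi)=\mathcal C_N+o_P(1)\Rightarrow\mathcal T$ by \Cref{ass:c4}. Given part (1), continuity of the cdf of $\mathcal T$ at $q_{1-\tau}>0$ then yields (2). Here $q_{1-\tau}>0$ because $\Pr(\mathcal T=0)<1-\tau$.

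\emph{Bracketing the block statistics.} For each block $j$, I would sandwich $\widehat{\mathcal T}_{j,b}$ between infeasible statistics. On $\mathcal B_N$, \Cref{thm:main:ineq} gives $\Theta_I\subseteq C_N(c_N;\widehat\xi)\subseteq\Theta_I^{C_0\epsilon_N}$ with probability approaching one. Define
\[
\underline T_j:=\sup_{\Theta_I}bQ_{j,b}(\cdot,\xi_0),\qquad \overline T_j:=\sup_{\Theta_I^{C_0\epsilon_N}}bQ_{j,b}(\cdot,\xi_0).
\]
Replacing $\widehat\xi$ by $\xi_0$ inside a block costs two terms:
\[
\sqrt b\,\bar s_N\leq\sqrt{b/N}\cdot o(1),
\]
plus a centered block empirical-process term. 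The block draws $b/K$ observations from each fold, and these are independent of $\widehat\xi^{(k)}$. Conditional on the other folds, the same maximal inequality therefore bounds this term by $O_P(r_N'\sqrt{\log b})$ uniformly over $\theta\in\Theta$, with the bound holding uniformly in $j$ in distribution. The result is
\[
\sqrt{\underline T_j}-\Delta_{j}\leq\sqrt{\widehat{\mathcal T}_{j,b}}\leq\sqrt{\overline T_j}+\Delta_{j},
\]
with $\Delta_j\to_p0$ and identically distributed across $j$. \Cref{ass:dominance} enters only in showing that evaluating at the expanded set $\Theta_I^{C_0\epsilon_N}$ shifts the population moment by at most $C_{\max}C_0\sqrt b\,\epsilon_N\to0$ because $bc_N/N\to0$. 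Together with \Cref{ass:c5}, or \eqref{eq:c5unif} in the discontinuous case with the sets taken as $\Theta_I$ and $\Theta_I^{C_0\epsilon_N}$, this gives that both $\underline T_j$ and $\overline T_j$ converge in law to $\mathcal T$.

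\emph{Empirical-cdf argument over blocks.} Both bracketing statistics are functions of the block's own data, so they are i.i.d.\ across the $B_N\to\infty$ disjoint blocks. For a continuity point $x$ and small $\eta>0$, define
\[
\underline G(x):=B_N^{-1}\sum_j\mathbf 1\{\sqrt{\overline T_j}+\Delta_j\leq \sqrt x\}.
\]
Then $\underline G(x)\leq\widehat G(x)$, and similarly $\widehat G(x)$ is bounded above by the analogous average built from $\underline T_j$. By Chebyshev's inequality, each average concentrates around its mean; the variance is at most $1/B_N$. That mean tends to $\Pr(\mathcal T\leq x)$, after absorbing $\Delta_j$ through the continuity points $x\pm\eta$. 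Hence $\widehat G(x)\to_p\Pr(\mathcal T\leq x)$ on $(0,\infty)$. Strict monotonicity of the cdf near $q_{1-\tau}$ then gives $\widehat c_\tau\to_p q_{1-\tau}$.

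\emph{Main obstacle.} The hard step is the bracketing. The $\Delta_j$ are dependent across blocks, since they share $\widehat\xi$ and the preliminary set. The argument avoids averaging them jointly by working on a high-probability event. Alternatively, one can use Markov's inequality on $B_N^{-1}\sum_j\mathbf 1\{\Delta_j>\eta\}$, whose mean $\Pr(\Delta_1>\eta)\to0$ requires only marginal control. I would try the Markov route first.
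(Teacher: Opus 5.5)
Your proposal is correct and has the same architecture as the paper's proof: bracket each block statistic by infeasible statistics that depend only on that block's data, control the fraction of blocks where the nuisance replacement fails by Markov's inequality, concentrate the i.i.d.\ brackets, invert at $q_{1-\tau}$, and get part~(2) by Slutsky. The bracketing itself differs in two useful ways.

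\textbf{Lower bracket.} Your lower bracket is $\sup_{\Theta_I}bQ_{j,b}(\cdot,\xi_0)$. It rests on $\Theta_I\subseteq C_N(c_N;\widehat\xi)$ with probability approaching one, which is valid because $c_N\to\infty$ and $\sup_{\Theta_I}NQ_N(\cdot,\widehat\xi)=O_P(1)$. The paper instead uses $\inf_{S\in\mathcal{K}_N}\sup_{S}bQ_{j,b}(\cdot,\xi_0)$ over all compact sets within $C_0\epsilon_N$ of $\Theta_I$. That needs a compactness-and-attainment argument and an appeal to \Cref{ass:c5} at a block-dependent minimizer. You need approximability only at the two deterministic sets $\Theta_I$ and $\Theta_I^{C_0\epsilon_N}$.

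\textbf{Nuisance replacement.} You replace $\widehat\xi$ by $\xi_0$ in square-root form. Since $|\|x\|_+-\|y\|_+|\leq\|x-y\|$, only $\sup_{\theta\in\Theta}\sqrt b\|x-y\|\leq\sqrt b\,\bar s_N+O_P(\zeta_b)$ remains, and \Cref{lem:concentrate} handles it. The paper compares squared criteria through \Cref{ass:fast}. There the cross term $2b\|x-y\|\,\|y\|_+$ needs $\sqrt b\|y\|_+=O_P(1)$ on $\Theta_I^{C_0\epsilon_N}$, which is exactly where \Cref{ass:dominance} is used. Your route therefore does not need \Cref{ass:dominance} at all. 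The role you assign it, moving from $\Theta_I$ to $\Theta_I^{C_0\epsilon_N}$, is already what \Cref{ass:c5} (or \eqref{eq:c5unif}) asserts for $K_N=\Theta_I^{C_0\epsilon_N}$.

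What the paper's $\mathcal{K}_N$ construction buys in exchange is that it uses only Hausdorff closeness of the preliminary set, not containment of $\Theta_I$.

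\textbf{Points to tighten.}
\begin{enumerate}
\item Your Chebyshev step is applied to the indicators $\mathbf{1}\{\sqrt{\overline T_j}+\Delta_j\leq\sqrt x\}$. These are not independent across blocks, because the $\Delta_j$ share $\widehat\xi$. Split off the bad blocks first, for example $\widehat G(x)\geq B_N^{-1}\sum_j\mathbf{1}\{\sqrt{\overline T_j}\leq\sqrt x-\eta\}-\widehat\pi_N(\eta)$ with $\widehat\pi_N(\eta):=B_N^{-1}\sum_j\mathbf{1}\{\Delta_j>\eta\}$. Then apply Chebyshev only to the i.i.d.\ part.
\item The Markov route is therefore required, not just a preferred option. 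A single event on which every $\Delta_j\leq\eta$ would need a union bound over $B_N\to\infty$ blocks, and the marginal bound of \Cref{lem:concentrate} does not supply that. Markov needs only $\max_j\Pr(\Delta_j>\eta)\to0$, so you also do not need the $\Delta_j$ to be identically distributed.
\item \Cref{thm:main:ineq} on its own gives only $d_H=O_P(\epsilon_N)$. To get $C_N(c_N;\widehat\xi)\subseteq\Theta_I^{C_0\epsilon_N}$ with probability approaching one for a fixed $C_0$, either take $C_0=2\sqrt2/C_{\min}$ from Step~3 of the proof of \Cref{lem:rate}, as the paper does, or run the argument with a $p$-dependent $C_0$. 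The latter is allowed because \Cref{ass:c5} holds for every fixed $C_0$.
\end{enumerate}
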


\Cref{thm:main:subs} is pointwise in $P$, where the coverage probability converges to $1-\tau$ for each fixed $P_{0}$ satisfying the assumptions. Subsampling can deliver uniformly valid inference in moment-inequality models, including coverage of the identified set, under conditions given by \citet{RomanoShaikh2008,RomanoShaikh2010} and \citet{AndrewsGuggenberger2009}; bootstrap and simulation alternatives with uniform coverage of the identified set are developed by \citet{Bugni2010} and \citet{ChenTamerChristensen}. Two parts of our argument are stated for fixed $P_0$ and would have to be strengthened for a uniform result. First, \Cref{ass:partid} holds with constants $C_{\min}$ and $\delta_{\min}$ that are fixed under $P_{0}$; along drifting sequences $P_{N}$ in which a moment becomes binding at rate $N^{-1/2}$ these constants degenerate, and the separation used in Step~3 of the proof of \Cref{lem:rate} fails. Second, the first-stage conditions in \Cref{ass:smallbiasvalue,ass:concentration} would have to hold uniformly over that class. What our argument shows is narrower: for fixed $P_{0}$, the bracketing in the proof of \Cref{thm:main:subs} reduces the feasible block statistics to infeasible ones computed at the true nuisance. Whether this reduction, under the two strengthenings, combines with the uniform subsampling results of \citet{RomanoShaikh2010} to give uniform coverage is a conjecture that we do not pursue.

\section{Proofs}
\label{app:proofs}
\label{appendix:moment-ineq}

This appendix proves the results of \Cref{sec:theory} for the set estimator $\widehat{\Theta}_{I}$ of \Cref{def:setestim}. \Cref{sec:auxiliary} states high-level conditions, adapted from \citet{CHT} and \citet{CLR}, and three auxiliary lemmas. \Cref{subsec:proof:aux} proves the lemmas, \Cref{sec:proofs} proves \Cref{thm:main:ineq}, \Cref{lem:degeneracy} and \Cref{thm:main:subs}, and \Cref{sec:verification} verifies the conditions for the examples.

\subsection{Auxiliary statements}
\label{sec:auxiliary}

All statements below are for the $K$-fold cross-fitted criterion
\begin{align}
\label{eq:QN:appendix}
Q_{N}(\theta,\widehat{\xi}) &:= \bigg\|\frac{1}{N}\sum_{k=1}^{K}\sum_{i\in J_{k}}g\big(Z_{i};\theta,\widehat{\xi}^{(k)}(Z_{i})\big)\bigg\|_{+}^{2}, \nonumber\\
Q_{N}(\theta,\xi) &:= \bigg\|\frac{1}{N}\sum_{i=1}^{N}g\big(Z_{i};\theta,\xi(Z_{i})\big)\bigg\|_{+}^{2}
\end{align}
for a deterministic $\xi\in\Xi$, where $\widehat{\xi}^{(k)}$ is built from $J_{k}^{c}:=\{1,\dots,N\}\setminus J_{k}$ only. The number of folds $K\geq 2$ is fixed and, for simplicity, divides $N$ and the block size $b$, so that the weights $1/K$ below equal the fold shares. Every empirical-process bound below is first established fold by fold, conditionally on $J_{k}^{c}$. Given $J_{k}^{c}$, $\widehat{\xi}^{(k)}$ is a fixed function, which lies in $\Xi_{N}$ on the event $\mathcal{B}_{N}$, and the $n$ observations of $J_{k}$ are i.i.d.\ and independent of it. A bound that holds with conditional probability at least $1-p$ on the $J_{k}^{c}$-measurable event $\mathcal{B}_{N}^{(k)}:=\{\widehat{\xi}^{(k)}\in\Xi_{N}\}$ holds unconditionally with probability at least $(1-p)(1-\phi_{N})$, since $\Pr(A)\geq\mathrm{E}[\mathbf{1}\{\mathcal{B}_{N}^{(k)}\}\Pr(A\mid J_{k}^{c})]$, and the $K$ fold-level bounds are combined by the triangle inequality at the cost of a factor $K$ absorbed into the constants. To keep the notation light we suppress the fold index and write $\widehat{\xi}_{i}$ for $\widehat{\xi}^{(k(i))}(Z_{i})$. We write $A_{N}\lesssim_{P}B_{N}$ when $A_{N}=O_{P}(B_{N})$.

Expectations of functions of the cross-fitted nuisance are fold-conditional throughout: for a function $h$,
\begin{align}
\label{eq:foldE}
\mathrm{E}\,h(Z_{i},\theta,\widehat{\xi}_{i}) \;&:=\; \frac{1}{K}\sum_{k=1}^{K}\mathrm{E}\big[h(Z,\theta,\widehat{\xi}^{(k)})\,\big|\,J_{k}^{c}\big], \\
\mathbb{G}_{N}\,h(Z_{i},\theta,\widehat{\xi}_{i}) \;&:=\; \sqrt{N}\big(\mathrm{E}_{N}-\mathrm{E}\big)h(Z_{i},\theta,\widehat{\xi}_{i}),
\end{align}
so that $N\,Q_{N}(\theta,\widehat{\xi})=\|\mathbb{G}_{N}\,g(Z_{i};\theta,\widehat{\xi}_{i})+\sqrt{N}\,\mathrm{E}\,g(Z_{i};\theta,\widehat{\xi}_{i})\|_{+}^{2}$ holds by definition, and on $\mathcal{B}_{N}$ the fold-conditional bias obeys $\|\mathrm{E}\,g(Z_{i};\theta,\widehat{\xi}_{i})-\mathrm{E}\,g(Z_{i};\theta,\xi_{0})\|\leq\bar{s}_{N}$ by \Cref{ass:smallbiasvalue}(e), since each of the $K$ conditional biases does. The high-level conditions below are stated for the cross-fitted criterion $Q_{N}(\theta,\widehat{\xi})$ itself, not for a deterministic nuisance: with $K\geq 2$ folds the sample is not i.i.d.\ conditionally on $\widehat{\xi}$, so a statement for deterministic $\xi\in\Xi_{N}$ does not transfer to $\widehat{\xi}$ by conditioning, and the argument is carried out fold by fold.

\Cref{ass:cons,ass:noeffect,ass:fast} below are high-level conditions on the population and sample criteria. \Cref{lem:rate} derives the rate from them, \Cref{lem:rateineq} verifies \Cref{ass:noeffect,ass:fast} from the conditions of \Cref{sec:theory}, and the proof of \Cref{thm:main:ineq} verifies \Cref{ass:cons}.

Rates are stated in the Hausdorff distance \eqref{eq:hausdorff}. As in \Cref{sec:framework}, all random quantities below are assumed measurable, including suprema and infima over subsets of $\Theta$.

\begin{assumption}[Consistency of the Criterion]
\label{ass:cons}
We impose the following:
\begin{enumerate}
\item[(a)] \emph{(Regularity of $Q(\theta,\xi_{0})$.)} $Q(\theta,\xi_{0})$ is non-negative, and for every $\epsilon>0$ there exists $\varpi(\epsilon)>0$ such that
\begin{align}
\label{eq:cons1}
\inf_{\theta\in\Theta\setminus\Theta_{I}^{\epsilon}}Q(\theta,\xi_{0}) \;\geq\; \varpi(\epsilon) > 0.
\end{align}
\item[(b)] \emph{(Fast convergence on $\Theta_{I}$.)} Uniformly on $\Theta_{I}$,
\begin{align}
\label{eq:cons2}
\sup_{\theta\in\Theta_{I}}N\,Q_{N}(\theta,\xi_{0}) = O_{P}(1).
\end{align}
\item[(c)] \emph{(Slow convergence on $\Theta$.)} Uniformly on $\Theta$,
\begin{align}
\label{eq:cons3}
\sup_{\theta\in\Theta}\big(Q(\theta,\xi_{0})-Q_{N}(\theta,\xi_{0})\big)_{+} = O_{P}(N^{-1/2}).
\end{align}
\item[(d)] \emph{(Separation away from $\Theta_{I}$.)} There exist $\omega,\kappa>0$ such that for any $p\in(0,1)$ there exist $D_{p},N_{p}$ with, for all $N\geq N_{p}$,
\begin{align}
\label{eq:cons4}
\Pr\!\Big(Q_{N}(\theta,\widehat{\xi}) \;\geq\; \kappa\,[d(\theta,\Theta_{I})\wedge\omega]^{2}\quad\forall\,\theta:\;d(\theta,\Theta_{I})\geq(D_{p}/N)^{1/2}\Big) \;\geq\; 1-p.
\end{align}
\end{enumerate}
\end{assumption}

\begin{assumption}[No Effect of First-Stage Estimation Error]
\label{ass:noeffect}
We impose the following:
\begin{enumerate}
\item[(a)] \emph{(Slow convergence on $\Theta$.)} For any $p>0$ there exist $r_{p},N_{p}$ such that for all $N\geq N_{p}$,
\begin{align*}
\Pr\!\Big(\sqrt{N}\,\sup_{\theta\in\Theta}\big|Q_{N}(\theta,\widehat{\xi})-Q_{N}(\theta,\xi_{0})\big| \;\leq\; r_{p}\Big) \;\geq\; 1-p.
\end{align*}
\item[(b)] \emph{(Fast convergence on $\Theta_{I}$.)} For any $p,\epsilon>0$ there exists $N_{p,\epsilon}$ such that for all $N\geq N_{p,\epsilon}$,
\begin{align*}
\Pr\!\Big(N\,\sup_{\theta\in\Theta_{I}}\big|Q_{N}(\theta,\widehat{\xi})-Q_{N}(\theta,\xi_{0})\big| \;\leq\; \epsilon\Big) \;\geq\; 1-p.
\end{align*}
\end{enumerate}
\end{assumption}

\begin{assumption}[Fast Convergence at Block Size $b$]
\label{ass:fast}
Let $\epsilon_{N}:=\sqrt{(1\vee c_{N})/N}$ be the Hausdorff rate delivered by \Cref{thm:main:ineq} at the preliminary level $c_{N}$ of \Cref{def:subsampling}, up to a constant. For the recommended level $c_{N}=\log N$ this is $\epsilon_{N}=\sqrt{\log N/N}$. Let $b=b_{N}\to\infty$ with $b\,c_{N}/N\to 0$, and let $Q_{j,b}(\theta,\widehat{\xi})$ be the criterion computed on a block $j$ of size $b$, balanced across folds, with the cross-fitted nuisance of the full sample rather than one re-estimated on the block. For every fixed $C_{0}>0$ and any $p,\epsilon>0$ there exists $N_{p,\epsilon,C_{0}}$ such that for all $N\geq N_{p,\epsilon,C_{0}}$ and every block $j$,
\begin{align*}
\Pr\!\Big(b\sup_{\theta\in\Theta_{I}^{C_{0}\epsilon_{N}}}\big|Q_{j,b}(\theta,\widehat{\xi})-Q_{j,b}(\theta,\xi_{0})\big| \;\leq\; \epsilon\Big) \;\geq\; 1-p.
\end{align*}
\end{assumption}

The next three lemmas provide the key technical tools. The first establishes coverage, consistency, and the rate of convergence under the high-level assumptions.

\begin{lemma}[Coverage, Consistency, and Rate of Convergence]
\label{lem:rate}
Let $\widehat{c}/N\to_{p}0$ and
\begin{equation*}
\Pr\Big(\sup_{\theta\in\Theta_{I}}N\,Q_{N}(\theta,\widehat{\xi})\leq\widehat{c}\Big)=1-o(1).
\end{equation*}
Then:
\begin{enumerate}
\item[(i)] \Cref{ass:cons}(a),(c) and \Cref{ass:noeffect}(a) imply $\Theta_{I}\subseteq\widehat{\Theta}_{I}$ with probability approaching one and $d_{H}(\widehat{\Theta}_{I},\Theta_{I})=o_{P}(1)$.
\item[(ii)] \Cref{ass:cons}(d) implies $d_{H}(\widehat{\Theta}_{I},\Theta_{I})=O_{P}(\sqrt{(1\vee\widehat{c})/N})$.
\item[(iii)] If \Cref{ass:cons}(b) and \Cref{ass:noeffect}(b) hold, one can choose $\widehat{c}=O_{P}(1)$.
\end{enumerate}
\end{lemma}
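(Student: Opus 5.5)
The proof follows Theorem 3.1 of \citet{CHT} step by step, with the cross-fitted criterion $Q_N(\theta,\widehat{\xi})$ in place of their sample criterion. \Cref{ass:noeffect} is used to move between $\widehat{\xi}$ and $\xi_0$. Throughout, we work on the event $E_N:=\{\sup_{\theta\in\Theta_I}NQ_N(\theta,\widehat{\xi})\leq\widehat{c}\}$, which has probability $1-o(1)$ by hypothesis. On $E_N$, every $\theta\in\Theta_I$ satisfies $NQ_N(\theta,\widehat{\xi})\leq\widehat{c}$, so $\Theta_I\subseteq\widehat{\Theta}_I$. This gives the containment claim of (i) directly and gives $\sup_{x\in\Theta_I}d(x,\widehat{\Theta}_I)=0$ with probability approaching one. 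The remaining task is to bound the outer deviation $\sup_{\theta\in\widehat{\Theta}_I}d(\theta,\Theta_I)$.

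For (i), fix $\epsilon>0$ and take any $\theta\in\widehat{\Theta}_I\setminus\Theta_I^{\epsilon}$. By \Cref{ass:cons}(a), $Q(\theta,\xi_0)\geq\varpi(\epsilon)$. Write
\[
Q(\theta,\xi_0)\leq \big(Q(\theta,\xi_0)-Q_N(\theta,\xi_0)\big)_+ + |Q_N(\theta,\xi_0)-Q_N(\theta,\widehat{\xi})| + Q_N(\theta,\widehat{\xi}).
\]
The first term is $O_P(N^{-1/2})$ uniformly by \Cref{ass:cons}(c). The second is $O_P(N^{-1/2})$ uniformly by \Cref{ass:noeffect}(a). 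The third is at most $\widehat{c}/N=o_P(1)$ on $\widehat{\Theta}_I$. Hence, with probability approaching one, the right side is smaller than $\varpi(\epsilon)$ uniformly over $\theta\in\widehat{\Theta}_I$. Therefore $\widehat{\Theta}_I\subseteq\Theta_I^{\epsilon}$ with probability approaching one. Since $\epsilon$ is arbitrary, and combining with containment, $d_H(\widehat{\Theta}_I,\Theta_I)=o_P(1)$.

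For (ii), fix $p$ and take $D_p,N_p$ from \Cref{ass:cons}(d). On the event in \eqref{eq:cons4}, intersected with the consistency event from (i) (that event already uses the hypotheses of (i), which is harmless), consider any $\theta\in\widehat{\Theta}_I$ with $d(\theta,\Theta_I)\geq(D_p/N)^{1/2}$. For such $\theta$,
\[
\kappa[d(\theta,\Theta_I)\wedge\omega]^2\leq Q_N(\theta,\widehat{\xi})\leq\widehat{c}/N.
\]
Since $\widehat{c}/N=o_P(1)$ and $\kappa\omega^2>0$, for large $N$ the minimum cannot equal $\omega$. Therefore $d(\theta,\Theta_I)\leq(\widehat{c}/(\kappa N))^{1/2}$. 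Combining the two cases, every $\theta\in\widehat{\Theta}_I$ satisfies $d(\theta,\Theta_I)\leq\max\{(D_p/N)^{1/2},(\widehat{c}/(\kappa N))^{1/2}\}\lesssim\sqrt{(1\vee\widehat{c})/N}$, with probability at least $1-p-o(1)$. The step requiring the most care is ruling out the truncation at $\omega$. It needs $\widehat{c}/N<\kappa\omega^2$ with probability approaching one, which follows from $\widehat{c}/N\to_p0$, so the rate argument needs no separate consistency step.

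For (iii), set $\widehat{c}:=\sup_{\theta\in\Theta_I}NQ_N(\theta,\widehat{\xi})$ plus an arbitrarily small deterministic constant, or any $O_P(1)$ level dominating it. Then
\[
\sup_{\Theta_I}NQ_N(\theta,\widehat{\xi})\leq\sup_{\Theta_I}NQ_N(\theta,\xi_0)+N\sup_{\Theta_I}|Q_N(\theta,\widehat{\xi})-Q_N(\theta,\xi_0)|.
\]
The first term is $O_P(1)$ by \Cref{ass:cons}(b), and the second is $o_P(1)$ by \Cref{ass:noeffect}(b). Thus a level $\widehat{c}=O_P(1)$ satisfies the coverage hypothesis, and $\widehat{c}/N\to_p0$ holds trivially. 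Any fixed quantile of this tight sequence plus a slowly diverging margin also works.
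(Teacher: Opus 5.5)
Your proof is correct and follows the paper's argument step by step. Containment comes directly from the level condition, consistency comes from the same chain through \Cref{ass:cons}(a),(c) and \Cref{ass:noeffect}(a), the rate comes from the separation bound \eqref{eq:cons4} with $\widehat{c}/N<\kappa\omega^{2}$ ruling out the truncation at $\omega$, and the $O_P(1)$ level comes from \Cref{ass:cons}(b) and \Cref{ass:noeffect}(b). The differences are cosmetic. In (ii) you bound $d(\theta,\Theta_I)$ directly for $\theta\in\widehat{\Theta}_I$, whereas the paper fixes the radius $\epsilon_{N,p}=(2\max\{D_{p/3},\widehat{c}/\kappa\}/N)^{1/2}$ and shows strict exclusion beyond it. As you note yourself, intersecting with the consistency event from (i) is not needed there, so (ii) indeed rests on \Cref{ass:cons}(d) and the standing level conditions alone.
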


The second lemma controls the concentration of the empirical process of the estimated moments around their population counterparts.

\begin{lemma}[Concentration of Estimated Moments]
\label{lem:concentrate}
Suppose \Cref{ass:smallbiasvalue,ass:concentration} hold. Let a block be a deterministic set of $m=m_{N}\to\infty$ indices with $m/K$ from every fold, and let $\mathbb{G}_{m}$ be the empirical process of the block with the full-sample cross-fitted nuisance $\widehat{\xi}_{i}$; $m=N$ gives the full sample. For every $p\in(0,1)$ there exist $C_{p}<\infty$ and $N_{p}$, depending on $(K,a,v,c,c_{1},p)$ only and not on the block, such that for all $N\geq N_{p}$, with $\zeta_{m}:=r_{N}'\sqrt{\log m}+m^{-1/2+1/c}\log m$,
\begin{align}
\label{eq:moment2}
\Pr\Big(\sup_{\theta\in\Theta}\Big\|\mathbb{G}_{m}\left\{g(Z_{i};\theta,\widehat{\xi}_{i})-g(Z_{i};\theta,\xi_{0})\right\}\Big\| \;\leq\; C_{p}\,\zeta_{m}\Big)\;\geq\;1-p .
\end{align}
\end{lemma}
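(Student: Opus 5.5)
The plan is to argue fold by fold, conditionally on the complement $J_k^c$, and then recombine the folds. Write the block empirical process as $\mathbb{G}_m h = \sum_{k=1}^K \sqrt{1/K}\,\mathbb{G}_{m,k} h$. Here $\mathbb{G}_{m,k}$ is the empirical process of the $m/K$ block observations from fold $k$, centred at the fold-conditional expectation \eqref{eq:foldE}. It therefore suffices to show that, for each $k$,
\[
\sup_{\theta\in\Theta}\big\|\mathbb{G}_{m,k}\{g(Z_i;\theta,\widehat{\xi}^{(k)})-g(Z_i;\theta,\xi_0)\}\big\|\le C_p\zeta_m
\]
with conditional probability at least $1-p/K$, on the $J_k^c$-measurable event $\mathcal{B}_N^{(k)}$. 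The preamble of \Cref{sec:auxiliary} then converts this into an unconditional bound with probability at least $(1-p/K)(1-\phi_N)$. A union over the $K$ folds and the triangle inequality absorb the factor $K$ into $C_p$. Conditioning is legitimate because the block is a deterministic index set and its fold-$k$ members are i.i.d.\ and independent of $J_k^c$. Given $J_k^c$, the function $\xi:=\widehat{\xi}^{(k)}$ is therefore fixed and lies in $\Xi_N$.

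For fixed $\xi\in\Xi_N$, consider coordinatewise the class $\mathcal{D}_\xi:=\{g_l(\cdot;\theta,\xi)-g_l(\cdot;\theta,\xi_0):\theta\in\Theta\}$. Its envelope is $F_\xi+F_{\xi_0}$, whose $P,c$-norm is at most $2c_1$ by \Cref{ass:concentration}(a). Its uniform entropy is at most $2v\log(2a/\epsilon)$, since the difference class lies in the pairwise differences of two classes each satisfying \eqref{eq:entropy}. By \Cref{ass:concentration}(c), its $L^2(P)$ diameter is $\sigma:=\sup_\theta\|g_l(\cdot;\theta,\xi)-g_l(\cdot;\theta,\xi_0)\|_{P,2}\le r_N'$. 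I would then invoke the maximal inequality of Chernozhukov, Chetverikov and Kato for VC-type classes with unbounded envelope, as used in \citet{chernozhukov2016double}, with sample size $m/K$. It gives
\[
\mathrm{E}\sup_{f\in\mathcal{D}_\xi}|\mathbb{G}_{m,k}f|\lesssim \sigma\sqrt{v\log(a\|F\|_{P,2}/\sigma)}+\frac{v\|\max_{i}F(Z_i)\|_{P,2}}{\sqrt{m}}\log\Big(\frac{a\|F\|_{P,2}}{\sigma}\Big).
\]
The $L^c$ bound gives $\|\max_{i\le m}F(Z_i)\|_{P,2}\le m^{1/c}\|F\|_{P,c}$, so the second term is of order $m^{-1/2+1/c}\log m$. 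Markov's inequality then turns the expectation bound into the probability bound, with constants depending only on $(K,a,v,c,c_1,p)$.

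The main obstacle is the logarithm in the first term. When $\sigma$ is very small, $\log(1/\sigma)$ is not bounded by $\log m$. I would handle this in the standard way by replacing $\sigma$ with $\bar\sigma:=\sigma\vee m^{-1/2}$ in the maximal inequality, which is valid because any upper bound on the diameter may be used. Then $\log(a\|F\|_{P,2}/\bar\sigma)\lesssim\log m$, the first term is at most $(r_N'+m^{-1/2})\sqrt{\log m}$, and the $m^{-1/2}\sqrt{\log m}$ piece is dominated by $m^{-1/2+1/c}\log m$. All constants are uniform over $\xi\in\Xi_N$ and over blocks, since only $m$ and the fold membership enter. The full-sample case is $m=N$. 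Summing the $L$ coordinates costs only another constant.
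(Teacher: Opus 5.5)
Your proposal is correct and follows essentially the paper's route: condition on $J_k^c$ fold by fold, control the difference class (envelope $F_{\widehat{\xi}}+F_{\xi_0}$, doubled entropy, variance at most $r_N'$ from \Cref{ass:concentration}(c)), apply the maximal inequality of Lemma~6.2 in \citet{chernozhukov2016double} with a floored $\sigma$ so the logarithms are $O(\log m)$, and recombine the folds. The differences are minor. You use the expectation bound plus Markov's inequality where the paper uses the lemma's tail bound with $t=p^{-2/c}$, and you floor $\sigma$ at $m^{-1/2}$ rather than $(m/K)^{-1/2+1/c}$. You should also note, as the paper does by taking $F_\xi\geq 1$, that the floored $\sigma$ must not exceed $\|F_{\widehat{\xi}}+F_{\xi_0}\|_{P,2}$ for the inequality to apply.
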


The third lemma shows that the high-level \Cref{ass:noeffect,ass:fast} follow from the conditions stated in \Cref{sec:theory}.

\begin{lemma}[Sufficient Conditions for General Moment Problems]
\label{lem:rateineq}
\Cref{ass:smallbiasvalue,ass:concentration} and \eqref{eq:pdonsker} imply \Cref{ass:noeffect}, and together with \Cref{ass:dominance} and a block size $b\to\infty$ with $b\,c_{N}/N\to 0$ they imply \Cref{ass:fast}.
\end{lemma}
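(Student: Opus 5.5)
The lemma has three claims: \Cref{ass:noeffect}(a), \Cref{ass:noeffect}(b), and \Cref{ass:fast}. I would treat them all the same way. Write the block criterion as the positive-part norm of $\mathbb{G}_m g/\sqrt{m} + \mathrm{E}\,g$. Then split the difference between the feasible and infeasible moment averages into two pieces. The first is an empirical-process piece, controlled by \Cref{lem:concentrate}. The second is a fold-conditional bias piece, controlled on $\mathcal{B}_N$ by \Cref{ass:smallbiasvalue}(e). Concretely, for a block of size $m$ ($m=N$ or $m=b$) set
\[
\Delta_m(\theta):=\frac{1}{m}\sum_{i}\big\{g(Z_i;\theta,\widehat\xi_i)-g(Z_i;\theta,\xi_0)\big\}.
\]
Then $\sqrt{m}\sup_\theta\|\Delta_m(\theta)\|\leq C_p\zeta_m+\sqrt{m}\,\bar s_N$ with probability at least $1-p$. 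Both terms vanish: $\zeta_m\to 0$ because $r_N'\sqrt{\log N}=o(1)$ and $c>2$, and $\sqrt{m}\,\bar s_N\leq\sqrt N\,\bar s_N=o(1)$ because $m\leq N$.

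Next, convert this moment bound into a criterion bound. The map $v\mapsto\|v\|_+$ is 1-Lipschitz, which gives
\[
\big|mQ_m(\theta,\widehat\xi)-mQ_m(\theta,\xi_0)\big|\leq \sqrt m\|\Delta_m\|\big(2\sqrt{mQ_m(\theta,\xi_0)}+\sqrt m\|\Delta_m\|\big).
\]

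For \Cref{ass:noeffect}(a), take $m=N$ and multiply by $N^{-1/2}$. Then $\sqrt N\sup_\theta|Q_N(\theta,\widehat\xi)-Q_N(\theta,\xi_0)|\lesssim \sqrt N\|\Delta_N\|\cdot\sup_\theta\sqrt{Q_N(\theta,\xi_0)}+o_P(1)$. This is $O_P(1)$, since $\sup_\theta\sqrt{Q_N(\theta,\xi_0)}\leq\sup_\theta\|\mathrm{E}_N g(\cdot;\theta,\xi_0)\|$, which is bounded in probability by the envelope bound of \Cref{ass:concentration}(a) together with \eqref{eq:pdonsker}.

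For \Cref{ass:noeffect}(b), restrict to $\theta\in\Theta_I$. There $\mathrm{E}\,g(\cdot;\theta,\xi_0)\leq 0$ coordinatewise, so $NQ_N(\theta,\xi_0)\leq\|\mathbb{G}_N g(\cdot;\theta,\xi_0)\|^2$, and its supremum is $O_P(1)$ by \eqref{eq:pdonsker} and the continuous mapping theorem. The displayed bound then gives $N\sup_{\Theta_I}|\cdot|\leq o_P(1)\cdot O_P(1)+o_P(1)=o_P(1)$.

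For \Cref{ass:fast}, take $m=b$ with a balanced block. \Cref{lem:concentrate} is stated uniformly over deterministic balanced blocks, so the same constant $C_p$ works for every block $j$. The only new ingredient is an $O_P(1)$ bound on $\sup_{\theta\in\Theta_I^{C_0\epsilon_N}}bQ_{j,b}(\theta,\xi_0)$. Here I would use \Cref{ass:dominance}: on the expansion, $\|\mathrm{E}\,g(\cdot;\theta,\xi_0)\|_+\leq C_{\max}C_0\epsilon_N$. By monotonicity of the positive part,
\[
\sqrt{bQ_{j,b}(\theta,\xi_0)}\leq\|\mathbb{G}_b g(\cdot;\theta,\xi_0)\|+\sqrt b\,C_{\max}C_0\epsilon_N.
\]
The last term is $C_{\max}C_0\sqrt{b(1\vee c_N)/N}\to 0$ by the condition $bc_N/N\to 0$. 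The block empirical process at the true nuisance is a sample of $b$ i.i.d.\ observations with the fixed class $\mathcal F_{\xi_0}$. A maximal inequality under the entropy and envelope bounds of \Cref{ass:concentration}(a),(b), for instance from Chernozhukov, Chetverikov and Kato, gives $\sup_\theta\|\mathbb{G}_b g\|=O_P(1)$ uniformly in $j$. Plugging this into the displayed inequality yields $b\sup|Q_{j,b}(\theta,\widehat\xi)-Q_{j,b}(\theta,\xi_0)|\leq o(1)\cdot O_P(1)+o(1)$, which proves \Cref{ass:fast}.

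I expect the main obstacle to be \Cref{ass:fast}. Two things need care there. The first is that the bias term uses the full-sample fold-conditional expectation on a block. This is handled by the balanced construction, since the $b/K$ block observations in fold $k$ are i.i.d.\ given $J_k^c$, and so the fold-conditioning argument described before \Cref{ass:cons} applies verbatim. The second is that every probability bound must be uniform in the block index $j$. This holds because all constants come from \Cref{lem:concentrate} and the maximal inequality, and those depend only on $(K,a,v,c,c_1,p)$.
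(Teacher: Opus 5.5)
Your proposal is correct and follows essentially the same route as the paper. It uses the same split of the feasible-minus-infeasible moment average into an empirical-process term (\Cref{lem:concentrate}) and a fold-conditional bias term (\Cref{ass:smallbiasvalue}(e)), the same quadratic bound on the difference of $\|\cdot\|_+^2$, and \Cref{ass:dominance} together with $b\,c_N/N\to 0$ on the expansion $\Theta_I^{C_0\epsilon_N}$. The only minor difference is how you bound $\sup_\theta\|\mathbb{G}_b\,g(\cdot;\theta,\xi_0)\|$ on a block uniformly in $j$: you use a maximal inequality from \Cref{ass:concentration}(a),(b), whereas the paper applies \eqref{eq:pdonsker} at sample size $b$ and relies on the blocks being identically distributed; either argument works.
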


\subsection{Proofs of auxiliary lemmas}
\label{subsec:proof:aux}

\begin{proof}[Proof of \Cref{lem:rate}]
The proof uses two elementary inequalities: for real functions $f$ and $h$ on a set $A$,
\begin{align}
\label{eq:basic}
\sup_{A}f \;\leq\; \sup_{A}h + \sup_{A}(f-h)_{+}, \qquad
\inf_{A}f \;\geq\; \inf_{A}h - \sup_{A}(h-f)_{+}.
\end{align}
\medskip
\noindent\emph{Step 1: containment.}
The event $\{\sup_{\theta\in\Theta_{I}}N\,Q_{N}(\theta,\widehat{\xi})\leq\widehat{c}\}$ is the event $\{\Theta_{I}\subseteq\widehat{\Theta}_{I}\}$, so the level condition of the lemma is $\Pr(\Theta_{I}\subseteq\widehat{\Theta}_{I})\to 1$.
\medskip
\noindent\emph{Step 2: convergence without rate.}
Fix $\epsilon>0$. We show $\Pr(d_{H}(\widehat{\Theta}_{I},\Theta_{I})\leq\epsilon)\to 1$. By \Cref{ass:cons}(a), there exists $\varpi(\epsilon)>0$ such that
\begin{align*}
\inf_{\theta\in\Theta\setminus\Theta_{I}^{\epsilon}}Q(\theta,\xi_{0}) \;\geq\; \varpi(\epsilon) > 0.
\end{align*}
We chain inequalities using \eqref{eq:basic}:
\begin{align*}
\sup_{\theta\in\widehat{\Theta}_{I}}Q(\theta,\xi_{0})
&\leq \sup_{\theta\in\widehat{\Theta}_{I}}Q_{N}(\theta,\xi_{0}) + \sup_{\theta\in\widehat{\Theta}_{I}}\big(Q(\theta,\xi_{0})-Q_{N}(\theta,\xi_{0})\big)_{+} \tag{by \eqref{eq:basic}} \\
&\leq \sup_{\theta\in\widehat{\Theta}_{I}}Q_{N}(\theta,\xi_{0}) + O_{P}(1/\sqrt{N}) \tag{by \eqref{eq:cons3}} \\
&\leq \sup_{\theta\in\widehat{\Theta}_{I}}Q_{N}(\theta,\widehat{\xi}) + \sup_{\theta\in\widehat{\Theta}_{I}}\big|Q_{N}(\theta,\xi_{0})-Q_{N}(\theta,\widehat{\xi})\big| + O_{P}(1/\sqrt{N}) \tag{by \eqref{eq:basic}} \\
&\leq \widehat{c}/N + O_{P}(N^{-1/2}) + O_{P}(N^{-1/2}) \;=\; o_{P}(1). \tag{\Cref{ass:noeffect}(a) and $\widehat{c}/N\to_{p}0$}
\end{align*}
Hence, by \eqref{eq:cons1}, $\Pr(\widehat{\Theta}_{I}\subseteq\Theta_{I}^{\epsilon})=1-o(1)$.
\medskip
\noindent\emph{Step 3: convergence rate.}
Fix $p\in(0,1)$, let $\kappa,\omega$ be as in \Cref{ass:cons}(d), let $D_{p/3}$ be as in \eqref{eq:cons4}, and set $\epsilon_{N,p}:=\big(2\max\{D_{p/3},\,\widehat{c}/\kappa\}/N\big)^{1/2}$. We show that $\Pr(d_{H}(\widehat{\Theta}_{I},\Theta_{I})\leq\epsilon_{N,p})\geq 1-p$ for all large $N$.

Since $\widehat{c}/N\to_{p}0$, there exists $N_{p/3}^{A}$ such that for all $N\geq N_{p/3}^{A}$,
\begin{align*}
\Pr(\mathcal{A}_{N}) := \Pr\big(\widehat{c}/(N\kappa)<\omega^{2}/8\big) \;\geq\; 1-p/3.
\end{align*}
On $\mathcal{A}_{N}$ and for $N$ large,
\begin{align*}
\epsilon_{N,p}^{2} = \frac{2\max\{D_{p/3},\;\widehat{c}/\kappa\}}{N} \;\leq\; \max\!\Big\{\frac{2D_{p/3}}{N},\;\frac{\omega^{2}}{4}\Big\} \;\leq\; \frac{\omega^{2}}{4}, \qquad \epsilon_{N,p} \;\geq\; \sqrt{D_{p/3}/N},
\end{align*}
so \Cref{ass:cons}(d) applies at radius $\epsilon_{N,p}$.

Define
\begin{align*}
\mathcal{D}_{\widehat{\xi}} := \Big\{\inf_{\theta:\;d(\theta,\Theta_{I})\geq\epsilon_{N,p}}N\,Q_{N}(\theta,\widehat{\xi}) \;\geq\; N\kappa\,(\epsilon_{N,p}\wedge\omega)^{2}\Big\}.
\end{align*}
Since $\epsilon_{N,p}\geq(D_{p/3}/N)^{1/2}$ on $\mathcal{A}_{N}$, \Cref{ass:cons}(d) gives $\Pr(\mathcal{D}_{\widehat{\xi}}\cap\mathcal{A}_{N})\geq 1-2p/3$ for $N\geq N_{p/3}$. Therefore, with probability at least $1-2p/3$ and for all $N\geq N_{p/3}\vee N_{p/3}^{A}$,
\begin{align}
\label{eq:separation}
\inf_{\theta:\;d(\theta,\Theta_{I})\geq\epsilon_{N,p}}N\,Q_{N}(\theta,\widehat{\xi}) \;\geq\; N\kappa\,(\epsilon_{N,p}\wedge\omega)^{2} \;=\; N\kappa\,\epsilon_{N,p}^{2} \;=\; 2\max\{\kappa D_{p/3},\,\widehat{c}\} \;>\; \widehat{c},
\end{align}
since $D_{p/3}>0$, which is without loss of generality in \Cref{ass:cons}(d). Every $\theta$ with $d(\theta,\Theta_{I})\geq\epsilon_{N,p}$ is therefore excluded from $\widehat{\Theta}_{I}=\{N\,Q_{N}(\theta,\widehat{\xi})\leq\widehat{c}\}$, so $\Pr(\widehat{\Theta}_{I}\subseteq\Theta_{I}^{\epsilon_{N,p}})\geq 1-2p/3$ for large $N$. Together with Step~1, this gives $d_{H}(\widehat{\Theta}_{I},\Theta_{I})\leq\epsilon_{N,p}$ with probability at least $1-p$ for large $N$, and $\epsilon_{N,p}\lesssim\sqrt{(1\vee\widehat{c})/N}$.
\medskip
\noindent\emph{Step 4: choice $\widehat{c}=O_{P}(1)$.}
That some $\widehat{c}=O_{P}(1)$ satisfies the level condition follows from
\begin{align*}
\sup_{\theta\in\Theta_{I}}N\,Q_{N}(\theta,\widehat{\xi})
&\leq \sup_{\theta\in\Theta_{I}}N\,Q_{N}(\theta,\xi_{0}) + \sup_{\theta\in\Theta_{I}}N\,\big|Q_{N}(\theta,\widehat{\xi})-Q_{N}(\theta,\xi_{0})\big| \\
&\leq O_{P}(1) + \sup_{\theta\in\Theta_{I}}N\,\big|Q_{N}(\theta,\widehat{\xi})-Q_{N}(\theta,\xi_{0})\big| \tag{\Cref{ass:cons}(b)} \\
&\leq O_{P}(1) + o_{P}(1), \tag{\Cref{ass:noeffect}(b)}
\end{align*}
so any $\widehat{c}$ that exceeds the $O_{P}(1)$ infeasible statistic with probability approaching one satisfies it, for instance $\widehat{c}:=\sup_{\theta\in\Theta_{I}}N\,Q_{N}(\theta,\widehat{\xi})$ itself, which is infeasible, or a slowly diverging deterministic sequence, which is feasible but not $O_{P}(1)$. The subsampling level $\widehat{c}_{\tau}$ of \Cref{def:subsampling} covers only with probability $1-\tau+o(1)$ and does not satisfy the level condition.\end{proof}

\begin{proof}[Proof of \Cref{lem:concentrate}]
Fix a fold $k$ and condition on the observations outside $J_{k}$, so that $\widehat{\xi}^{(k)}$ is a fixed function, which lies in $\Xi_{N}$ on $\mathcal{B}_{N}$. The observations of the block that belong to $J_{k}$ are $m/K$ i.i.d.\ draws independent of $\widehat{\xi}^{(k)}$, and for the full sample $m=N$. We bound the empirical process of these $m/K$ observations and combine the $K$ bounds at the end. Write $\widehat{\xi}$ for $\widehat{\xi}^{(k)}$. Consider the function class
\begin{align*}
\mathcal{F}_{2} \;=\; \left\{g_{l}(\cdot;\theta,\widehat{\xi})-g_{l}(\cdot;\theta,\xi_{0}) \;:\; l=1,\dots,L,\;\theta\in\Theta\right\} \;\subset\; \mathcal{F}_{\widehat{\xi}}-\mathcal{F}_{\xi_{0}}.
\end{align*}
Let $F_{2}:=F_{\widehat{\xi}}+F_{\xi_{0}}$ be an envelope for $\mathcal{F}_{2}$. This envelope satisfies the requirements of Lemma~6.2 of \citet{chernozhukov2016double}, since $\|F_{2}\|_{P,c}\leq\|F_{\widehat{\xi}}\|_{P,c}+\|F_{\xi_{0}}\|_{P,c}\leq 2c_{1}$ and
\begin{align*}
\log\sup_{\tilde{Q}}\mathcal{N}\left(\epsilon\|F_{2}\|_{\tilde{Q},2},\;\mathcal{F}_{2},\;\|\cdot\|_{\tilde{Q},2}\right) \;\leq\; 2v\log(2aL/\epsilon),
\end{align*}
where the factor $L$ accounts for the $L$ coordinates and is absorbed into the constants. On the event $\mathcal{B}_{N}$, the variance of the difference class is controlled as
\begin{align*}
\sup_{\theta\in\Theta}\mathrm{E}\big[\{g_{l}(Z;\theta,\widehat{\xi})-g_{l}(Z;\theta,\xi_{0})\}^{2}\,\big|\,J_{k}^{c}\big]
&\leq \sup_{\theta\in\Theta,\;\xi\in\Xi_{N}}\mathrm{E}\big\|g(Z;\theta,\xi)-g(Z;\theta,\xi_{0})\big\|^{2} \\
&\leq (r_{N}')^{2}.
\end{align*}
Let $m'=m/K\to\infty$ be the number of block observations in the fold and take the variance bound $\sigma_{m}:=r_{N}'\vee (m')^{-1/2+1/c}$. Lemma~6.2 of \citet{chernozhukov2016double} requires $\sup_{f\in\mathcal{F}_{2}}\mathrm{E}f^{2}\leq\sigma_{m}^{2}\leq\|F_{2}\|_{P,2}^{2}$. The first inequality is the variance bound above. For the second, we may take $F_{\xi}\geq 1$, since enlarging an envelope does not affect \Cref{ass:concentration}(a) or~(b); then $\|F_{2}\|_{P,2}\geq 1$, while $\sigma_{m}\to 0$. We apply Lemma~6.2 conditionally on $J_{k}^{c}$, with the class $\mathcal{F}_{2}$, envelope $F_{2}$, variance bound $\sigma_{m}^{2}$ and the constants $a\geq e$, $v\geq 1$ of \Cref{ass:concentration}(b), and use its tail bound with $t=p^{-2/c}$. With conditional probability at least $1-p$,
\begin{align*}
\sup_{f\in\mathcal{F}_{2}}|\mathbb{G}_{m'}f| \;\leq\; K_{p}\Big(&\sigma_{m}\sqrt{v\log(a\|F_{2}\|_{P,2}/\sigma_{m})} \\
&+v\,(m')^{-1/2+1/c}\|F_{2}\|_{P,c}\log(a\|F_{2}\|_{P,2}/\sigma_{m})\Big).
\end{align*}
Since $\sigma_{m}\geq(m')^{-1/2+1/c}$, the logarithms are at most a constant times $\log m$, so the right-hand side is $\lesssim\zeta_{m}$ with constants depending only on $(K,a,v,c,c_{1},p)$ and not on the block or the fold. The additional terms of the tail bound, $\sigma_{m}\sqrt{t}$, $(m')^{-1/2}\|\max_{i}F_{2}(Z_{i})\|_{P,c}\sqrt{t}$ and $(m')^{-1/2}\|\max_{i}F_{2}(Z_{i})\|_{P,2}\,t$, are of the same order, since $\|\max_{i\leq m'}F_{2}(Z_{i})\|_{P,c}\leq (m')^{1/c}\|F_{2}\|_{P,c}$. The bound is integrated over the conditioning at the cost of the factor $1-\phi_{N}$ as explained at the start of this appendix, and the $K$ fold bounds are combined by the triangle inequality, since $\mathbb{G}_{m}$ is the average of the $K$ fold processes scaled by $\sqrt{K}$. This gives
\begin{align}
\label{eq:moment}
\sup_{\theta\in\Theta}\Big\|\mathbb{G}_{m}\left\{g(Z_{i};\theta,\widehat{\xi}_{i})-g(Z_{i};\theta,\xi_{0})\right\}\Big\|
\;\lesssim_{P}\; \zeta_{m},
\end{align}
which is $o_{P}(1)$ for $m=N$ by \Cref{ass:concentration}(c), and for every $m\leq N$ with $m\to\infty$ since $r_{N}'\sqrt{\log m}\leq r_{N}'\sqrt{\log N}$.
\end{proof}

\begin{proof}[Proof of \Cref{lem:rateineq}]
The proof bounds the difference between the feasible and the infeasible criterion, first on the full sample, which gives \Cref{ass:noeffect}, and then on a block of size $b$, which gives \Cref{ass:fast}. Fix $p\in(0,1)$, let $\mathcal{B}_{N}$ be the event of \Cref{ass:smallbiasvalue}(b), and let $\mathcal{E}_{1,N}$ and $\mathcal{E}_{2,N}$ be the events, provided with a common constant $C_{p/4}$ by \Cref{lem:concentrate} and \eqref{eq:pdonsker}, on which
\begin{align*}
\Pr(\mathcal{E}_{1,N}) &:= \Pr\left(\sup_{\theta\in\Theta}\Big\|\mathbb{G}_{N}\left\{g(Z_{i};\theta,\widehat{\xi}_{i})-g(Z_{i};\theta,\xi_{0})\right\}\Big\| \;\leq\; C_{p/4}\,\zeta_{N}\right) \;\geq\; 1-p/4, \\
\Pr(\mathcal{E}_{2,N}) &:= \Pr\left(\sup_{\theta\in\Theta}\big\|\mathbb{G}_{N}\,g(Z_{i};\theta,\xi_{0})\big\| \;\leq\; C_{p/4}\right) \;\geq\; 1-p/4.
\end{align*}
Here $\zeta_{N}$ is $\zeta_{m}$ of \Cref{lem:concentrate} at $m=N$, so $\mathcal{E}_{1,N}$ has the stated probability by that lemma, and $\zeta_{N}=o(1)$ by \Cref{ass:concentration}(c). On $\mathcal{B}_{N}\cap\mathcal{E}_{1,N}\cap\mathcal{E}_{2,N}$, we decompose the criterion function difference. Setting $x:=\mathrm{E}_{N}\,g(Z_{i};\theta,\widehat{\xi}_{i})$ and $y:=\mathrm{E}_{N}\,g(Z_{i};\theta,\xi_{0})$,
\begin{align*}
\big|Q_{N}(\theta,\widehat{\xi})-Q_{N}(\theta,\xi_{0})\big|
&= \big|\|x\|_{+}^{2}-\|y\|_{+}^{2}\big|
\;\leq\; \|x-y\|^{2} + 2\|x-y\|\,\|y\|_{+},
\end{align*}
where the inequality uses $|\|a\|_{+}^{2}-\|b\|_{+}^{2}|\leq\|a-b\|^{2}+2\|a-b\|\,\|b\|_{+}$, a consequence of the triangle inequality for $\|\cdot\|_{+}$. The two factors are bounded as follows:
\begin{align*}
\|x-y\|
&= \Big\|\mathrm{E}\big[g(Z_{i};\theta,\widehat{\xi}_{i})-g(Z_{i};\theta,\xi_{0})\big] + \mathbb{G}_{N}\big[g(Z_{i};\theta,\widehat{\xi}_{i})-g(Z_{i};\theta,\xi_{0})\big]/\sqrt{N}\Big\| \\
&\leq \bar{s}_{N} + C_{p/4}N^{-1/2}\zeta_{N}, \\
\|y\|_{+} &= \big\|\mathbb{G}_{N}\,g(Z_{i};\theta,\xi_{0})/\sqrt{N} + \mathrm{E}\,g(Z_{i};\theta,\xi_{0})\big\|_{+}.
\end{align*}

\noindent\emph{Verification of \Cref{ass:noeffect}(a).}
Taking
\begin{align*}
r_{p} = (2C_{p/4})^{2} + 4C_{p/4}\!\left(C_{p/4}+\sup_{\theta\in\Theta}\|\mathrm{E}\,g(Z_{i};\theta,\xi_{0})\|\right),
\end{align*}
and using $\sqrt{N}\bar{s}_{N}+C_{p/4}\zeta_{N}\leq 2C_{p/4}$ for $N$ large, we obtain, on $\mathcal{B}_{N}\cap\mathcal{E}_{1,N}\cap\mathcal{E}_{2,N}$ and hence with probability at least $1-p$,
\begin{align*}
\sqrt{N}\,\sup_{\theta\in\Theta}\big|Q_{N}(\theta,\widehat{\xi})-Q_{N}(\theta,\xi_{0})\big| \;\leq\; r_{p}.\end{align*}

\noindent\emph{Verification of \Cref{ass:noeffect}(b).}
On $\Theta_{I}$, $\|\mathrm{E}\,g(Z_{i};\theta,\xi_{0})\|_{+}=0$ by definition of the identified set, so
\begin{align*}
\sup_{\theta\in\Theta_{I}}\sqrt{N}\,\|y\|_{+}
\;\leq\; \sup_{\theta\in\Theta_{I}}\|\mathbb{G}_{N}\,g(Z_{i};\theta,\xi_{0})\| \;\leq\; C_{p/4},
\end{align*}
and hence, on $\mathcal{B}_{N}\cap\mathcal{E}_{1,N}\cap\mathcal{E}_{2,N}$,
\begin{align*}
N\,\sup_{\theta\in\Theta_{I}}\big|Q_{N}(\theta,\widehat{\xi})-Q_{N}(\theta,\xi_{0})\big| \;\leq\; \big(\sqrt{N}\bar{s}_{N}+C_{p/4}\zeta_{N}\big)^{2}+2\big(\sqrt{N}\bar{s}_{N}+C_{p/4}\zeta_{N}\big)C_{p/4} \;\leq\; \epsilon\end{align*}
for $N$ large, since $\sqrt{N}\bar{s}_{N}\to 0$ and $\zeta_{N}\to 0$. This gives \Cref{ass:noeffect}(b).

\noindent\emph{Verification of \Cref{ass:fast}.}
Fix a block $j$ of size $b$, balanced across folds, and let $x$ and $y$ now denote the block averages $\mathrm{E}_{j,b}\,g(Z_{i};\theta,\widehat{\xi}_{i})$ and $\mathrm{E}_{j,b}\,g(Z_{i};\theta,\xi_{0})$. Every block draws $b/K$ observations from each fold, and those in fold $k$ are i.i.d.\ and independent of $\widehat{\xi}^{(k)}$, so \Cref{lem:concentrate} applies at $m=b$ with the full-sample nuisance: with probability at least $1-p/4$,
\begin{align*}
\sup_{\theta\in\Theta}\sqrt{b}\,\|x-y\| \;\leq\; \sqrt{b}\,\bar{s}_{N} + C_{p/4}\,\zeta_{b},
\end{align*}
with $\zeta_{b}$ the $\zeta_{m}$ of \Cref{lem:concentrate} at $m=b$ and a constant that depends on the block only through $b$. Both terms are $o(1)$: $\sqrt{b}\,\bar{s}_{N}\leq\sqrt{N}\,\bar{s}_{N}$ by \Cref{ass:smallbiasvalue}(e), $r_{N}'\sqrt{\log b}\leq r_{N}'\sqrt{\log N}$ by \Cref{ass:concentration}(c), and $b^{-1/2+1/c}\log b\to 0$ as $b\to\infty$. For $\theta\in\Theta_{I}^{C_{0}\epsilon_{N}}$,
\begin{align}
\label{eq:fastdecomp}
b\,\big|Q_{j,b}(\theta,\widehat{\xi})-Q_{j,b}(\theta,\xi_{0})\big|
&\leq b\,\|x-y\|^{2} + 2b\,\|x-y\|\,\|y\|_{+},
\end{align}
so the first term is $o(1)$. For the second factor, \eqref{eq:pdonsker} applied to the $b$ i.i.d.\ observations of the block gives an event of probability at least $1-p/4$ on which $\sup_{\theta\in\Theta}\|\mathbb{G}_{b}\,g(Z_{i};\theta,\xi_{0})\|\leq C_{p/4}$, and there \Cref{ass:dominance} gives
\begin{align*}
\sup_{\theta\in\Theta_{I}^{C_{0}\epsilon_{N}}}\sqrt{b}\,\|y\|_{+}
&\leq C_{p/4} + \sup_{\theta\in\Theta_{I}^{C_{0}\epsilon_{N}}}\sqrt{b}\,\|\mathrm{E}\,g(Z_{i};\theta,\xi_{0})\|_{+} \\
&\leq C_{p/4} + C_{\max}\,C_{0}\,\sqrt{b}\,\epsilon_{N} \;=\; C_{p/4}+o(1),
\end{align*}
because $\sqrt{b}\,\epsilon_{N}=\sqrt{b(1\vee c_{N})/N}\to 0$ by hypothesis. The second term of \eqref{eq:fastdecomp} is therefore the product of an $o(1)$ and an $O(1)$ factor. Thus, for any $\epsilon,p>0$ and $N$ large,
\begin{align*}
\Pr\left(\sup_{\theta\in\Theta_{I}^{C_{0}\epsilon_{N}}}b\,\big|Q_{j,b}(\theta,\widehat{\xi})-Q_{j,b}(\theta,\xi_{0})\big| \;\leq\; \epsilon\right) \;\geq\; 1-p,
\end{align*}
uniformly in $j$. On a block, the bias enters at scale $\sqrt{b}\,\bar{s}_{N}$ and the expansion at scale $\sqrt{b}\,\epsilon_{N}$, both smaller than their full-sample counterparts, so no condition beyond \Cref{ass:smallbiasvalue}(e), \Cref{ass:concentration}(c) and $b\,c_{N}/N\to 0$ is required. The same decomposition at $b=N$ shows that the full-sample analogue would require $\sqrt{N}\epsilon_{N}(\sqrt{N}\bar{s}_{N}+\zeta_{N})=o(1)$, a strengthening of the bias and continuity conditions by the factor $\sqrt{1\vee c_{N}}$; the subsampling construction does not require it.
\end{proof}

\subsection{Proofs of the main results}
\label{sec:proofs}

\begin{proof}[Proof of \Cref{thm:main:ineq}]
We verify that \Cref{ass:cons}(a)--(d) hold under the conditions of \Cref{sec:theory}, then apply \Cref{lem:rate}.

\medskip
\noindent\emph{Verification of \Cref{ass:cons}(a).}
By \Cref{ass:partid}, for every $\epsilon>0$,
\begin{align*}
\inf_{\theta\in\Theta\setminus\Theta_{I}^{\epsilon}}Q(\theta,\xi_{0})
\;=\; \inf_{\theta\in\Theta\setminus\Theta_{I}^{\epsilon}}\big\|\mathrm{E}\,g(Z_{i};\theta,\xi_{0})\big\|_{+}^{2}
\;\geq\; C_{\min}^{2}\,(\epsilon\wedge\delta_{\min})^{2} > 0.\end{align*}

\medskip
\noindent\emph{Verification of \Cref{ass:cons}(b).}
For $\theta\in\Theta_{I}$ we have $\mathrm{E}\,g_{l}(Z_{i};\theta,\xi_{0})\leq 0$ for every $l$, and the map $v\mapsto\|v\|_{+}$ is coordinatewise non-decreasing. Hence, writing $\sqrt{N}\,\mathrm{E}_{N}\,g=\mathbb{G}_{N}\,g+\sqrt{N}\,\mathrm{E}\,g$ and dropping the non-positive second term,
\begin{align*}
\sup_{\theta\in\Theta_{I}}N\,Q_{N}(\theta,\xi_{0})
&= \sup_{\theta\in\Theta_{I}}\big\|\mathbb{G}_{N}\,g(Z_{i};\theta,\xi_{0})+\sqrt{N}\,\mathrm{E}\,g(Z_{i};\theta,\xi_{0})\big\|_{+}^{2} \\
&\leq \sup_{\theta\in\Theta_{I}}\big\|\mathbb{G}_{N}\,g(Z_{i};\theta,\xi_{0})\big\|_{+}^{2}
= O_{P}(1),
\end{align*}
the last step by the $P$-Donsker property \eqref{eq:pdonsker} and the continuous mapping theorem. 

\medskip
\noindent\emph{Verification of \Cref{ass:cons}(c).}
Again by the $P$-Donsker property,
\begin{align*}
&\sup_{\theta\in\Theta}\sqrt{N}\,\Big|\|\mathrm{E}\,g(Z;\theta,\xi_{0})\|_{+}^{2}-\|\mathrm{E}_{N}\,g(Z;\theta,\xi_{0})\|_{+}^{2}\Big| \\
&\qquad\leq \Big(\sup_{\theta}\|\mathrm{E}\,g(Z;\theta,\xi_{0})\|_{+}+\sup_{\theta}\|\mathrm{E}_{N}\,g(Z;\theta,\xi_{0})\|_{+}\Big)\,\sup_{\theta}\|\mathbb{G}_{N}\,g(Z;\theta,\xi_{0})\|
= O_{P}(1).
\end{align*}

\medskip
\noindent\emph{Verification of \Cref{ass:cons}(d).}
Fix $p>0$. By \Cref{ass:smallbiasvalue}(b), $\Pr(\mathcal{B}_{N})\geq 1-p/3$ for $N$ large, and by \Cref{lem:concentrate} and \eqref{eq:pdonsker} there exists $C_{p/3}>0$ such that, for $N$ large,
\begin{align*}
\Pr(\mathcal{E}_{1,N}') &:= \Pr\left(\sup_{\theta\in\Theta}\Big\|\mathbb{G}_{N}\left\{g(Z_{i};\theta,\widehat{\xi}_{i})-g(Z_{i};\theta,\xi_{0})\right\}\Big\| \;\leq\; C_{p/3}\right) \;\geq\; 1-p/3, \\
\Pr(\mathcal{E}_{2,N}') &:= \Pr\left(\sup_{\theta\in\Theta}\big\|\mathbb{G}_{N}\,g(Z_{i};\theta,\xi_{0})\big\| \;\leq\; C_{p/3}\right) \;\geq\; 1-p/3.
\end{align*}
On $\mathcal{B}_{N}\cap\mathcal{E}_{1,N}'\cap\mathcal{E}_{2,N}'$ and for large $N$, by the fold-conditional convention \eqref{eq:foldE},
\begin{align*}
N\,Q_{N}(\theta,\widehat{\xi})
\;=\; \Big\|\mathbb{G}_{N}\,g(Z_{i};\theta,\widehat{\xi}_{i}) + \sqrt{N}\,\mathrm{E}\,g(Z_{i};\theta,\widehat{\xi}_{i})\Big\|_{+}^{2}.
\end{align*}
Fix $\theta$ with $d(\theta,\Theta_{I})\geq 6C_{p/3}/(C_{\min}\sqrt{N})$ and set
\begin{align*}
\mu_{N} &:= \sqrt{N}\,\mathrm{E}\,g(Z_{i};\theta,\xi_{0}), \\
R_{N} &:= \mathbb{G}_{N}\,g(Z_{i};\theta,\xi_{0}) + \mathbb{G}_{N}\left\{g(Z_{i};\theta,\widehat{\xi}_{i})-g(Z_{i};\theta,\xi_{0})\right\} \\
&\qquad + \sqrt{N}\,\mathrm{E}\left[g(Z_{i};\theta,\widehat{\xi}_{i})-g(Z_{i};\theta,\xi_{0})\right].
\end{align*}
By \Cref{ass:partid},
\begin{align*}
\|\mu_{N}\|_{+} \;\geq\; \sqrt{N}\,C_{\min}\,\big(d(\theta,\Theta_{I})\wedge\delta_{\min}\big) \;\geq\; 6C_{p/3}.
\end{align*}
By $\mathcal{E}_{1,N}'$, $\mathcal{E}_{2,N}'$ and the bound $\sqrt{N}\,\bar{s}_{N}<C_{p/3}$ for $N$ large, which \Cref{ass:smallbiasvalue}(e) gives for the fold-conditional bias on $\mathcal{B}_{N}$,
\begin{align*}
\|R_{N}\| \;\leq\; C_{p/3}+C_{p/3}+C_{p/3} \;=\; 3C_{p/3}.
\end{align*}
Hence $\|\mu_{N}\|_{+}\geq 2\|R_{N}\|$, and the triangle inequality gives
\begin{align*}
\|\mu_{N}+R_{N}\|_{+} \;\geq\; \|\mu_{N}\|_{+}-\|R_{N}\| \;\geq\; \tfrac{1}{2}\,\|\mu_{N}\|_{+},
\end{align*}
so that
\begin{align*}
N\,Q_{N}(\theta,\widehat{\xi}) \;=\; \|\mu_{N}+R_{N}\|_{+}^{2}
\;\geq\; \tfrac{1}{4}\,\|\mu_{N}\|_{+}^{2}
\;\geq\; \tfrac{1}{4}\,N\,C_{\min}^{2}\,[d(\theta,\Theta_{I})\wedge\delta_{\min}]^{2}.
\end{align*}
This verifies \Cref{ass:cons}(d) with $\kappa=\frac{1}{4}C_{\min}^{2}$, $\omega=\delta_{\min}$, and $D_{p}=(6C_{p/3}/C_{\min})^{2}$. \Cref{ass:noeffect}(a) holds by \Cref{lem:rateineq}, so \Cref{lem:rate}(i)--(ii) give containment with probability approaching one and the rate.\end{proof}

\begin{proof}[Proof of \Cref{lem:degeneracy}]
Step~1 shows that the cross-fitted criterion vanishes on a contraction $\Theta_{I}^{-\widehat{\epsilon}_{N}}$ with radius $\widehat{\epsilon}_{N}=O_{P}(N^{-1/2})$, which is the degeneracy property of \Cref{def:degeneracy}. Step~2 derives the rate from it and the choice of contour level $\widehat{c}'$.

\medskip
\noindent\emph{Step 1: degeneracy on an inner contraction.}
Fix $\epsilon\in(0,\delta]$ and $\theta\in\Theta_{I}^{-\epsilon}$. By the fold-conditional convention \eqref{eq:foldE},
\begin{align*}
N\,Q_{N}(\theta,\widehat{\xi}) \;=\; \Big\|\mathbb{G}_{N}\,g(Z_{i};\theta,\widehat{\xi}_{i}) + \sqrt{N}\,\mathrm{E}\,g(Z_{i};\theta,\widehat{\xi}_{i})\Big\|_{+}^{2},
\end{align*}
and componentwise, for $l=1,\dots,L$,
\begin{align*}
\mathbb{G}_{N}\,g_{l}(\theta,\widehat{\xi}) + \sqrt{N}\,\mathrm{E}\,g_{l}(\theta,\widehat{\xi})
 &= \underbrace{\mathbb{G}_{N}\,g_{l}(\theta,\xi_{0}) + \sqrt{N}\,\mathrm{E}\,g_{l}(\theta,\xi_{0})}_{\text{(A)}} \\
 &\quad + \underbrace{\mathbb{G}_{N}\big[g_{l}(\theta,\widehat{\xi})-g_{l}(\theta,\xi_{0})\big]}_{\text{(B)}}
 + \underbrace{\sqrt{N}\,\mathrm{E}\big[g_{l}(\theta,\widehat{\xi})-g_{l}(\theta,\xi_{0})\big]}_{\text{(C)}},
\end{align*}
where (A) is the term at the true nuisance, (B) the empirical process of the nuisance estimation error, and (C) the fold-conditional bias.

By \eqref{eq:deg}, $\mathrm{E}\,g_{l}(\theta,\xi_{0})\leq -C(\epsilon\wedge\delta)=-C\epsilon$ for $\theta\in\Theta_{I}^{-\epsilon}$, and by \eqref{eq:pdonsker} the oracle process is bounded,
\begin{align*}
\mathbb{M}_{N}:=\sup_{\theta\in\Theta,\;1\leq l\leq L}\big|\mathbb{G}_{N}\,g_{l}(Z_{i};\theta,\xi_{0})\big| = O_{P}(1),
\end{align*}
so $\text{(A)}\leq\mathbb{M}_{N}-\sqrt{N}C\epsilon$. For (B), \Cref{lem:concentrate} gives $\sup_{\theta}\max_{l}|\text{(B)}|\lesssim_{P}\zeta_{N}=o_{P}(1)$ by \Cref{ass:concentration}(c). For (C), \Cref{ass:smallbiasvalue}(e) gives $\sup_{\theta}\max_{l}|\text{(C)}|\leq\sqrt{N}\bar{s}_{N}=o(1)$ on $\mathcal{B}_{N}$. Hence the event
\begin{align*}
\mathcal{H}_{N}:=\Big\{\sup_{\theta\in\Theta}\max_{l}\big(|\text{(B)}|+|\text{(C)}|\big)\leq 1\Big\}
\end{align*}
has probability approaching one. Choose the radius
\begin{align*}
\widehat{\epsilon}_{N} \;:=\; \frac{2}{C\sqrt{N}}\,\big(\mathbb{M}_{N}+1\big),
\end{align*}
which is $O_{P}(N^{-1/2})$ and at most $\delta$ with probability approaching one. On $\mathcal{H}_{N}\cap\{\widehat{\epsilon}_{N}\leq\delta\}$, for every $\theta\in\Theta_{I}^{-\widehat{\epsilon}_{N}}$ and every $l$,
\begin{align*}
\mathbb{G}_{N}\,g_{l}(\theta,\widehat{\xi}) + \sqrt{N}\,\mathrm{E}\,g_{l}(\theta,\widehat{\xi}) \;\leq\; \mathbb{M}_{N} - \sqrt{N}\,C\,\widehat{\epsilon}_{N} + 1 \;=\; -(\mathbb{M}_{N}+1) \;\leq\; 0,
\end{align*}
so all positive parts vanish and $Q_{N}(\theta,\widehat{\xi})=0=\inf_{\Theta}Q_{N}(\cdot,\widehat{\xi})$ for all $\theta\in\Theta_{I}^{-\widehat{\epsilon}_{N}}$. Consequently, with probability approaching one, $\Theta_{I}^{-\widehat{\epsilon}_{N}}\subseteq C_{N}(\widehat{c}';\widehat{\xi})$ for any level $\widehat{c}'\geq 0$, and by the second part of \eqref{eq:deg}, $d_{H}(\Theta_{I}^{-\widehat{\epsilon}_{N}},\Theta_{I})\leq M\widehat{\epsilon}_{N}=O_{P}(N^{-1/2})$, which together is \Cref{def:degeneracy} with $\Theta_{N}=\Theta_{I}^{-\widehat{\epsilon}_{N}}$.

\medskip
\noindent\emph{Step 2: rate from degeneracy and choice of $\widehat{c}'$.}
Let $\widehat{c}'=O_{P}(1)$ obey \eqref{eq:minc:deg} and set $\widehat{c}:=\max\{\widehat{c}',\sup_{\theta\in\Theta_{I}}N\,Q_{N}(\theta,\widehat{\xi})\}$, which is $O_{P}(1)$ by \Cref{ass:cons}(b) and \Cref{ass:noeffect}(b), both supplied by \Cref{lem:rateineq} and the proof of \Cref{thm:main:ineq}, and satisfies the level condition of \Cref{thm:main:ineq} by construction. By Step~1 there is $\widehat{\epsilon}_{N}=O_{P}(N^{-1/2})$ such that, with probability approaching one, $Q_{N}(\theta,\widehat{\xi})=0$ on $\Theta_{N}:=\Theta_{I}^{-\widehat{\epsilon}_{N}}$, hence
\begin{align*}
\Theta_{N} \;\subseteq\; C_{N}(\widehat{c}';\widehat{\xi}) \;\subseteq\; C_{N}(\widehat{c};\widehat{\xi}) \qquad\text{with probability approaching one.}
\end{align*}
By the second part of \eqref{eq:deg}, applicable since $\widehat{\epsilon}_{N}\leq\delta$ with probability approaching one, $d_{H}(\Theta_{N},\Theta_{I})\leq M\widehat{\epsilon}_{N}=O_{P}(N^{-1/2})$. By \Cref{thm:main:ineq} and $\widehat{c}=O_{P}(1)$, $d_{H}(C_{N}(\widehat{c};\widehat{\xi}),\Theta_{I})=O_{P}(N^{-1/2})$. The sandwich gives both directions,
\begin{align*}
\sup_{\theta\in C_{N}(\widehat{c}';\widehat{\xi})}d(\theta,\Theta_{I})&\leq\sup_{\theta\in C_{N}(\widehat{c};\widehat{\xi})}d(\theta,\Theta_{I}), \\
\sup_{\theta\in\Theta_{I}}d(\theta,C_{N}(\widehat{c}';\widehat{\xi}))&\leq\sup_{\theta\in\Theta_{I}}d(\theta,\Theta_{N}),
\end{align*}
and both right-hand sides are $O_{P}(N^{-1/2})$. Hence $d_{H}(C_{N}(\widehat{c}';\widehat{\xi}),\Theta_{I})=O_{P}(N^{-1/2})$.\end{proof}

\begin{proof}[Proof of \Cref{thm:main:subs}]
Step~1 brackets each feasible block statistic between two statistics that use $\xi_{0}$ and depend only on the data of their own block, Step~2 shows that the empirical distribution functions of the brackets concentrate, and Step~3 passes to the limit law $\mathcal{T}$ and derives part~(1); part~(2) follows.

\medskip
\noindent\emph{Step 1: two-sided bounds.}
Let $\epsilon_{N}=\sqrt{(1\vee c_{N})/N}$ be the radius of \Cref{ass:fast}, and let $C_{0}:=2\sqrt{2}/C_{\min}$ be the constant with which Step~3 of the proof of \Cref{lem:rate} delivers the rate at the level $c_{N}$: for $N$ large, $\epsilon_{N,p}^{2}=2c_{N}/(\kappa N)$ there, with $\kappa=C_{\min}^{2}/4$, so $\Pr(d_{H}(C_{N}(c_{N};\widehat{\xi}),\Theta_{I})\leq C_{0}\epsilon_{N})\to 1$. \Cref{ass:fast,ass:c5} are stated for every fixed multiple of $\epsilon_{N}$, so they apply at the radius $C_{0}\epsilon_{N}$. For a subsample $j$ (a block of size $b$; see Step~2), define
\begin{align*}
\overline{T}_{j,b,\xi} := \sup_{\theta\in\Theta_{I}^{C_{0}\epsilon_{N}}}b\,Q_{j,b}(\theta,\xi), \qquad
\underline{T}_{j,b,\xi} := \inf_{S\in\mathcal{K}_{N}}\;\sup_{\theta\in S}b\,Q_{j,b}(\theta,\xi),
\end{align*}
where $\mathcal{K}_{N}:=\{S\subset\Theta:\;S\neq\varnothing,\;S\text{ compact},\;d_{H}(S,\Theta_{I})\leq C_{0}\epsilon_{N}\}$. Under the continuity hypothesis, the set $\mathcal{K}_{N}$ is compact under the Hausdorff metric, and since $S\mapsto\sup_{\theta\in S}Q_{j,b}(\theta,\xi)$ is continuous, the infimum is attained at some $\Theta_{b}^{*}\in\mathcal{K}_{N}$; under \eqref{eq:c5unif}, attainment is not used.

Let $\mathcal{A}_{N}:=\{d_{H}(C_{N}(c_{N};\widehat{\xi}),\Theta_{I})\leq C_{0}\epsilon_{N}\}$. The deterministic level $c_{N}\to\infty$ with $c_{N}/N\to 0$ satisfies the level condition of \Cref{lem:rate} by Step~4 of its proof, so $\Pr(\mathcal{A}_{N})\to 1$ as just explained. On $\mathcal{A}_{N}\cap\mathcal{B}_{N}$ we may take $S=C_{N}(c_{N};\widehat{\xi})$ itself, which is compact under the continuity hypothesis, non-empty, and satisfies $d_{H}(S,\Theta_{I})\leq C_{0}\epsilon_{N}$, so $S\in\mathcal{K}_{N}$ and $S\subseteq\Theta_{I}^{C_{0}\epsilon_{N}}$. Hence
\begin{align*}
\underline{T}_{j,b,\widehat{\xi}} \;\leq\; \widehat{\mathcal{T}}_{j,b} := \sup_{\theta\in C_{N}(c_{N};\widehat{\xi})}b\,Q_{j,b}(\theta,\widehat{\xi}) \;\leq\; \overline{T}_{j,b,\widehat{\xi}}.
\end{align*}

Fix $\epsilon>0$ and let $E_{j}$ denote the event $\sup_{\theta\in\Theta_{I}^{C_{0}\epsilon_{N}}}b\,|Q_{j,b}(\theta,\widehat{\xi})-Q_{j,b}(\theta,\xi_{0})|\leq\epsilon$. On $\mathcal{A}_{N}\cap E_{j}$,
\begin{align*}
\underline{T}_{j,b,\xi_{0}}-\epsilon \;\leq\; \underline{T}_{j,b,\widehat{\xi}} \;\leq\; \widehat{\mathcal{T}}_{j,b} \;\leq\; \overline{T}_{j,b,\widehat{\xi}} \;\leq\; \overline{T}_{j,b,\xi_{0}}+\epsilon.
\end{align*}
Because every block draws $b/K$ observations from each fold, the observations of block $j$ in fold $k$ are i.i.d.\ and independent of $\widehat{\xi}^{(k)}$, so \Cref{lem:rateineq} applies to block $j$ with the full-sample nuisance and gives $\Pr(E_{j}^{c})\to 0$ for each $j$, with a bound that depends on $j$ only through $b$. Let $\widehat{\pi}_{N}:=B_{N}^{-1}\sum_{j}\mathbf{1}\{E_{j}^{c}\}$ be the fraction of blocks on which the two-sided bound may fail. Then $\mathrm{E}\,\widehat{\pi}_{N}\to 0$ and $\widehat{\pi}_{N}=o_{P}(1)$ by Markov's inequality. No union bound over the $B_{N}\to\infty$ blocks is needed.

Define the empirical cdfs over the $B_{N}$ subsamples (blocks):
\begin{align*}
\widehat{G}(x) &:= \frac{1}{B_{N}}\sum_{j=1}^{B_{N}}\mathbf{1}\{\widehat{\mathcal{T}}_{j,b}\leq x\}, \\
\underline{G}(x) &:= \frac{1}{B_{N}}\sum_{j=1}^{B_{N}}\mathbf{1}\{\underline{T}_{j,b,\xi_{0}}\leq x\}, \\
\overline{G}(x) &:= \frac{1}{B_{N}}\sum_{j=1}^{B_{N}}\mathbf{1}\{\overline{T}_{j,b,\xi_{0}}\leq x\}.
\end{align*}
From the two-sided bound, since $\widehat{\mathcal{T}}_{j,b}\leq x$ whenever $\overline{T}_{j,b,\xi_{0}}\leq x-\epsilon$ and $\underline{T}_{j,b,\xi_{0}}\leq x+\epsilon$ whenever $\widehat{\mathcal{T}}_{j,b}\leq x$ on $E_{j}$, on $\mathcal{A}_{N}$ and for any fixed $\epsilon>0$ and every $x$,
\begin{align*}
\overline{G}(x-\epsilon)-\widehat{\pi}_{N} \;\leq\; \widehat{G}(x) \;\leq\; \underline{G}(x+\epsilon)+\widehat{\pi}_{N}.
\end{align*}

The compactness of $C_{N}(c_{N};\widehat{\xi})$ and the attainment of suprema used above follow from the continuity hypothesis, under which $\theta\mapsto Q_{j,b}(\theta,\xi)$ is continuous for each fixed $(j,b,\xi)$. Under \eqref{eq:c5unif} instead, the argument runs with the word ``compact'' deleted from the definition of $\mathcal{K}_{N}$, and in Steps~2 and~3 the statistics $\underline{T}_{j,b,\xi_{0}}$ and $\overline{T}_{j,b,\xi_{0}}$ are replaced by the brackets
\begin{equation*}
\big(\sqrt{T_{b,j}}-\Delta_{b,j}\big)_{+}^{2}\;\leq\;\underline{T}_{j,b,\xi_{0}}\;\leq\;\overline{T}_{j,b,\xi_{0}}\;\leq\;\big(\sqrt{T_{b,j}}+\Delta_{b,j}\big)^{2},
\end{equation*}
where $T_{b,j}$ and $\Delta_{b,j}$ are $T_{b}$ and $\Delta_{b}$ of \eqref{eq:c5unif} computed on block $j$. The brackets hold because $\mathcal{K}_{N}\subseteq\mathcal{S}_{N}(C_{0})$ and $\Theta_{I}^{C_{0}\epsilon_{N}}\in\mathcal{S}_{N}(C_{0})$, they are i.i.d.\ across blocks, and both converge in law to $\mathcal{T}$ because $T_{b}\Rightarrow\mathcal{T}$ and $\Delta_{b}\to_{p}0$.

\medskip
\noindent\emph{Step 2: concentration for block subsampling.}
Partition the sample into $B_{N}:=\lfloor N/b\rfloor$ disjoint blocks of size $b$, and let block $j$ define $\underline{T}_{j,b,\xi_{0}}$ and $\overline{T}_{j,b,\xi_{0}}$. Both statistics are functions of block $j$ alone: $\mathcal{K}_{N}$ depends only on $\Theta_{I}$ and on the deterministic radius $C_{0}\epsilon_{N}$, and $\xi_{0}$ is non-random. Since the blocks are disjoint and the data are i.i.d., $(\underline{T}_{j,b,\xi_{0}})_{j\leq B_{N}}$ are i.i.d.\ within the row, and $\underline{G}(x)$ is an average of $B_{N}$ i.i.d.\ Bernoulli indicators. Hence, by Chebyshev's inequality, for each fixed $x$,
\begin{align*}
\big|\underline{G}(x)-\mathrm{E}\,\underline{G}(x)\big| \;=\; O_{P}\!\big(B_{N}^{-1/2}\big) \;=\; O_{P}\!\big(\sqrt{b/N}\big),
\end{align*}
and similarly for $\overline{G}(x)$. The Dvoretzky--Kiefer--Wolfowitz inequality upgrades this to a bound uniform in $x$ of the same order. These deviations vanish under $b/N\to 0$ alone. The \emph{estimated} statistics $\widehat{\mathcal{T}}_{j,b}$ are dependent across blocks, because $C_{N}(c_{N};\widehat{\xi})$ is built from the full sample. The two-sided bound of Step~1 is what lets us work with the independent brackets instead, and it is why the argument does not require the condition $b=o(\sqrt{N})$ that a bounded-differences bound applied directly to $\widehat{G}$ would force.

Moreover, blocks are identically distributed, hence
\begin{align*}
\mathrm{E}\,\underline{G}(x) = \Pr(\underline{T}_{j,b,\xi_{0}}\leq x), \qquad
\mathrm{E}\,\overline{G}(x) = \Pr(\overline{T}_{j,b,\xi_{0}}\leq x).
\end{align*}

\medskip
\noindent\emph{Step 3: small-$b$ limit and quantiles.}
By \Cref{ass:c5}, at every continuity point $x$ of the cdf of $\mathcal{T}$,
\begin{align*}
\Pr(\underline{T}_{j,b,\xi_{0}}\leq x)-\Pr(\mathcal{T}\leq x) \;\to\; 0, \qquad
\Pr(\overline{T}_{j,b,\xi_{0}}\leq x)-\Pr(\mathcal{T}\leq x) \;\to\; 0,
\end{align*}
as $b\to\infty$. Combining with Step~2 gives
\begin{align*}
\underline{G}(x) = \Pr(\mathcal{T}\leq x)+o_{P}(1), \qquad
\overline{G}(x) = \Pr(\mathcal{T}\leq x)+o_{P}(1).
\end{align*}
Using the two-sided bound, $\widehat{\pi}_{N}=o_{P}(1)$ and the continuity of the cdf of $\mathcal{T}$ (\Cref{ass:c4}), for any fixed continuity point $x$ and any fixed $\epsilon>0$,
\begin{align*}
\Pr(\mathcal{T}\leq x-\epsilon)+o_{P}(1) \;\leq\; \widehat{G}(x) \;\leq\; \Pr(\mathcal{T}\leq x+\epsilon)+o_{P}(1).
\end{align*}
Letting $\epsilon\downarrow 0$ yields $\widehat{G}(x)=\Pr(\mathcal{T}\leq x)+o_{P}(1)$ at continuity points. Write $F$ for the cdf of $\mathcal{T}$ and $q:=q_{1-\tau}(\mathcal{T})$. Since $\Pr(\mathcal{T}=0)=F(0)<1-\tau$, $q>0$, so $F$ is continuous at $q$ by \Cref{ass:c4} and $F(q)=1-\tau$. Since $F$ is strictly increasing near $q$, $F(q-\delta)<1-\tau<F(q+\delta)$ for all small $\delta>0$, and $q\pm\delta$ can be taken to be continuity points. Applying the convergence of $\widehat{G}$ at $q\pm\delta$ shows that the $(1-\tau)$-quantile $\widehat{c}_{\tau}$ of $\widehat{G}$ lies in $[q-\delta,q+\delta]$ with probability approaching one, so $\widehat{c}_{\tau}\to_{p}q$. This proves part~(1).

\medskip
\noindent\emph{Part~(2).}
Coverage fails exactly when some $\theta\in\Theta_{I}$ is excluded, i.e.\ on the event $\{\sup_{\theta\in\Theta_{I}}N\,Q_{N}(\theta,\widehat{\xi})>\widehat{c}_{\tau}\}$. Write $\mathcal{C}_{N}:=\sup_{\theta\in\Theta_{I}}N\,Q_{N}(\theta,\xi_{0})$ for the infeasible statistic and $\widehat{\mathcal{C}}_{N}:=\sup_{\theta\in\Theta_{I}}N\,Q_{N}(\theta,\widehat{\xi})$ for its feasible counterpart. By \Cref{ass:noeffect}(b), which \Cref{lem:rateineq} verifies from the primitive conditions, $|\widehat{\mathcal{C}}_{N}-\mathcal{C}_{N}|\leq N\sup_{\theta\in\Theta_{I}}|Q_{N}(\theta,\widehat{\xi})-Q_{N}(\theta,\xi_{0})|=o_{P}(1)$, so $\widehat{\mathcal{C}}_{N}=\mathcal{C}_{N}+o_{P}(1)$. By \Cref{ass:c4}, $\mathcal{C}_{N}\Rightarrow\mathcal{T}$, hence $\widehat{\mathcal{C}}_{N}\Rightarrow\mathcal{T}$ by Slutsky. By part~(1), $\widehat{c}_{\tau}\to_{p}q_{1-\tau}(\mathcal{T})$, which is a continuity point of the cdf of $\mathcal{T}$ by hypothesis. Applying Slutsky's theorem to the pair $(\widehat{\mathcal{C}}_{N},\widehat{c}_{\tau})$ and using continuity of the limiting cdf at $q_{1-\tau}(\mathcal{T})$,
\begin{align*}
\Pr\!\big(\Theta_{I}\subseteq C_{N}(\widehat{c}_{\tau};\widehat{\xi})\big)
\;=\; \Pr\!\big(\widehat{\mathcal{C}}_{N}\leq\widehat{c}_{\tau}\big)
\;\longrightarrow\; \Pr\!\big(\mathcal{T}\leq q_{1-\tau}(\mathcal{T})\big) \;=\; 1-\tau,
\end{align*}
which is the claim. The procedure is asymptotically exact for $\Theta_{I}$ as a whole, and therefore conservative for any individual $\theta\in\Theta_{I}$.
\end{proof}

\subsection{Verification of the general conditions}
\label{sec:verification}

This section proves \Cref{cor:ex1,cor:iqr,cor:ex2}. We verify the conditions of \Cref{sec:conditions}, namely \Crefrange{ass:partid}{ass:concentration}, \Cref{ass:dominance}, and \Cref{ass:c4,ass:c5}; the conclusions are then \Cref{thm:main:ineq}, \Cref{thm:main:subs} and \Cref{lem:degeneracy}. Throughout the asymptotic analysis, the number of thresholds $S$ and the number of directions $L$ are fixed.

\subsubsection{Affine moment systems}
\label{sec:affine}

Suppose $\mathrm{E}\,g(Z;\theta,\xi_0) = A\theta - c$ for a fixed $L\times d_\theta$ matrix $A$ and vector $c$, so that $\Theta_I=\{\theta\in\Theta: A\theta\leq c\}$ is a polyhedron. The linear case of \Cref{sec:example1-grid} and the example of \Cref{sec:example2} are of this form.

\begin{lemma}[Identifiability and majorant for polyhedra]
\label{lem:hoffman}
Let the polyhedron $\{\theta:A\theta\leq c\}$ be non-empty and contained in $\Theta$, so that $\Theta_I=\{\theta\in\Theta:A\theta\leq c\}=\{\theta:A\theta\leq c\}$. Then \Cref{ass:partid} holds with $C_{\min}=1/H(A)$ and any $\delta_{\min}>0$, where $H(A)<\infty$ is the Hoffman constant of $A$, and \Cref{ass:dominance} holds with $C_{\max}=\|A\|$.
\end{lemma}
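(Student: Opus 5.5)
The plan is to derive both inequalities from the explicit form of the population moment, $\mathrm{E}\,g(Z;\theta,\xi_0)=A\theta-c$, so that $\|\mathrm{E}\,g(Z;\theta,\xi_0)\|_+=\|(A\theta-c)_+\|$. Every quantity involved is then a statement about a fixed, non-empty polyhedron $P:=\{\theta:A\theta\leq c\}$, and the hypothesis $P\subseteq\Theta$ gives $\Theta_I=P$. In particular $d(\theta,\Theta_I)$ is the Euclidean distance from $\theta$ to $P$ for every $\theta\in\Theta$.

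For \Cref{ass:partid} we invoke Hoffman's error bound. For a non-empty polyhedron $\{\theta:A\theta\leq c\}$ there is a constant $H(A)<\infty$, depending only on $A$ and not on $c$, such that
\[
d(\theta,P)\leq H(A)\,\|(A\theta-c)_+\|\qquad\text{for all }\theta\in\mathbb{R}^{d_\theta}.
\]
We will quote this in the Euclidean norm on both sides, absorbing any norm-equivalence factor into $H(A)$. Rearranging gives
\[
\|\mathrm{E}\,g(Z;\theta,\xi_0)\|_+\geq H(A)^{-1}\,d(\theta,\Theta_I)\geq H(A)^{-1}\,[d(\theta,\Theta_I)\wedge\delta_{\min}]
\]
for every $\theta\in\Theta$. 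Because the bound holds without truncation, it holds for any $\delta_{\min}>0$. Non-emptiness of $\Theta_I$ is a hypothesis of the lemma, so \Cref{ass:partid} holds with $C_{\min}=1/H(A)$. The only subtlety here is the degenerate case $A=0$. Then non-emptiness forces $c\geq0$, so $P=\mathbb{R}^{d_\theta}$, the distance is identically zero, and the bound is trivial.

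For \Cref{ass:dominance}, fix $\theta\in\Theta$ and let $\theta^*\in P$ be its Euclidean projection onto $P$. The projection exists and is unique because $P$ is closed, convex and non-empty, and $\|\theta-\theta^*\|=d(\theta,\Theta_I)$. Since $A\theta^*-c\leq0$ coordinatewise and $v\mapsto v_+$ is coordinatewise non-decreasing, we have $(A\theta-c)_+\leq(A\theta-A\theta^*)_+$ coordinatewise, and both sides are non-negative. Taking Euclidean norms gives
\[
\|(A\theta-c)_+\|\leq\|(A(\theta-\theta^*))_+\|\leq\|A(\theta-\theta^*)\|\leq\|A\|\,d(\theta,\Theta_I),
\]
with $\|A\|$ the operator norm. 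This is \Cref{ass:dominance} with $C_{\max}=\|A\|$.

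The main obstacle is bookkeeping rather than mathematics. We must make sure the cited form of Hoffman's lemma matches the norms used in $\|\cdot\|_+$ and $d(\cdot,\cdot)$, and that $H(A)$ is finite independently of $c$. We also need the identification $\Theta_I=P$, which requires $P\subseteq\Theta$, so that the distance to $\Theta_I$ equals the distance to the full polyhedron even for $\theta$ near the boundary of $\Theta$. I would cite Hoffman (1952) directly, or the refined constant of G\"uler, Hoffman and Rothblum, rather than reprove it.
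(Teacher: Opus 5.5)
Your proposal is correct and follows the paper's proof: Hoffman's error bound, with $H(A)$ independent of $c$, gives the untruncated minorant with $C_{\min}=1/H(A)$ for any $\delta_{\min}>0$, and projecting $\theta$ onto $\Theta_I$ gives the majorant with $C_{\max}=\|A\|$. The only cosmetic difference is in the majorant: you use coordinatewise monotonicity to get $(A\theta-c)_+\leq(A(\theta-\theta^\ast))_+$, whereas the paper uses that $v\mapsto\|v\|_+$ is $1$-Lipschitz together with $\|(A\theta^\ast-c)_+\|=0$, and both yield the same bound.
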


\begin{proof}
Hoffman's error bound \citep{Hoffman1952} states that for a consistent system $A\theta\leq c$ there is a finite constant $H(A)$, depending on $A$ alone, with $d(\theta,\Theta_I)\leq H(A)\,\|(A\theta-c)_+\|$ for every $\theta$. Since $\|\mathrm{E}\,g(Z;\theta,\xi_0)\|_+=\|(A\theta-c)_+\|$, this is \Cref{ass:partid} with $C_{\min}=1/H(A)$. For the upper bound, let $\theta^\ast$ be the projection of $\theta$ on $\Theta_I$. Then $A\theta^\ast-c\leq 0$, so $\|(A\theta^\ast-c)_+\|=0$, and since $v\mapsto\|v\|_+$ is $1$-Lipschitz,
\begin{equation*}
\|(A\theta-c)_+\| = \big|\,\|(A\theta-c)_+\|-\|(A\theta^\ast-c)_+\|\,\big| \leq \|A(\theta-\theta^\ast)\| \leq \|A\|\,d(\theta,\Theta_I).
\end{equation*}
\end{proof}

\noindent
For the remaining conditions we additionally use that, in these two cases, $g(Z; \theta, \xi)$ is affine in $\theta$ for every fixed $\xi$. The polyhedron is contained in $\Theta$ in both cases. In the linear case, $\ell_{0,k}\geq 0$ and $u_{0,k}\leq 1$ place the polytope inside $\Theta=[0,1]^S$. In the second example, the grid of directions is symmetric about the origin, so the polyhedron cut out by the $L$ half-spaces is bounded, and $\Theta$ is any compact set containing it. For a finite system of affine moment inequalities the polynomial-minorant condition holds automatically, and the constants are computable from $A$. In the scalar case, $S=1$, the system is $\ell_0\leq\theta\leq u_0$ with $A=(-1,1)'$, so $H(A)=1$ and $\|A\|=\sqrt2$, and \Cref{ass:partid} and \Cref{ass:dominance} hold with $C_{\min}=1$ and $C_{\max}=\sqrt2$.

\Cref{ass:donsker} is immediate in the affine case. Write $g(Z;\theta,\xi_0)=\beta(Z)+B(Z)\theta$. The class $\{g(\cdot;\theta,\xi_0):\theta\in\Theta\}$ is contained in a finite-dimensional vector space of measurable functions, hence it is a VC-subgraph of index at most $d_\theta+2$, and it has a square-integrable envelope $|\beta(Z)|+\|B(Z)\|\sup_{\theta\in\Theta}\|\theta\|$ whenever $\beta$ and $B$ are square-integrable and $\Theta$ is bounded. Such a class is $P$-Donsker \citep[Theorem 2.6.7 and Section 2.5]{VaartWellner}. The same two facts give the entropy bound of \Cref{ass:concentration}(b) with $v=O(d_\theta)$.

\Cref{ass:c4,ass:c5} also follow in the affine case. Because $g$ is affine in $\theta$, the Gaussian limit is $\mathbb{G}(\theta)=\Gamma_0+\Gamma_1\theta$ for a Gaussian pair $(\Gamma_0,\Gamma_1)$. The population moment $A\theta-c$ is Lipschitz in $\theta$, so \Cref{lem:lipschitz} gives the convergence $\mathcal{C}_N\Rightarrow\mathcal{T}$, the continuity of the cdf of $\mathcal{T}$ on $(0,\infty)$ and \eqref{eq:c5unif}, hence \Cref{ass:c5} for data-dependent sets. \Cref{lem:lipschitz} also gives the strict monotonicity of the cdf of $\mathcal{T}$ on $(0,\infty)$ that \Cref{thm:main:subs} requires, whenever $\Pr(\mathcal{T}>0)>0$.

Finally, \eqref{eq:deg} holds in the affine case whenever $\Theta_I$ has non-empty interior. If $B(\theta,\epsilon)\subseteq\Theta_I$ then $B(\theta,\epsilon)$ lies in each half-space $\{\theta:A_l\theta\leq c_l\}$, so $\theta$ is at distance at least $\epsilon$ from each bounding hyperplane and $c_l-A_l\theta\geq\epsilon\|A_l\|$ for every $l$ and the first part of \eqref{eq:deg} holds with $C=\min_l\|A_l\|>0$. For the second part, let $B(\theta_c,r)\subset\Theta_I$. For $\epsilon\leq r$, convexity gives $(1-\epsilon/r)\Theta_I+(\epsilon/r)\theta_c\subset\Theta_I^{-\epsilon}$, so $d_H(\Theta_I^{-\epsilon},\Theta_I)\leq M\epsilon$ with $M=\mathrm{diam}(\Theta_I)/r$. \Cref{lem:degeneracy} then applies with $\delta=r$.

\subsubsection{Moment systems that are not affine}
\label{sec:nonaffine}

\Cref{lem:lipschitz} obtains \Cref{ass:c4} and the uniform approximability \eqref{eq:c5unif} for moments whose mean is Lipschitz in $\theta$. This covers the affine systems of \Cref{sec:affine} and moments that are not continuous in $\theta$, such as \eqref{eq:iqrmoments}. Steps~1 and~2 of the proof give a self-contained version of the limit theory that \citet[Section~4]{CHT} develop for moment-condition models. Write $m(\theta):=\mathrm{E}\,g(Z;\theta,\xi_{0})$, $\mathbb{G}_{N}(\theta):=\mathbb{G}_{N}\,g(Z_{i};\theta,\xi_{0})$, and, for $z\in\ell^{\infty}(\Theta)^{L}$,
\begin{equation*}
\psi(z):=\sup_{\theta\in\Theta_{I}}\big\|I(\theta)+z(\theta)\big\|_{+},
\end{equation*}
so that $\mathcal{T}=\psi(\mathbb{G})^{2}$ and $N\,Q_{N}(\theta,\xi_{0})=\|\sqrt{N}\,m(\theta)+\mathbb{G}_{N}(\theta)\|_{+}^{2}$.

\begin{lemma}[Limit law for Lipschitz moments]
\label{lem:lipschitz}
Suppose \Cref{ass:partid,ass:donsker} hold, $m$ is Lipschitz on $\Theta$ with constant $L_{m}$ and let $b = b_N \rightarrow \infty$ satisfy $b(1 \vee c_N) / N \rightarrow 0$. Then \Cref{ass:c4} and \eqref{eq:c5unif} hold, and the cdf of $\mathcal{T}$ is strictly increasing on $(0,\infty)$ whenever $\Pr(\mathcal{T}>0)>0$.
\end{lemma}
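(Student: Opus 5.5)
The plan is to prove the three claims in turn. First, the limit law $\mathcal{C}_{N}\Rightarrow\mathcal{T}$ via an extended continuous mapping argument. Second, the uniform approximability \eqref{eq:c5unif} via a Lipschitz/equicontinuity sandwich. Third, the properties of the cdf of $\mathcal{T}=\psi(\mathbb{G})^{2}$ via convexity of $\psi$ and Gaussian concentration theory. Since $m$ is Lipschitz and $\Theta$ is compact, $\Theta_{I}=\{\theta:\|m(\theta)\|_{+}=0\}$ is closed, hence compact. Two facts are used throughout: $v\mapsto\|v\|_{+}$ is coordinatewise non-decreasing, and it is $1$-Lipschitz.

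\emph{Step 1 (limit law).} Define $\psi_{N}(z):=\sup_{\theta\in\Theta_{I}}\|\sqrt{N}m(\theta)+z(\theta)\|_{+}$, so that $\mathcal{C}_{N}=\psi_{N}(\mathbb{G}_{N})^{2}$. I will show that $\psi_{N}(z_{N})\to\psi(z)$ whenever $z_{N}\to z$ in $\ell^{\infty}(\Theta)^{L}$ with $z$ continuous. The result then follows from the extended continuous mapping theorem (van der Vaart and Wellner, Theorem~1.11.1) together with \Cref{ass:donsker}.

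For the lower bound, note that on $\Theta_{I}$ we have $\sqrt{N}m\geq I$ coordinatewise. Monotonicity and the Lipschitz property then give $\psi_{N}(z_{N})\geq\psi(z_{N})\geq\psi(z)-\|z_{N}-z\|_{\infty}$.

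For the upper bound, take near-maximizers $\theta_{N}\in\Theta_{I}$ and pass to a subsequence with $\theta_{N}\to\theta^{*}\in\Theta_{I}$. Consider each coordinate $l$ separately.
\begin{itemize}
\item If $m_{l}(\theta^{*})<0$, continuity of $m$ gives $\sqrt{N}m_{l}(\theta_{N})\to-\infty$, so the coordinate drops out.
\item If $m_{l}(\theta^{*})=0$, the coordinate is bounded by $z_{N,l}(\theta_{N})_{+}$, which tends to $z_{l}(\theta^{*})_{+}$ by uniform convergence and continuity of $z$.
\end{itemize}
Hence $\limsup\psi_{N}(z_{N})\leq\|I(\theta^{*})+z(\theta^{*})\|_{+}\leq\psi(z)$. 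The same argument at sample size $b$ gives $T_{b}\Rightarrow\mathcal{T}$.

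\emph{Step 2 (\eqref{eq:c5unif}).} Set $\rho_{N}:=C_{0}\epsilon_{N}$ and
\begin{equation*}
\Delta_{b}:=\sup_{\|\theta-\theta'\|\leq\rho_{N}}\|\mathbb{G}_{b}(\theta)-\mathbb{G}_{b}(\theta')\|+L_{m}\sqrt{b}\,\rho_{N},
\end{equation*}
computed from the block data. This is the same function for every block. Now fix $S\in\mathcal{S}_{N}(C_{0})$.
\begin{itemize}
\item Upper bound: every $\theta\in S$ has some $\theta'\in\Theta_{I}$ with $\|\theta-\theta'\|\leq\rho_{N}$. Then $\sqrt{b}m(\theta)+\mathbb{G}_{b}(\theta)\leq\sqrt{b}m(\theta')+\mathbb{G}_{b}(\theta')+(\text{vector of norm}\leq\Delta_{b})$, so $\sup_{S}\sqrt{bQ_{j,b}}\leq\sqrt{T_{b}}+\Delta_{b}$.
\item Lower bound: symmetrically, every $\theta'\in\Theta_{I}$ has a point of $S$ within $\rho_{N}$, so $\sup_{S}\sqrt{bQ_{j,b}}\geq\sqrt{T_{b}}-\Delta_{b}$.
\end{itemize}
The deterministic term of $\Delta_{b}$ vanishes because $b(1\vee c_{N})/N\to0$. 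The stochastic term vanishes in probability by asymptotic equicontinuity of $\mathbb{G}_{b}$ at the shrinking radius $\rho_{N}\to0$. This needs equicontinuity in the Euclidean metric, whereas the Donsker property delivers it in some semimetric $\varrho$ for which $\Theta$ is totally bounded. I would bridge this by taking $\varrho$ to be the standard-deviation semimetric of $\mathbb{G}$. Almost-sure continuity of $\mathbb{G}$ on compact $\Theta$ then makes $\varrho$ uniformly continuous with respect to $\|\cdot\|$, so Euclidean balls of radius $\rho_{N}$ have vanishing $\varrho$-diameter. This is the first delicate point.

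\emph{Step 3 (cdf of $\mathcal{T}$).} For each $\theta$, the map $z\mapsto\|I(\theta)+z(\theta)\|_{+}$ is a Euclidean norm of the positive parts of the binding coordinates. It is therefore convex and positively homogeneous, and so is $\psi$ as a supremum of such maps. The plan is to invoke the result on distributions of convex functionals of Gaussian measures (Tsirelson; Davydov, Lifshits and Smorodina). That result says the law of a continuous convex functional of a Gaussian element is absolutely continuous on $(r_{0},\infty)$, where $r_{0}$ is the left endpoint of its support, with a strictly positive density there.

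Since $\mathbb{G}$ is centred, $0$ lies in the support of its law. Combined with $\psi(0)=0$ and $\psi\geq0$, this gives $r_{0}=0$, which yields continuity on $(0,\infty)$. Next, the support of $\psi(\mathbb{G})$ is an interval, because it is the image of a convex support under a continuous map. If $\Pr(\mathcal{T}>0)>0$, some $z$ in the support has $\psi(z)>0$, and by homogeneity $tz$ lies in the support for all $t>0$. The support is therefore $[0,\infty)$, and the positive density gives strict monotonicity on $(0,\infty)$.

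I expect this step to be the main obstacle. It requires applying the Gaussian convex-functional theorem carefully: $\mathbb{G}$ must be viewed as a Radon Gaussian element of the separable space $C(\Theta)^{L}$, and I must check that $\psi$ is lower semicontinuous there despite the $-\infty$ entries in $I$.
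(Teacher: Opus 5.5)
Your proposal is correct. Its Step~2 is essentially the paper's Step~3; Steps~1 and~3 package the argument differently.

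\emph{Step~1.} The paper avoids $N$-dependent maps. It sandwiches $\psi(\mathbb{G}_N)^2\le\mathcal{C}_N\le\psi_\eta(\mathbb{G}_N)^2$, the upper bound holding on the event $\{\sup_\theta\|\mathbb{G}_N(\theta)\|<\sqrt{N}\,\eta\}$, where $\psi_\eta$ treats every coordinate with $m_l(\theta)>-\eta$ as binding. It then applies the ordinary continuous mapping theorem to the fixed Lipschitz maps $\psi$ and $\psi_\eta$. It closes with the portmanteau theorem and the monotone limit $\psi_\eta(z)\downarrow\psi(z)$ for continuous $z$. Your extended-continuous-mapping route uses the same near-maximizer compactness argument: the paper uses it to prove the monotone limit, you use it for the $\limsup$ bound. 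Your route delivers $\psi_N(\mathbb{G}_N)\Rightarrow\psi(\mathbb{G})$ in one stroke. The price is verifying the hypotheses of the extended theorem (the subsequence formulation and a separable limit concentrated on $C(\Theta)^L$), which the paper's fixed-map sandwich does not need.

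\emph{Step~3.} The paper writes $\psi(\mathbb{G})=\sup_{(\theta,u)\in\mathcal{U}}u'\mathbb{G}(\theta)$ over a compact index set and applies Tsirelson's theorem for Gaussian suprema. You invoke the convex-functional version, which has the same content. Two of your concerns are unnecessary:
\begin{itemize}
\item Lower semicontinuity: you already observed that $\psi$ is $1$-Lipschitz on $\ell^\infty(\Theta)^L$, since coordinates with $I_l=-\infty$ simply contribute zero.
\item Positive density: once $tz_0$ lies in the support for all $t\ge 0$, every interval $(x,x+h)\subset(0,\infty)$ is hit by $\psi(tz_0)^2=t^2\psi(z_0)^2$. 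Its preimage under $\psi^2$ is therefore an open set meeting the support, so it has positive mass. This is exactly how the paper concludes.
\end{itemize}

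\emph{Step~2.} For Euclidean equicontinuity, the paper cites Theorem~1.5.7 of van der Vaart and Wellner directly, using that the limit paths are uniformly continuous in the Euclidean metric on compact $\Theta$. Your bridge through the standard-deviation semimetric is an equally valid way to reach the same point.

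One small repair is needed in Step~2. Sets $S\in\mathcal{S}_N(C_0)$ need not be closed; the uniform form \eqref{eq:c5unif} exists precisely for the discontinuous interquantile moments. So in your lower-bound direction, a point $\theta'\in\Theta_I$ is only guaranteed points of $S$ within $\rho_N+\eta$ for each $\eta>0$, not within $\rho_N$. Moreover, the modulus $\sup_{\|\theta-\theta'\|\le r}\|\mathbb{G}_b(\theta)-\mathbb{G}_b(\theta')\|$ of a discontinuous empirical process need not be right-continuous in $r$. Define $\Delta_b$ with radius $2\rho_N$, as the paper does. This removes the issue and leaves $\Delta_b=o_P(1)$ unchanged.
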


\begin{proof}
The proof uses three properties of $v\mapsto\|v\|_{+}$: it is non-decreasing in each coordinate, it is $1$-Lipschitz, and $\|v\|_{+}=\sup\{u'v:\ u\geq 0,\ \|u\|\leq 1\}$. By the first two, $|\psi(z)-\psi(z')|\leq\sup_{\theta\in\Theta}\|z(\theta)-z'(\theta)\|$, so $\psi$ is continuous on $\ell^{\infty}(\Theta)^{L}$. The set $\Theta_{I}$ is non-empty by \Cref{ass:partid} and compact because $m$ is continuous.

\medskip
\noindent\emph{Step 1: convergence of $\mathcal{C}_{N}$.} For $\theta\in\Theta_{I}$, the coordinates with $I_{l}(\theta)=0$ have $\sqrt{N}\,m_{l}(\theta)=0$ and the others have a non-negative positive part, so monotonicity gives $\mathcal{C}_{N}\geq\psi(\mathbb{G}_{N})^{2}$. For $\eta>0$, let $\psi_{\eta}$ be defined as $\psi$ with $I_{l}(\theta)$ replaced by $0$ if $m_{l}(\theta)>-\eta$ and by $-\infty$ otherwise. On the event $\{\sup_{\theta\in\Theta}\|\mathbb{G}_{N}(\theta)\|<\sqrt{N}\eta\}$, whose probability tends to one, every coordinate with $m_{l}(\theta)\leq-\eta$ has $\sqrt{N}\,m_{l}(\theta)+\mathbb{G}_{N,l}(\theta)<0$, and every other coordinate has $\sqrt{N}\,m_{l}(\theta)\leq 0$ on $\Theta_{I}$, so $\mathcal{C}_{N}\leq\psi_{\eta}(\mathbb{G}_{N})^{2}$ on that event. By \Cref{ass:donsker} and the continuous mapping theorem, $\psi(\mathbb{G}_{N})\Rightarrow\psi(\mathbb{G})$ and $\psi_{\eta}(\mathbb{G}_{N})\Rightarrow\psi_{\eta}(\mathbb{G})$.

As $\eta\downarrow 0$, $\psi_{\eta}(z)$ decreases to $\psi(z)$ for every $z$ with continuous paths. To see this, take $\eta_{n}\downarrow 0$ and $\theta_{n}\in\Theta_{I}$ that attain $\psi_{\eta_{n}}(z)$ up to $1/n$. Along a subsequence, $\theta_{n}\to\theta^{\ast}\in\Theta_{I}$ and the set $\mathcal{L}$ of coordinates with $m_{l}(\theta_{n})>-\eta_{n}$ does not depend on $n$. For $l\in\mathcal{L}$, continuity of $m$ gives $m_{l}(\theta^{\ast})\geq 0$, hence $m_{l}(\theta^{\ast})=0$ and $I_{l}(\theta^{\ast})=0$, and continuity of $z$ gives $\lim_{n}\psi_{\eta_{n}}(z)\leq\|z_{\mathcal{L}}(\theta^{\ast})\|_{+}\leq\psi(z)$, where $z_{\mathcal{L}}$ keeps the coordinates in $\mathcal{L}$. Since $\mathbb{G}$ has continuous paths, $\Pr(\psi_{\eta}(\mathbb{G})^{2}\geq x)\downarrow\Pr(\mathcal{T}\geq x)$ as $\eta\downarrow 0$. At a continuity point $x$ of the cdf of $\mathcal{T}$, the portmanteau theorem and the two bounds give
\begin{equation*}
\Pr(\mathcal{T}>x)\;\leq\;\liminf_{N}\Pr(\mathcal{C}_{N}>x)\;\leq\;\limsup_{N}\Pr(\mathcal{C}_{N}>x)\;\leq\;\Pr\big(\psi_{\eta}(\mathbb{G})^{2}\geq x\big)
\end{equation*}
for every $\eta>0$, and letting $\eta\downarrow 0$ gives $\mathcal{C}_{N}\Rightarrow\mathcal{T}$.

\medskip
\noindent\emph{Step 2: the cdf of $\mathcal{T}$.} By the third property, $\psi(\mathbb{G})=\sup\{u'\mathbb{G}(\theta):(\theta,u)\in\mathcal{U}\}$, where $\mathcal{U}$ collects the pairs with $\theta\in\Theta_{I}$, $u\geq 0$, $\|u\|\leq 1$ and $u_{l}=0$ whenever $I_{l}(\theta)=-\infty$. The set $\mathcal{U}$ is compact because $m$ is continuous, so $\psi(\mathbb{G})$ is the supremum of an almost surely bounded, separable, centred Gaussian process, and it is non-negative because $u=0$ is allowed. The distribution of such a supremum has no atom except possibly at the left end of its support \citep{Tsirelson1975}. Since $\mathbb{G}$ has continuous paths on the compact set $\Theta$, its law is a centred Gaussian measure on the separable space $C(\Theta)^{L}$, whose support is a closed linear subspace and therefore contains the origin. The left end of the support of $\psi(\mathbb{G})$ is therefore $0$, because $\psi$ is continuous with $\psi(0)=0$, and the cdf of $\mathcal{T}$ is continuous on $(0,\infty)$. For strict monotonicity, suppose $\Pr(\mathcal{T}>0)>0$. The support of the law of $\mathbb{G}$ then contains some $z_{0}$ with $\psi(z_{0})>0$, and it contains $tz_{0}$ for every $t\geq 0$ because it is a linear space. Since $\psi(tz_{0})=t\,\psi(z_{0})$, for all $x,h>0$ the open set $\{z:x<\psi(z)^{2}<x+h\}$ contains a point of the support and therefore has positive probability.

\medskip
\noindent\emph{Step 3: uniform approximability.} The observations of a block are i.i.d., so Step~1 at sample size $b$ gives $T_{b}\Rightarrow\mathcal{T}$. Let $\mathbb{G}_{b}$ be the empirical process of the block and $\omega_{b}(r):=\sup_{\|\theta-\theta'\|\leq r}\|\mathbb{G}_{b}(\theta)-\mathbb{G}_{b}(\theta')\|$. Because $\Theta$ is compact and the paths of $\mathbb{G}$ are continuous, hence uniformly continuous, \Cref{ass:donsker} makes $\mathbb{G}_{b}$ asymptotically uniformly equicontinuous in the Euclidean metric, so $\omega_{b}(r_{b})\to 0$ in probability for every $r_{b}\to 0$ \citep[Theorem~1.5.7]{VaartWellner}. Let $S\in\mathcal{S}_{N}(C_{0})$. Every $\theta\in S$ has a $\theta'\in\Theta_{I}$ with $\|\theta-\theta'\|\leq 2C_{0}\epsilon_{N}$, and the first two properties of $\|\cdot\|_{+}$ give
\begin{equation*}
\sqrt{b\,Q_{j,b}(\theta,\xi_{0})}\;\leq\;\sqrt{b\,Q_{j,b}(\theta',\xi_{0})}+2\sqrt{b}\,L_{m}C_{0}\epsilon_{N}+\omega_{b}(2C_{0}\epsilon_{N}).
\end{equation*}
The same bound holds with the roles of $S$ and $\Theta_{I}$ exchanged. Hence \eqref{eq:c5unif} holds with $\Delta_{b}:=2\sqrt{b}\,L_{m}C_{0}\epsilon_{N}+\omega_{b}(2C_{0}\epsilon_{N})$, the same function of the data of each block, and $\Delta_{b}=o_{P}(1)$ because $b\,\epsilon_{N}^{2}=b(1\vee c_{N})/N\to 0$ and $\omega_{b}(2C_{0}\epsilon_{N})=o_{P}(1)$.
\end{proof}

\subsubsection{Proof of \Cref{cor:ex1}}
\label{sec:verif-grid}

In this example, $A$ collects the $2S$ coordinate moment functions \eqref{eq:coordbounds} and the $S-1$ increment moment functions \eqref{eq:increment}, and $\theta\mapsto\mathrm{E}\,g(Z; \theta, \xi)$ is affine with $B(Z)$ constant. \Cref{lem:hoffman} and the preceding paragraphs give \Cref{ass:partid}, \ref{ass:donsker}, \ref{ass:dominance} and \Cref{ass:concentration}(b). Let
\begin{equation*}
\begin{aligned}
\Xi_N := \Big\{\xi=\big(p,F^1(w_1\mid\cdot),\ldots,F^1(w_S\mid\cdot)\big):\ & p\geq\underline p,\ \|p-p_0\|_{P,2}\leq g_N^{p},\\
& \max_{j\leq S}\|F^1(w_j\mid\cdot)-F_0^1(w_j\mid\cdot)\|_{P,2}\leq g_N^{F}\Big\}.
\end{aligned}
\end{equation*}
Then \Cref{ass:smallbiasvalue}(a) holds since $\xi_0\in\Xi_N$, (b) holds with $\phi_N\to0$ by \Cref{ass:ex1ovl,ass:ex1rates}, and (c) holds with $g_N:=g_N^{p}+\sqrt{S}\,g_N^{F}=o(1)$.

\paragraph{Orthogonality and the second-order remainder.}
Only the $S$ lower-bound moment functions involve the nuisance $\xi=\big(p, F ^1(w_1\mid \cdot),\ldots,F ^1(w_S\mid \cdot)\big)$. Fix a grid point $w$ and write $\xi_0=(p_0,F_0 ^1 (w \mid \cdot))$ for the true nuisance, and define deviations $\delta_p (x):= p(x) - p_0(x)$ and $\delta_F(x) := F^1(w \mid x)-F^1_0(w \mid x)$. Because $\mathrm{E}[D\mid X]=p_0(X)$ and $\mathrm{E}[D\mathbf 1\{W\leq w\}\mid X]=p_0(X)F_0 ^1(w\mid X)$,
\begin{equation*}
m(\xi) := \mathrm{E}\,\phi_L(Z;w,\xi) = \mathrm{E}\left[F^1 (w \mid X) + \frac{p_0(X)}{p(X)}\big(F_0 ^1 (w \mid X)-F^1(w \mid X)\big)\right].
\end{equation*}
Substituting $\xi_r=\xi_0+r(\xi-\xi_0)$ and rearranging gives
\begin{equation}\label{eq:exactremainder}
m(\xi_r) - m(\xi_0) \;=\; r^{2}\,\mathrm{E}\!\left[\frac{\delta_F(X)\,\delta_p(X)}{p_0(X)+r\,\delta_p(X)}\right] \qquad\text{for every } r .
\end{equation}
\Cref{ass:smallbiasvalue}(d) follows by differentiating \eqref{eq:exactremainder} at $r=0$. \Cref{ass:smallbiasvalue}(e) follows from \eqref{eq:exactremainder} at $r=1$: since $p_0\geq\underline p$ and $p\geq\underline p$ on $\Xi_N$ by \Cref{ass:ex1ovl}, the denominator satisfies $p_0+r\delta_p\geq\underline p$, and Cauchy--Schwarz gives
\begin{equation}\label{eq:sbar-explicit}
|m(\xi)-m(\xi_0)| \;\leq\; \underline p^{-1}\,\|\delta_p(X)\|_{P,2}\,\|\delta_F(X)\|_{P,2},
\end{equation}
so $\bar s_N$ can be taken as the product of the two first-stage rates,
\begin{equation}\label{eq:ml_rate}
    \bar{s}_{N} \;=\; \underline p^{-1}\,g_N^{p}\,g_N^{F} \;=\; o(N^{-1/2}).
\end{equation}

\paragraph{Envelope and continuity in the nuisance.}
Under the overlap condition, $|\phi_L|\leq 1+\underline p^{-1}$ and $|\phi_U|\leq 1$, so the envelope of \Cref{ass:concentration}(a) may be taken constant with a finite $c$-norm for every $c$. For \Cref{ass:concentration}(c), $\phi_L$ is Lipschitz in $\xi$, and $|\phi_L(\xi)-\phi_L(\xi_0)|\leq (1+\underline p^{-1})|\delta_F(X)| + \underline p^{-2}|\delta_p(X)|$, so $r_N'\leq (1+\underline p^{-1}+\underline p^{-2})\,g_N$. Then $r_N'\sqrt{\log N}=o(1)$ whenever $g_N=o((\log N)^{-1/2})$, which is attained by any polynomial rate. 

Therefore, all the conditions of \Cref{sec:conditions} hold, and \Cref{thm:main:ineq}, \Cref{thm:main:subs} and \Cref{lem:degeneracy} deliver conclusions (1), (2) and (3) of \Cref{cor:ex1}.

\subsubsection{Proof of \Cref{cor:iqr}}
\label{sec:verif-iqr}

For $\theta\in\Theta$ the population moments are
\begin{equation*}
\begin{aligned}
m(\theta)=\Big(&F_0^{\mathrm{lo}}(\theta_1)-\alpha_1,\ F_0^{\mathrm{lo}}(\theta_2)-\alpha_2,\ \alpha_1-F_0^{\mathrm{up}}(\theta_1),\\
&\alpha_2-F_0^{\mathrm{up}}(\theta_2),\ F_0^{\mathrm{up}}(\theta_2)-F_0^{\mathrm{up}}(\theta_1)-(\alpha_2-\alpha_1)\Big)'.
\end{aligned}
\end{equation*}
The densities of $F_0^{\mathrm{lo}}$ and $F_0^{\mathrm{up}}$ are $\mathrm{E}[f_0^1(w\mid X)]$ and $\mathrm{E}[p_0(X)f_0^1(w\mid X)]$. By \Cref{ass:iqrid}(2) both are at most $\bar f$, so $m$ is Lipschitz with constant $L_m\leq 3\bar f$. By \Cref{ass:iqrid}(3), and since $p_0\leq 1$, both are at least $\underline f$ almost everywhere on $\mathcal{J}$, so $|F(t)-F(t')|\geq\underline f|t-t'|$ for $t,t'\in\mathcal{J}$ and $F\in\{F_0^{\mathrm{lo}},F_0^{\mathrm{up}}\}$. The set $\Theta_I$ contains $\theta_0$ by the derivation of \eqref{eq:iqrset}.

\paragraph{Identifiability and majorant.}
\Cref{ass:dominance} holds with $C_{\max}=L_m$, by the argument in the proof of \Cref{lem:hoffman} with $A\theta-c$ replaced by $m(\theta)$. For \Cref{ass:partid}, let $\Phi(\theta):=\big(F_0^{\mathrm{up}}(\theta_1),F_0^{\mathrm{up}}(\theta_2)\big)$. The density bounds make $\Phi$ one-to-one from $\mathcal{J}^2$ onto $\Phi(\mathcal{J})^2$, with $\|\Phi(\theta)-\Phi(\theta')\|\leq\bar f\|\theta-\theta'\|$ on $\Theta$ and $\|\theta-\theta'\|\leq\underline f^{-1}\|\Phi(\theta)-\Phi(\theta')\|$ on $\mathcal{J}^2$. With $\bar s_j:=F_0^{\mathrm{up}}(\bar\theta_j)$, consider the polygon
\begin{equation*}
\mathcal{R}:=\big\{s\in\mathbb{R}^2:\ \alpha_j\leq s_j\leq\bar s_j,\ j=1,2,\quad s_2-s_1\leq\alpha_2-\alpha_1,\quad s_1\leq s_2\big\}=\{s:\ As\leq c\}.
\end{equation*}
Every $\theta\in\Theta_I$ satisfies $\underline\theta_j\leq\theta_j\leq\bar\theta_j$, so $\Theta_I\subset\mathcal{J}^2$. Since $F_0^{\mathrm{lo}}$ and $F_0^{\mathrm{up}}$ are strictly increasing on $\mathcal{J}$, it follows that $\mathcal{R}\subset\Phi(\mathcal{J})^2$ and $\Theta_I=\{\theta\in\mathcal{J}^2:\Phi(\theta)\in\mathcal{R}\}$.

Fix $\theta\in\Theta$ with $d(\theta,\Theta_I)\leq\bar\delta$, so that $\theta\in\mathcal{J}^2$, and let $s:=\Phi(\theta)$. The positive parts of the rows $s_j\geq\alpha_j$ and $s_2-s_1\leq\alpha_2-\alpha_1$ are those of the third to fifth coordinates of $m(\theta)$, and the row $s_1\leq s_2$ holds because $\theta_1\leq\theta_2$. The density bounds on $\mathcal{J}$ give $(s_j-\bar s_j)_+\leq\bar f(\theta_j-\bar\theta_j)_+\leq(\bar f/\underline f)\big(F_0^{\mathrm{lo}}(\theta_j)-\alpha_j\big)_+$. Hence $\|(As-c)_+\|\leq(\bar f/\underline f)\|m(\theta)\|_+$. Hoffman's bound gives $s^\ast\in\mathcal{R}$ with $\|s-s^\ast\|\leq H(A)\|(As-c)_+\|$, and $\theta^\ast:=\Phi^{-1}(s^\ast)\in\Theta_I$ satisfies $\|\theta-\theta^\ast\|\leq\underline f^{-1}\|s-s^\ast\|$, so $d(\theta,\Theta_I)\leq H(A)\,\bar f\,\underline f^{-2}\,\|m(\theta)\|_+$. On the compact set $\{\theta\in\Theta:d(\theta,\Theta_I)\geq\bar\delta\}$ the continuous function $\theta\mapsto\|m(\theta)\|_+$ is positive and, if the set is non-empty, attains a minimum $\kappa_{\bar\delta}>0$; otherwise set $\kappa_{\bar\delta}:=\bar\delta$. \Cref{ass:partid} therefore holds with $\delta_{\min}=\bar\delta$ and $C_{\min}=\min\{\underline f^2/(H(A)\bar f),\,\kappa_{\bar\delta}/\bar\delta\}$.

\paragraph{Donsker property and entropy.}
For $\xi=(p,F^1)$ with $F^1(\cdot\mid x)$ non-decreasing, the subgraphs of the class $\{x\mapsto F^1(t\mid x):t\in[\underline w,\bar w]\}$ are totally ordered by inclusion, so the class is VC-subgraph with index $2$. The same holds for $\{D\,\mathbf 1\{W\leq t\}:t\in\mathbb{R}\}$, which depends on $W$ only through $DW$, and $\{D\,\mathbf 1\{t_1<W\leq t_2\}:t_1\leq t_2\}$ is VC-subgraph with index at most $3$. Each coordinate of $g(\cdot;\theta,\xi)$ is a fixed bounded function plus at most two such functions, each multiplied by one of the fixed functions $1-D/p(X)$, $1/p(X)$ and $1$, which are bounded by $1+\underline p^{-1}$. Multiplication by a fixed function preserves the VC-subgraph property \citep[Lemma~2.6.18]{VaartWellner}, and the uniform covering number of a sum of two classes is at most the product of theirs, so \Cref{ass:concentration}(b) holds with constants that depend only on $\underline p$, and the envelope is the constant $2+\underline p^{-1}$. A class with a bounded envelope and a uniform entropy bound of the form \eqref{eq:entropy} is $P$-Donsker \citep[Theorem~2.5.2]{VaartWellner}. By \Cref{ass:iqrid}(2), $\mathrm{E}|F_0^1(t\mid X)-F_0^1(t'\mid X)|\leq\bar f|t-t'|$ and $\Pr(D=1,\ t<W\leq t')\leq\bar f|t-t'|$ for $t<t'$, so $\mathrm{E}\|g(Z;\theta,\xi_0)-g(Z;\theta',\xi_0)\|^2\leq C\|\theta-\theta'\|$ with $C$ depending on $(\underline p,\bar f)$. The Gaussian limit has paths that are uniformly continuous in this $L^2$ semimetric \citep[Section~1.5]{VaartWellner}, and hence continuous in $\theta$, which completes \Cref{ass:donsker}.

\paragraph{First stage.}
Measure the nuisance error by $\|p-p_0\|_{P,2}+\sup_{w}\|F^1(w\mid\cdot)-F_0^1(w\mid\cdot)\|_{P,2}$, with the supremum over $[\underline w,\bar w]$, and let
\begin{equation*}
\begin{aligned}
\Xi_N := \Big\{\xi=(p,F^1):\ & p\geq\underline p,\ F^1(\cdot\mid x)\ \text{non-decreasing with values in }[0,1]\ \forall x,\\
& \|p-p_0\|_{P,2}\leq g_N^{p},\ \sup_{w}\|F^1(w\mid\cdot)-F_0^1(w\mid\cdot)\|_{P,2}\leq g_N^{F}\Big\}.
\end{aligned}
\end{equation*}
Then \Cref{ass:smallbiasvalue}(a)--(c) hold as in the proof of \Cref{cor:ex1}, now by \Cref{ass:ex1ovl,ass:iqrrates}, with $g_N:=g_N^{p}+g_N^{F}$. Only the two lower-bound moments involve $\xi$, and they evaluate $\phi_L$ at $w=\theta_j$. The identity \eqref{eq:exactremainder} holds at every threshold $w$, so \Cref{ass:smallbiasvalue}(d) holds at every $\theta$, and \eqref{eq:sbar-explicit} at $w=\theta_j$, with $\|F^1(w\mid\cdot)-F_0^1(w\mid\cdot)\|_{P,2}\leq g_N^{F}$ for every $w$, gives \Cref{ass:smallbiasvalue}(e) with $\bar s_N\leq\sqrt2\,\underline p^{-1}g_N^{p}g_N^{F}=o(N^{-1/2})$, uniformly in $\theta\in\Theta$. The envelope and continuity bounds for $\phi_L$ in the proof of \Cref{cor:ex1} hold at every threshold, so \Cref{ass:concentration}(a) and (c) hold with $r_N'\leq\sqrt2\,(1+\underline p^{-1}+\underline p^{-2})\,g_N$.

\paragraph{Limit law and degeneracy.}
\Cref{lem:lipschitz} gives \Cref{ass:c4} and \eqref{eq:c5unif}, which replaces the continuity hypothesis and \Cref{ass:c5} in \Cref{thm:main:subs}. For conclusion~(3), suppose $\Theta_I$ has non-empty interior; then so does $\mathcal{R}$, which contains a ball $B(s_c,r)$. Let $\delta':=\min\{\bar\delta,r/\bar f\}$, let $0<\epsilon\leq\delta'$ and let $B(\theta,\epsilon)\subseteq\Theta_I$. The points $\theta\pm\epsilon e_j$ and $\theta+(\epsilon/\sqrt2)(-1,1)'$ lie in $\Theta_I$, so $\theta_j+\epsilon\leq\bar\theta_j$, $\theta_j-\epsilon\geq\underline\theta_j$ and $F_0^{\mathrm{up}}(\theta_2+\epsilon/\sqrt2)-F_0^{\mathrm{up}}(\theta_1-\epsilon/\sqrt2)\leq\alpha_2-\alpha_1$, and the density bounds on $\mathcal{J}$ give $m_l(\theta)\leq-\underline f\epsilon$ for every $l$. For the second part of \eqref{eq:deg}, let $\epsilon':=\bar f\epsilon\leq r$. If $B(s',\epsilon')\subseteq\mathcal{R}$, then $\theta':=\Phi^{-1}(s')\in\Theta_I$ satisfies $B(\theta',\epsilon)\subseteq\mathcal{J}^2$ and $\Phi(B(\theta',\epsilon))\subseteq B(s',\epsilon')$, hence $B(\theta',\epsilon)\subseteq\Theta_I$; in particular $\Phi^{-1}(s_c)\in\Theta_I^{-\epsilon}$. The convexity argument of \Cref{sec:affine} places every point of $\mathcal{R}$ within $\epsilon'\,\mathrm{diam}(\mathcal{R})/r$ of such an $s'$, and pulling back with $\Phi^{-1}$ gives $d_H(\Theta_I^{-\epsilon},\Theta_I)\leq\bar f\,\mathrm{diam}(\mathcal{R})\,\epsilon/(\underline f\,r)$. \Cref{lem:degeneracy} then applies with $\delta=\delta'$, $C=\underline f$ and $M=\bar f\,\mathrm{diam}(\mathcal{R})/(\underline f\,r)$.

All conditions of \Cref{sec:conditions} therefore hold, with \eqref{eq:c5unif} in place of continuity in $\theta$ and \Cref{ass:c5}, and \Cref{thm:main:ineq}, \Cref{thm:main:subs} and \Cref{lem:degeneracy} deliver conclusions (1), (2) and (3) of \Cref{cor:iqr}.

\subsubsection{Proof of \Cref{cor:ex2}}
\label{sec:verif-emp}

Write $V_0:=V(\eta_0)$ and $\bar\theta:=\sup_{\theta\in\Theta}\|\theta\|$. At the true nuisance, $\mathrm{E}\,g_l(Z;\theta,\xi_0)=q_l'\Sigma_0\theta-c_l$, where $c_l$ is the right-hand side of \eqref{eq:ex2finite_set}, so the system is affine with the $L\times d$ matrix $A$ whose rows are $q_l'\Sigma_0$. Since $\Sigma_0$ is non-singular by \Cref{ass:ex2id}(1) and the symmetric grid spans $\mathbb{R}^d$ by \Cref{ass:ex2id}(2), the rows $q_l'\Sigma_0$ positively span $\mathbb{R}^d$, so the polyhedron $\{\theta:A\theta\leq c\}$ is bounded. It is non-empty, since it contains $\theta_0 \in \Theta_I^{\ast}$, and $\Theta$ contains it by \Cref{ass:ex2id}(3), so \Cref{lem:hoffman} applies. For each fixed $\xi$, $g(Z;\theta,\xi)$ is affine in $\theta$ with bounded coefficients, so \Cref{sec:affine} gives \Cref{ass:partid,ass:donsker,ass:dominance}, \Cref{ass:concentration}(b), \Cref{ass:c4,ass:c5}, and \eqref{eq:deg} whenever $\Theta_I$ has non-empty interior. Let
\begin{equation*}
\begin{aligned}
    \Xi_N := \Big\{\xi=\big(\eta, \ell, \pi_1, \ldots, \pi_L\big):\ & \|\eta(X)\|\leq\bar D,\ |\ell(X)|\leq\bar Y,\ 0\leq\pi_l(X)\leq\bar\Delta\ \text{a.s.},\\
    & \|\eta-\eta_0\|_{P,2}\leq g_{\eta,N},\ \|\ell - \ell_0\|_{P, 2} \leq g_{\ell, N},\\
    & \max_{l\leq L}\|\pi_l - \pi_{0, l}\|_{P,2}\leq g_{\pi, N}\Big\}.
\end{aligned}
\end{equation*}
By \Cref{ass:ex2width}, $\eta_0=\mathrm{E}[D\mid X]$, $\ell_0=\mathrm{E}[Y_L\mid X]$ and $\pi_{0,l}=\mathrm{E}[\Delta\,\mathbf{1}\{q_l'V_0>0\}\mid X]$ satisfy the same bounds, so \Cref{ass:smallbiasvalue}(a) holds; (b) holds by \Cref{ass:ex2rates}; and (c) holds with $g_N := g_{\eta, N} + g_{\ell, N} + \sqrt{L}\, g_{\pi, N}=o(1)$.

\paragraph{Orthogonality and the second-order remainder.}
For $\xi\in\Xi_N$ write $\delta:=\eta-\eta_0$, $\lambda:=\ell-\ell_0$ and $\nu_l:=\pi_l-\pi_{0,l}$, so that $V(\eta)=V_0-\delta(X)$, and let $m_l(\theta,\xi):=\mathrm{E}\,g_l(Z;\theta,\xi)$. Conditioning on $X$ and using $\mathrm{E}[V_0\mid X]=0$ and $\mathrm{E}[Y_L-\ell_0(X)\mid X]=0$, the first two terms of \eqref{eq:emp-moment} contribute $\mathrm{E}[(q_l'\delta)(\delta'\theta)]-\mathrm{E}[(q_l'\delta)\lambda]$ to $m_l(\theta,\xi)-m_l(\theta,\xi_0)$. For the bracketed term, $\mathrm{E}[(q_l'V(\eta))\pi_l]=-\mathrm{E}[(q_l'\delta)\pi_{0,l}]-\mathrm{E}[(q_l'\delta)\nu_l]$, and the definition of $\mathcal{K}_l$ together with $\mathrm{E}[\Delta\,\mathbf{1}\{q_l'V_0>0\}\mid X]=\pi_{0,l}(X)$ gives $\mathrm{E}[\Delta(q_l'V(\eta))_+]-\mathrm{E}[\Delta(q_l'V_0)_+]=\mathcal{K}_l(\eta)-\mathrm{E}[(q_l'\delta)\pi_{0,l}]$. Hence
\begin{equation}\label{eq:exactremainder2}
\begin{aligned}
    m_l(\theta,\xi) - m_l(\theta,\xi_0) &= \mathrm{E}\big[(q_l'\delta(X))(\delta(X)'\theta)\big] - \mathrm{E}\big[(q_l'\delta(X))\lambda(X)\big]\\
    &\quad - \mathrm{E}\big[(q_l'\delta(X))\nu_l(X)\big] - \mathcal{K}_l(\eta).
\end{aligned}
\end{equation}
Along $\xi_r=\xi_0+r(\xi-\xi_0)$ the deviations are $r\delta$, $r\lambda$ and $r\nu_l$, and $\eta_0+r\delta=(1-r)\eta_0+r\eta$ satisfies $\|\eta_0+r\delta\|\leq\bar D$, so \Cref{ass:ex2kink} gives $|\mathcal{K}_l(\eta_0+r\delta)|\leq C_K r^2\|\delta\|_{P,2}^2$. Every term of \eqref{eq:exactremainder2} along the path is therefore $O(r^2)$, and differentiating at $r=0$ gives \Cref{ass:smallbiasvalue}(d). At $r=1$, the Cauchy--Schwarz inequality, $\|q_l\|=1$ and \Cref{ass:ex2kink} give
\begin{equation*}
    \sup_{\theta \in \Theta} \big|m_l(\theta,\xi) - m_l(\theta,\xi_0)\big| \;\leq\; (\bar\theta+C_K)\|\delta\|_{P, 2}^2 + \|\delta\|_{P, 2} \|\lambda\|_{P, 2} + \|\delta\|_{P, 2}\|\nu_l\|_{P, 2},
\end{equation*}
so \Cref{ass:smallbiasvalue}(e) holds with $\bar{s}_N \lesssim g_{\eta, N} (g_{\eta, N} + g_{\ell, N} + g_{\pi, N}) = o(N^{-1/2})$ by \Cref{ass:ex2rates}.

\paragraph{Envelope and continuity in the nuisance.}
For $\xi\in\Xi_N\cup\{\xi_0\}$, \Cref{ass:ex2width} and the bounds in $\Xi_N$ give $\|V(\eta)\|\leq 2\bar D$, $|Y_L-\ell(X)|\leq 2\bar Y$, $0\leq\pi_l\leq\bar\Delta$ and $\Delta\leq\bar\Delta$, so every coordinate of $g$ is bounded by a constant that depends only on $(\bar D,\bar Y,\bar\Delta,\bar\theta)$, and \Cref{ass:concentration}(a) holds for every $c>2$. Each of the four terms of \eqref{eq:emp-moment} is Lipschitz in $(\delta(X),\lambda(X),\nu_l(X))$ with such a constant, because $a\mapsto a_+$ is $1$-Lipschitz. Hence
\begin{equation*}
    \sup_{\theta\in\Theta}\big(\mathrm{E}\|g(Z;\theta,\xi)-g(Z;\theta,\xi_0)\|^2\big)^{1/2} \;\lesssim\; \|\delta\|_{P,2}+\|\lambda\|_{P,2}+\max_{l\leq L}\|\nu_l\|_{P,2},
\end{equation*}
so \Cref{ass:concentration}(c) holds with $r'_N \lesssim g_{\eta, N} + g_{\ell, N} + g_{\pi, N}$, and $r_N'\sqrt{\log N}=o(1)$ by \Cref{ass:ex2rates}.

All the conditions of \Cref{sec:conditions} therefore hold, and \Cref{thm:main:ineq}, \Cref{thm:main:subs} and \Cref{lem:degeneracy} deliver conclusions (1), (2) and (3) of \Cref{cor:ex2}.

\bibliographystyle{chicago}
\bibliography{my_new_bibtex}

\end{document}